\newtheorem{theorem}{Theorem}
\newtheorem{corollary}{Corollary}
\newtheorem{definition}{Definition}
\newtheorem{proposition}{Proposition}
\newtheorem{lemma}{Lemma}
\newtheorem{remark}{Remark}
\newcounter{parentalgorithm}
\newenvironment{subalgorithms}{%
  \refstepcounter{algorithm}%
  \protected@edef\theparentalgorithm{\thealgorithm}%
  \setcounter{parentalgorithm}{\value{algorithm}}%
  \setcounter{algorithm}{0}%
  \def\thealgorithm{\theparentalgorithm\alph{algorithm}}%
  \ignorespaces
}{%
  \setcounter{algorithm}{\value{parentalgorithm}}%
  \ignorespacesafterend
}
\newcommand{\ttheta}{\tilde{\theta}}
\newcommand{\Prob}{\mathbb{P}}
\newcommand{\ProbQ}{\mathbb{Q}}
\newcommand{\E}{\mathbb{E}}
\newcommand{\T}{\mathsf{T}}
\newcommand{\ignore}[1]{{}}
\newcommand{{\uh}}{\underline{h}}
\newcommand{{\lh}}{\bar{h}}
\newcommand{\lb}{\left(}
\newcommand{\rb}{\right)}
\begin{document}
\title{Decentralized Sequential Composite Hypothesis Test Based on One-Bit Communication}
\author{Shang Li$^*$, Xiaoou Li$^\dag$, Xiaodong Wang$^*$, Jingchen Liu$^\dag$
\thanks{$^*$S. Li and X. Wang are with Department of Electrical Engineering, Columbia University, New York, NY 10027 (e-mail: \{shang,wangx\}@ee.columbia.edu).
}
\thanks{$^\dag$X. Li and J. Liu are with Department of Statistics, Columbia University, New York, NY 10027 (e-mail: \{xiaoou,jcliu\}@stat.columbia.edu).
}}
\maketitle
\vspace*{-9mm}
\begin{abstract}
This paper considers the sequential composite hypothesis test with multiple sensors. The sensors observe random samples in parallel and communicate with a fusion center, who makes the global decision based on the sensor inputs. On one hand, in the centralized scenario, where local samples are precisely transmitted to the fusion center, the generalized sequential likelihood ratio test (GSPRT) is shown to be asymptotically optimal in terms of the expected stopping time as error rates tend to zero. On the other hand, for systems with limited power and bandwidth resources, decentralized solutions that only send a summary of local samples (we particularly focus on a one-bit communication protocol) to the fusion center is of great importance. To this end, we first consider a decentralized scheme where sensors send their one-bit quantized statistics every fixed period of time to the fusion center. We show that such a uniform sampling and quantization scheme is strictly suboptimal and its suboptimality can be quantified by the KL divergence of the distributions of the quantized statistics under both hypotheses. 
%Moreover, the configuration of such a scheme largely depends on the unknown parameters, which leads to further degradation when real parameters mismatch with the presumed ones. 
We then propose a  decentralized GSPRT based on level-triggered sampling. That is, each sensor runs its own GSPRT repeatedly and reports its local decision to the fusion center asynchronously. We show that this scheme is asymptotically optimal as the local thresholds and global thresholds grow large at different rates. Lastly, two  particular models and their associated applications are studied to compare the centralized and decentralized approaches. Numerical results are provided to demonstrate that the proposed level-triggered sampling based decentralized scheme aligns closely with the centralized scheme with substantially lower communication overhead, and significantly outperforms the uniform sampling and  quantization based decentralized scheme.

\ignore{
\noindent
To-do list:
\begin{itemize}
\item Introduction and Problem statement 65 pages)
\item Centralized GSPRT (3 pages)
\item Uniform-sampling based GSPRT (4 pages)
\item Level-triggered-sampling based GSPRT (8 pages)
\item Examples and numerical results (8 pages)
\item Conlusion 
\end{itemize}
}
%Other extensions:
%\begin{itemize}
%\item Applying LTS to locally uniform most powerful test
%\item Multimodal data fusion
%\end{itemize}
%References:\cite{Wang13},\cite{Wang11}
\end{abstract}
\begin{IEEEkeywords}
Decentralized sequential composite test, level-triggered sampling, stopping time, asymptotic analysis.
\end{IEEEkeywords}

\section{Introduction}
It is well known that the sequential hypothesis test generally requires a smaller expected sample size to achieve the same level of error probabilities compared to its fixed-sample-size counterpart. For instance, for testing on different mean values of Gaussian samples, \cite{Poor_DE} showed that the optimal sequential procedure needs four times less samples on average than the Neyman-Pearson test. Following the seminal work \cite{WaldWolf48} that proved the optimality of the sequential probability ratio test (SPRT) in the context of sequential test, a rich body of works has investigated its variants in various scenarios and applications. Among them, the composite hypothesis test is of significant interest. In particular, \cite{Lorden76} generalized SPRT to 2-SPRT; the sequential composite hypothesis test was discussed by \cite{PollakSiegmund75,Lai88,LaiZhang94,Lai01} for the exponential families; furthermore, \cite{Pavlov87,Pavlov90} studied sequential test among multiple composite hypotheses. %{\bf Need to mention weighted SPRT and GSPRT}

Using multiple sensors for hypothesis test constitutes another mainstream of the sequential inference paradigm, motivated by the potential wide application of wireless sensor technology. In general, multi-sensor signal processing can be divided into two categories. One features parallel structure (also known as the fully distributed scenario), that allows all sensors to communicate based upon a certain network topology and reach consensus by message-passing; the other features a hierarchical structure and requires a fusion center that makes the global decision by receiving information from distributed sensors. In this work, we consider the hierarchical type of systems where sensors play the role of information relay.  In the ideal case, if the system is capable of precisely relaying the local samples from sensors to the fusion center whenever they become available, we are faced with a centralized multi-sensor hypothesis testing problem. However, the centralized setup amounts to instantaneous high-precision communication between sensors and the fusion center (i.e., samples quantized with large number of bits are transmitted at every sampling instant). In practice, many systems, especially wireless sensor networks, cannot afford such a demanding requirement, due to limited sensor batteries and channel bandwidth resources. Aiming at deceasing the communication overhead, many works proposed the decentralized schemes that allow sensors to transmit small number of bits at lower frequency. In particular, \cite{Veeravalli93} described five (``case A'' through ``case E'') scenarios of decentralized sequential test depending on the availability of local sensor memory and feedback from the fusion center to sensors. There, the optimal algorithm was established via dynamic programming for ``case E'' which assumed full local memory and feedback mechanism. However, in resource-constrained sensor networks, it is not desirable for sensors to store large amount of data samples and for the fusion center to send feedback. Therefore, in this paper, we assume that sensors have limited local memory and no feedback information is available. 

As mentioned above, decreasing the communication overhead can be achieved from two perspectives: First, sensors use less bits to represent the local statistics; second, the fusion center samples local statistics at a lower frequency compared to the sampling rate at sensors. On one hand, in many cases, the original sample/statistic is quantized into one-bit message, which is then transmitted to the fusion center. As such, \cite{Tsitsiklis93,Tsitsiklis86} showed that the optimal quantizer for fixed-sample-size test corresponds to the likelihood ratio test (LRT)  on local samples. Then \cite{Nguyen06} demonstrated that the LRT is not necessarily optimal for sequential detection under the Bayesian setting, due to the asymmetry of the Kullback-Leibler   divergence between the null and alternative hypotheses.  \cite{Mei08,Wang11,Wang13} further investigated the stationary quantization schemes under the Bayesian setting. One the other hand, in order to lower the communication frequency, all the above work can be generalized to the case where quantization and transmission are performed every fixed period of time. These schemes generally  involve fixed-sample-size test at sensors and sequential test at the fusion center, which we refer to as the {\it uniform sampling and quantization} strategy.  Alternatively,
\cite{Veeravalli94} proposed that each sensor runs a local sequential test and local decisions are combined at the fusion center in a fixed-sample-size fashion. Furthermore, 
\cite{Hussain94} proposed to run sequential tests at both sensors and the fusion center, amounting to an adaptive transmission triggered by local SPRTs, though no optimality analysis was provided there. To fill that void, 
\cite{Fellouris11} defined such a scheme as level-triggered sampling and proved its asymptotic optimality in both discrete and continuous time. However, 
\cite{Fellouris11,Yasin12,Yasin13,Hussain94} only considered the simple hypothesis test, where the likelihood functions can be specified under both hypotheses. 

In spite of its broad spectrum of applications, the multi-sensor sequential composite hypothesis test remains to be investigated from both algorithmic and theoretical perspectives. Owing to the unknown parameters, the LR-based decentralized algorithms using either uniform sampling or level-triggered sampling as mentioned above are no longer applicable. Hitherto, some existing works have  addressed this problem in the fixed-sample-size setup. For example, \cite{Kar12} developed a binary quantizer by minimizing the worst-case Cramer-Rao bound for multi-sensor estimation of an unknown parameter. Recently, \cite{Fang13} proposed to quantize local samples (sufficient statistics) by comparing them with a prescribed threshold; then, the fusion center performs the generalized likelihood ratio test by treating the binary messages from sensors as random samples. A similar scheme was established in \cite{Ciuonzo13} for a Rao test at the fusion center. Both \cite{Fang13,Ciuonzo13} assumed that the unknown parameter is close to the parameter under the null hypothesis. 
%Nonetheless, limited studies shed light upon the multi-sensor sequential composite testing problem. 
In \cite{Tartakovsky08}, a composite sequential change detection (a variant of sequential testing) based on discretization of parameter space was proposed. %However, such a scheme for continuum parameter space is unpredictable and yield high complexity for large parameter space.
%\subsection{Contributions}

In this work, we propose two decentralized schemes for sequential composite hypothesis test. The first is a natural extension of the decentralized approach in \cite{Fang13}, that employs the conventional uniform sampling and quantization mechanism, to its sequential counterpart. The second builds on level-triggered sampling and features asynchronous communication between sensors and the fusion center. 
Moreover, our analysis shows that the level-triggered sampling based scheme exhibits asymptotic optimality when the local and global thresholds grow large at different rates, whereas the uniform sampling scheme is strictly suboptimal. Using the asymptotically optimal centralized algorithm as a benchmark\footnote{The performance of the decentralized scheme is supposed to be inferior to that of the centralized one because the fusion center has less information from the local sensors (i.e., a summary of local samples within a period of time, instead of the exact samples at every time instant). 
}, it is found that the proposed level-triggered sampling based scheme yields only slightly larger expected sample size, but with substantially lower communication overhead. 

The key contribution here is that we have applied the level-triggered sampling to the decentralized sequential composite hypothesis test and provided a rigorous analysis on its asymptotic optimality. Though  \cite{SLi14_Rep,SLi14_SG} have applied the level-triggered sampling to deal with multi-sensor/multi-agent sequential change detection problem with unknown parameters, no theoretical optimality analysis was provided there. The main challenge for analysis lies in characterizing the performance of the generalized sequential probability ratio test for generic families of distributions, which has not been fully understood. To that end, the recent work \cite{XLi14} provides the analytic tool that is instrumental to the analysis of the decentralized sequential composite test based on level-triggered sampling in this paper. Note that, in essence, \cite{XLi14} studied the single-sensor sequential composite test, whereas we consider the sequential composite test under the decentralized multi-sensor setup in this paper.
%In this work, we propose the level-triggered sampling scheme for the decentralized composite sequential detection. By transmitting local GSPRT decisions to the fusion center, such a scheme achieves  asymptotically optimal performance with low communication overhead between the distributed sensors and the fusion center.

The remainder of the paper is organized as follows. In Section II, we briefly formulate the sequential composite hypothesis test under the multi-sensor setup. Then we discuss the centralized generalized likelihood ratio test in Section III. In Section IV, we propose two decentralized testing schemes based on uniform sampling and level-triggered sampling respectively, together with their performance analysis. Then in Section V, specific models are studied and numerical results are given to further compare the decentralized schemes. Finally, Section VI concludes this paper.

\section{Problem Statement}
Suppose that $L$ sensors observe samples $y_t^\ell, \; \ell=1, \ldots, L,$ at each  discrete time $t$, and communicate to a fusion center which makes the global decision based upon its received messages from sensors, as shown in Fig. \ref{fig:system}. 
\begin{figure}
\centering
\includegraphics[width=0.86\textwidth]{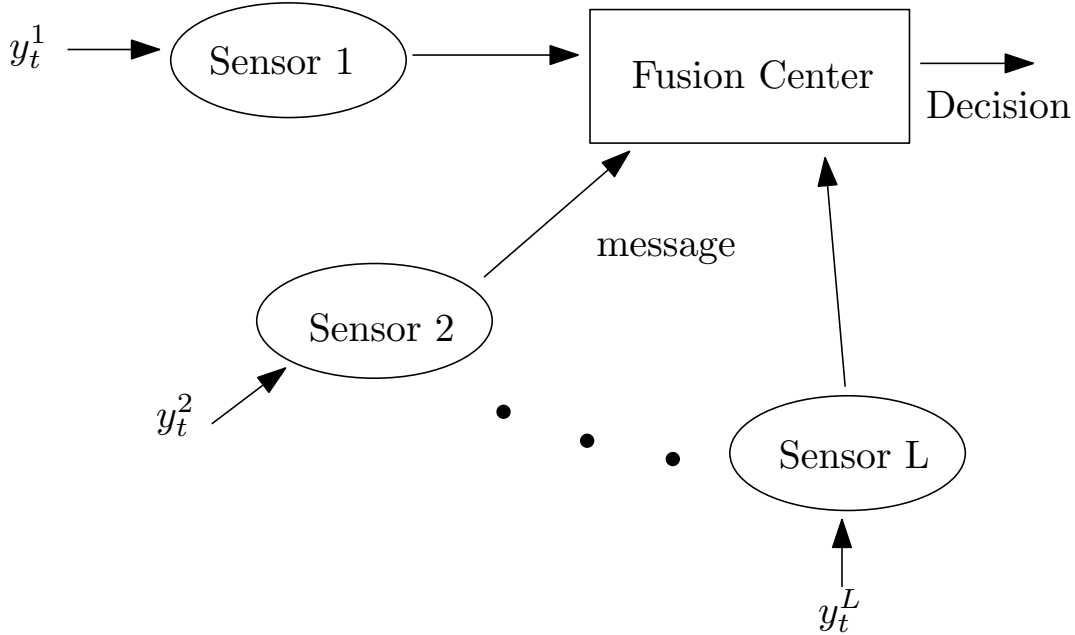}
\caption{A hierarchical multi-sensor system consisting of distributed sensors and a fusion center.}\label{fig:system}
\end{figure}
Assuming the existence of density functions, the observed samples are distributed according to $h_{\gamma}(x)$ under the null hypothesis $\mathcal{H}_0$ and  $f_{\theta}(x)$ under the alternative hypothesis $\mathcal{H}_1$. We assume that $\gamma$ and $\theta$ fall within the parameter sets $\Gamma$ and $\Theta$ respectively. Given $\gamma$ and $\theta$, the random samples under both hypotheses are independent over time and across the sensors. Under such a setup, we arrive at a composite null versus composite alternative hypothesis testing problem:  
\begin{align}\label{simpleVScomposite}
\begin{array}{ll}
\mathcal{H}_0: & y_t^\ell\sim h_{\gamma}\left(x\right), \quad \gamma \in \Gamma, \quad \ell\in {\cal L},\; t=1, 2, \ldots\\
\mathcal{H}_1: & y_t^\ell \sim f_{\theta}\left(x\right), \quad \theta\in \Theta,\quad \ell\in \mathcal{L}, \; t=1, 2, \ldots
\end{array}
\end{align}
where ${\cal L}\triangleq \{1, \ldots, L\}$. In general, $h_\gamma$ and $f_\theta$ may belong to different families of distributions. The goal is to find the stopping time $\T$ that indicates the time to stop taking new samples and the decision function $\delta$ that decides between $\mathcal{H}_0$ and $\mathcal{H}_1$, such that the expected sample size is minimized given the error probabilities are satisfied, i.e.,  
\begin{align}
&\quad\;\inf_\T \qquad {\E_x\T}, \quad x\in \Gamma\cup \Theta\label{samplesize}\\
&\text{subject to}\quad \sup_\gamma \Prob_\gamma\lb\delta=1\rb\le \alpha,\;\; \sup_\theta\Prob_{\theta}\lb\delta=0\rb\le \beta, \label{err_constraints}
\end{align}
where $\E_{\theta}$ denotes expectation taken with respect to (w.r.t.) $f_{\theta}$ and $\E_{\gamma}$ w.r.t. $h_{\gamma}$. Note that \eqref{samplesize}-\eqref{err_constraints} are in fact (possibly uncountably) many optimization problems (depending on the parameter spaces $\Theta$ and $\Gamma$) with the same constraints. Unfortunately, unlike the simple null versus simple alternative hypothesis case, finding a unique optimal sequential test for these problems is infeasible, even when a single-sensor or a centralized setup is considered. Therefore, the approaches that possess asymptotic optimality become the focus of interest. In the following sections, we start by briefly introducing the generalized sequential probability ratio test (GSPRT) as an asymptotically optimal solution for the centralized system; then, two decentralized schemes will be developed based on uniform sampling and level-triggered sampling respectively. In particular, we will show that the latter scheme is asymptotically optimal when certain conditions are met. Here we first give the widely-adopted definition of asymptotic optimality \cite{XLi14,Fellouris11}.
\begin{definition}
Let $\mathcal{T}(\alpha, \beta)$ be the class of sequential tests with
stopping time and decision function $\{\T', \delta'\}$ that satisfy the type-I and type-II
error probability constraints in \eqref{err_constraints}. Then the sequential test $\{\T, \delta\}\in \mathcal{T}(\alpha, \beta)$ is said to be asymptotically optimal, as $\alpha, \beta \to 0$, if
\begin{align}
1\le\frac{\E_x\T}{\inf_{\{\T', \delta'\}\in \mathcal{T}(\alpha, \beta)}\E_x\T'}=1+o_{\alpha, \beta}(1),
\end{align}
or equivalently, $\E_x\T\sim \inf_{\{\T', \delta'\}\in \mathcal{T}(\alpha, \beta)}\E_x\T'$ for every $x\in \Gamma\cup\Theta$. Here, $x\sim y$ denotes ${x}/{y}\to 1$ as $x, y\to \infty$. 
\end{definition}

\ignore{In the following sections, we first consider the scenario where the fusion center collects the exact values of local samples $y_t^\ell$. This assumption means that all sensors transmits local samples with high-precision at every time instant and is referred to as {\it centralized} detection. Then we proceed to investigate a more practical scenario where local sensors only transmit a summary of local samples/statistics, thus  requiring feasible power and communication resources. We refer to the latter scenario as the {\it decentralized} detection. }

\section{Centralized Generalized Sequential Probability Ratio Test}
In this section, we consider the centralized scenario, where local samples $\left\{y_t^\ell\right\}$ are made available at the fusion center in full precision. Note that  the centralized  multi-sensor test is not much different from the single-sensor version except that, at each time instant, multiple samples are observed instead of one. Since finding the optimal sequential composite hypothesis testing is impossible, the solutions with asymptotic optimality become the natural alternatives. In particular, the GSPRT is obtained by substituting the unknown parameter with its maximum likelihood estimate in the SPRT; alternatively, one can perform an SPRT  using the marginal likelihood ratio by integrating out the unknown parameters when the priors on unknown parameters are available. In this paper, we avoid presuming priors on parameters and adopt the GSPRT.

Due to the conditional independence for samples over time and across sensors, the global likelihood ratio function is evaluated as
\begin{align}\label{LLR}
S_t(\gamma, \theta)\triangleq\sum_{\ell=1}^L\sum_{j=1}^ts_j^\ell(\gamma, \theta), \qquad s_j^\ell(\gamma, \theta)\triangleq \log\frac{f_\theta(y_j^\ell)}{{h_\gamma(y_j^\ell)}}\;.
\end{align}
Then the centralized GSPRT can be represented with the following stopping time
\begin{align}
&\T_c\triangleq \inf\left\{t: \widetilde{S}_t\triangleq \log \frac{\max_{\theta\in \Theta}\sum_{\ell=1}^L\sum_{j=1}^tf_\theta(y_j^\ell)}{\max_{\gamma\in \Gamma}\sum_{\ell=1}^L\sum_{j=1}^tf_\gamma(y_j^\ell)}\notin (-B, A)\right\},\label{C_GSPRT_s} 
%\widetilde{S}_t \triangleq \log \frac{\sup_{\theta\in \Theta} \sum_{\ell=1}^L\sum_{j=1}^t f_\theta(y_j^\ell)}{\sup_{\gamma \in \Gamma}\sum_{\ell=1}^L\sum_{j=1}^t h_\gamma(y_j^\ell)},
\end{align}
and the decision function at the stopping instant
\begin{align}
&\delta_{\T_c}\triangleq \left\{
\begin{array}{ll}
1 & \text{if} \quad \widetilde{S}_{\T_c}\ge A,\\
0 & \text{if} \quad \widetilde{S}_{\T_c}\le -B.
\end{array}\right.\label{C_GSPRT_d}
\end{align}
Here $\widetilde{S}_t$ is referred to as the generalized log-likelihood ratio (GLLR) of the samples up to time $t$, and $A, B$ are prescribed constants such that the error probability constraints in  \eqref{err_constraints} are satisfied.  Practitioners can choose their values according to Proposition \ref{C_GSPRT_perf} given below which relates $A, B$ to type-I and type-II error probabilities asymptotically. Before delving into the performance characterization of  the centralized GSPRT \eqref{C_GSPRT_s}-\eqref{C_GSPRT_d},  we recall the Kullback-Leibler (KL) divergence between two distributions $h_{\gamma}$ and $f_{\theta}$:
\begin{align}
D\lb f_{\theta}||h_{\gamma} \rb=\E_{\theta}\lb \log \frac{f_{\theta}\lb Y \rb}{h_{\gamma}(Y)}\rb,\qquad D\lb h_{\gamma}|| f_{\theta} \rb=\E_{\gamma}\lb \log \frac{h_{\gamma}(Y)}{f_{\theta}\lb Y \rb}\rb.
\end{align} 
%where $\E_{\theta}$ denotes expectation taken with respect to (w.r.t.) $f_{\theta}$ and $\E_{\gamma}$ w.r.t. $h_{\gamma}$.
Assume that the following conditions/assumptions hold,
\begin{itemize}
\item[$A1)$] The distributions under the null and the alternative hypotheses are strictly separated, i.e., $\inf_\gamma D\lb f_\theta || h_\gamma\rb>\varepsilon$ and $\inf_\theta D\lb h_\gamma||f_\theta\rb >\varepsilon$ for some $\varepsilon>0$. This condition implies that the GLLR $\widetilde{S_t}$ takes different drifting directions in expectation under the null and the alternative hypotheses ;
\item[$A2)$] $D\lb f_{\theta} || h_{\gamma}\rb$ and $D\lb h_{\gamma}||f_{\theta}\rb$ are twice continuously  differentiable w.r.t. $\gamma$ and $\theta$;
\item[$A3)$] The parameter spaces $\Gamma$ and $\Theta$ are compact sets;
\item[$A4)$] Let $S(\gamma, \theta)=\log f_{\theta} (Y)-\log h_{\gamma}(Y)$. There exists $\eta>1, x_0$ such that for all $\gamma\in \Gamma, \theta\in \Theta, x>x_0$, we have 
\begin{align}
&\Prob_{\gamma}\lb \sup_{\theta\in\Theta}\left|\nabla_{\theta} S(\gamma, \theta) \right|>x\rb\le e^{-|\log x|^\eta},\label{A3a}\\
\text{and}\qquad&\Prob_{\theta}\lb \sup_{\gamma\in\Gamma}\left|\nabla_{\gamma} S(\gamma, \theta) \right|>x\rb\le e^{-|\log x|^\eta}.\label{A3b}
\end{align}
{This condition imposes that the tail of the first-order derivative of the likelihood ratio w.r.t. $\gamma$ or $\theta$ decays faster than any polynomial.}
%There exists $c>1$ and $x_0$ such that for all $\theta\in \Theta$ and $x>x_0$ $$\Prob_\theta\lb \sup_{\theta\in \Theta} |\Delta_\theta s_j^\ell(\theta)|>x\rb\le e^{-|\log x|^c}.$$
\end{itemize}
According to \cite{XLi14}, the performance of the GSPRT can be characterized asymptotically in closed form, which we quote here as a proposition. \begin{proposition}{\cite[Theorem 2.2-2.3]{XLi14}}\label{C_GSPRT_perf}
For the composite hypothesis testing problem given by \eqref{simpleVScomposite}, the GSPRT that consists of stopping rule \eqref{C_GSPRT_s} and decision function \eqref{C_GSPRT_d} yields the following asymptotic performance
\begin{align}
&\sup_{\gamma\in \Gamma}\log \Prob_\gamma(\delta_{\T_c}=1) \sim -A , \qquad
\sup_{\theta\in \Theta}\log \;\Prob_{\theta}\left(\delta_{\T_c}=0\right)\sim -B,\label{GSPRT_ERR}\\
&\E_\gamma\left({\T_c}\right)\sim \frac{B}{\inf_{\theta\in\Theta}D\left(h_\gamma||f_{\theta}\right) L}, \qquad \E_\theta\left({\T_c}\right)\sim\frac{A}{\inf_{\gamma\in \Gamma}D\left(f_{\theta}||h_\gamma\right) L}.\label{GSPRT_ARL}
\end{align}
as $A, B\to \infty$.
\end{proposition}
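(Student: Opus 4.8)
The plan is to reduce both assertions to the behaviour of a single random walk together with a slowly growing remainder. Fix $\gamma\in\Gamma$. Expanding the two maxima in \eqref{C_GSPRT_s} and adding and subtracting $\sum_{\ell=1}^{L}\sum_{j=1}^{t}\log h_\gamma(y_j^\ell)$ gives the exact identity
\[
\widetilde{S}_t=\sup_{\theta\in\Theta}S_t(\gamma,\theta)-R_t(\gamma),\qquad R_t(\gamma):=\sup_{\gamma'\in\Gamma}\sum_{\ell=1}^{L}\sum_{j=1}^{t}\log\frac{h_{\gamma'}(y_j^\ell)}{h_\gamma(y_j^\ell)}\ \ge\ 0,
\]
and, symmetrically, $\widetilde{S}_t=\inf_{\gamma'\in\Gamma}S_t(\gamma',\theta)+\tilde R_t(\theta)$ with $\tilde R_t(\theta)\ge 0$ for each $\theta\in\Theta$. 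By A2 and A3 the map $\theta\mapsto D(h_\gamma\|f_\theta)$ is continuous on the compact set $\Theta$, so $I_0(\gamma):=\inf_\theta D(h_\gamma\|f_\theta)$ is attained at some least favourable alternative $\theta^\star(\gamma)$, and $I_0(\gamma)\ge\varepsilon>0$ by A1; likewise $I_1(\theta):=\inf_\gamma D(f_\theta\|h_\gamma)\ge\varepsilon$ is attained at some $\gamma^\star(\theta)$. Since $R_t(\gamma)\ge0$ this yields the bracketing $S_t(\gamma,\theta^\star)-R_t(\gamma)\le\widetilde{S}_t\le\sup_\theta S_t(\gamma,\theta)$, and I would exploit that $e^{S_t(\gamma,\theta)}$ is a unit-mean $\Prob_\gamma$-martingale for every fixed $\theta$ (being a running product of likelihood ratios).

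For the error probabilities \eqref{GSPRT_ERR} I would first get the upper bound on the type-I error: on $\{\delta_{\T_c}=1\}$ one has $\widetilde{S}_{\T_c}\ge A$, hence $\sup_\theta S_{\T_c}(\gamma,\theta)\ge A$; Ville's maximal inequality gives $\Prob_\gamma(\sup_t S_t(\gamma,\theta)\ge A)\le e^{-A}$ for each fixed $\theta$, and I would then cover the compact $\Theta$ by an $\epsilon$-net, take a union bound over the net, and control the oscillation of $\theta\mapsto S_t(\gamma,\theta)$ between neighbouring net points using A4 — its sub-polynomial tail forces the net cardinality needed to grow only sub-exponentially in $A$. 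This yields $\sup_\gamma\Prob_\gamma(\delta_{\T_c}=1)\le e^{-A+o(A)}$, i.e. $\sup_\gamma\log\Prob_\gamma(\delta_{\T_c}=1)\le-A(1-o(1))$, uniformly in $\gamma$ (compactness of $\Gamma$). For the matching lower bound I would restrict attention to the single least favourable alternative $\theta^\star$: from the bracketing, $\widetilde{S}_t\ge A$ whenever $S_t(\gamma,\theta^\star)\ge A+R_t(\gamma)$, and optional stopping applied to the martingale $e^{S_t(\gamma,\theta^\star)}$ at the exit time of $(-B,A)$ shows that $S_\cdot(\gamma,\theta^\star)$ reaches $[A,\infty)$ before $(-\infty,-B]$ with probability at least $e^{-A-O(1)}(1-e^{-B})$; combined with A2--A4 to show that the $R_t(\gamma)$ correction is negligible on the time horizon carrying this event, this gives $\sup_\gamma\Prob_\gamma(\delta_{\T_c}=1)\ge e^{-A(1+o(1))}$ and hence the first relation in \eqref{GSPRT_ERR}. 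The second relation follows by the mirror argument under $\Prob_\theta$, using $\widetilde{S}_t=\inf_{\gamma'}S_t(\gamma',\theta)+\tilde R_t(\theta)$ and the boundary $-B$.

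For the expected sample sizes \eqref{GSPRT_ARL} I would first identify the drift of $\widetilde{S}_t$. Under $\Prob_\gamma$, the strong law and Wald consistency of the maximum-likelihood estimators (using A2--A4) give $t^{-1}\sup_{\gamma'}\sum_{\ell,j}\log h_{\gamma'}(y_j^\ell)\to L\,\E_\gamma\log h_\gamma(Y)$ and $t^{-1}\sup_\theta\sum_{\ell,j}\log f_\theta(y_j^\ell)\to L\,\E_\gamma\log h_\gamma(Y)-L\,I_0(\gamma)$, whence $\widetilde{S}_t/t\to-L\,I_0(\gamma)<0$; in particular $R_t(\gamma)/t\to0$ and $t^{-1}[\sup_\theta S_t(\gamma,\theta)-S_t(\gamma,\theta^\star)]\to0$. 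Since $S_t(\gamma,\theta^\star)$ is a genuine random walk with drift $-L\,I_0(\gamma)$ and — by A4 — finite expected overshoot, its first passage time below $-B$ is $\sim B/(L\,I_0(\gamma))$ by the elementary renewal theorem. The bracketing, together with the negligibility on the order-$B$ horizon of both $R_t(\gamma)$ and the maximisation slack $\sup_\theta S_t-S_t(\gamma,\theta^\star)$ (an $O(\sqrt{t\log t})$ term, controlled via A2--A4), squeezes $\T_c$ between first-passage times of $S_\cdot(\gamma,\theta^\star)$ through $-B\mp o(B)$, giving $\T_c\sim B/(L\,I_0(\gamma))$ in $\Prob_\gamma$-probability; the error bound $\Prob_\gamma(\delta_{\T_c}=1)\le\alpha\to0$, the a.s.\ finiteness of $\T_c$, and a truncation argument (again invoking A4 to dominate overshoot and remainder terms) then upgrade this to the stated convergence in expectation. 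The formula $\E_\theta\T_c\sim A/(L\,I_1(\theta))$ is the mirror image: under $\Prob_\theta$ the GLLR drifts up at rate $L\,I_1(\theta)>0$ and the test exits through $A$.

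The hard part will be the two uniformity steps, and this is precisely where \eqref{A3a}--\eqref{A3b} do the work: (i) upgrading the pointwise martingale and renewal estimates to bounds uniform over the continuum parameter sets $\Theta$ and $\Gamma$, via a chaining/metric-entropy argument that keeps the required net sizes sub-exponential in $A$ and $B$; and (ii) handling the fact that $\widetilde{S}_t$ is a supremum/difference of random walks rather than a random walk itself, so that the expected-sample-size asymptotics require nonlinear renewal theory together with a proof that the MLE-overshoot remainders $R_t(\gamma),\tilde R_t(\theta)$ and the maximisation slack are asymptotically negligible — and that all of these estimates survive the passage from in-probability bounds to statements about expectations. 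This program is carried out in \cite{XLi14}, whose single-sensor analysis extends to the present $L$-sensor setting essentially by replacing each per-step increment with its sum over the $L$ sensors, which only rescales the KL drifts by $L$.
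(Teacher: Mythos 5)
The first thing to say is that the paper does not prove this proposition at all: it is imported verbatim from \cite{XLi14} (Theorems 2.2--2.3), and the surrounding text states explicitly that the result is being quoted. So there is no in-paper argument to compare yours against; the only meaningful comparison is with the strategy of \cite{XLi14} itself. Your outline is a faithful reconstruction of that strategy, and the identities you use are correct: the decomposition $\widetilde{S}_t=\sup_{\theta}S_t(\gamma,\theta)-R_t(\gamma)$ with $R_t(\gamma)\ge 0$ (and its mirror image under $\Prob_\theta$) does hold, $e^{S_t(\gamma,\theta)}$ is indeed a unit-mean $\Prob_\gamma$-martingale for fixed $\theta$ so Ville's inequality gives the pointwise bound $e^{-A}$, and the drift computation $\widetilde{S}_t/t\to -L\inf_{\theta}D\lb h_\gamma||f_\theta\rb$ under $\Prob_\gamma$ (with the factor $L$ coming from summing over sensors) correctly identifies the denominators in \eqref{GSPRT_ARL}. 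However, as written this is a proof program rather than a proof: the three steps you yourself flag as ``the hard part'' --- (i) the chaining/metric-entropy argument that makes the maximal inequality uniform over the continuum $\Theta$ with net cardinality sub-exponential in $A$, (ii) the proof that $R_t(\gamma)$ and the maximisation slack $\sup_\theta S_t(\gamma,\theta)-S_t(\gamma,\theta^\star)$ are negligible on the order-$B$ time horizon, and (iii) the nonlinear-renewal-theory and uniform-integrability step that upgrades the in-probability first-passage asymptotics to convergence of $\E_\gamma\T_c$ --- are precisely the technical content of \cite{XLi14}, and you defer all of them there rather than supplying them. That puts your proposal at exactly the same level of rigor as the paper itself (a correct, well-annotated citation), which is acceptable here, but it should not be mistaken for an independent derivation.
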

Proposition \ref{C_GSPRT_perf} indicates that the GSPRT, i.e., \eqref{C_GSPRT_s} and \eqref{C_GSPRT_d}, is asymptotically optimal among the class of $L$-sensor centralized tests $\mathcal{T}_c^L(\alpha, \beta)$ in the sense that 
\begin{align}
\E_x\lb \T_c\rb\sim \inf_{\{\T, \delta\}\in \mathcal{T}_c^L(\alpha, \beta)} \E_x\lb \T\rb, \qquad x\in\Gamma\cup\Theta, 
\end{align}
as $\alpha \triangleq\sup_\gamma \Prob_\gamma\lb \delta_{\T_c}=1\rb\to 0$ and $\beta\triangleq\sup_\theta \Prob_\theta\lb \delta_{\T_c}=0\rb\to 0$ \cite[Corollary 2.1]{XLi14}.
However, as mentioned in Section I, in spite of its asymptotic optimality, the centralized GSPRT yields substantial data transmission overhead between the sensors and the fusion center; therefore, it may  become impractical when the  communication resources are constrained. Moreover, the centralized scheme puts all computation burden at the fusion center. Hence, it is of great interest to consider the decentralized scheme where the computation is distributed among the sensors and the fusion center, with much lower communication overhead between the sensors and the fusion center. 

\ignore{Every sensor transmits a summary of its own samples, in the form of a message taking values in a finite alphabet. Then the fusion center makes a decision on the basis of the messages it receives, i.e., the messages are essentially the observed samples for the fusion center. Given communication constraint, we intend to design a procedure consisting of what message are generated by sensors and when they are transmitted to the fusion center. Such a scheme is of importance in the applications involving geographically distributed sensors. }

\section{Decentralized Sequential Composite Hypothesis Test}
In this section, we investigate the decentralized sequential composite hypothesis test,  where the fusion center is only able to access a summary of local samples. In particular, each sensor transmits a one-bit message to the fusion center every $T_0$ (deterministically or on average) samples. We first consider the conventional decentralized scheme based on the uniform sampling and one-bit quantization. That is, every sensor sends its one-bit quantized local statistic to the fusion center every fixed $T_0$ samples. Then we propose a decentralized scheme based on level-triggered sampling (LTS), where the one-bit transmission is stochastically activated by the local statistic process at each sensor, and occurs every $T_0$ samples {\it on average}. Interestingly, we show that such LTS-based decentralized scheme provably achieves the asymptotic optimality with much lower communication overhead compared with the centralized scheme. 

\subsection{Decentralized GSPRT based on Uniform Sampling and Quantization}
%In this section, we propose to decentralize GSPRT by the way of uniform sampling, where local  sufficient statistic of every $T_0$ samples is quantized into one bit message. The case $T_0=1$ implies that one-bit communication occurs at every time instant. 
The decentralized scheme based on uniform sampling and quantization is a natural extension of the decentralized fixed-sample-size composite test in \cite{Fang13} to its sequential counterpart. 
%Moreover, compared to \cite{Fang13}, both null and alternative hypotheses are composite in our problem and no linear signal model is assumed here. 
Denote the sufficient statistic from the $j$th to the $k$th sample at sensor $\ell$ as $\phi_j^{k, \ell}\triangleq \phi\lb y^\ell_{j}, \ldots, y^\ell_{k}\rb$. On one hand, at every sensor, the statistic is quantized into one-bit message by comparing it with a prescribed threshold $\lambda$, i.e., 
\begin{align}\label{quantization}
q^\ell_n(T_0)\triangleq \text{sign}\lb \phi_{(n-1)T_0+1}^{nT_0, \ell}-\lambda\rb.
\end{align}
Note that \eqref{quantization} corresponds to a stationary quantizer that does not change over time and is studied in decentralized estimation \cite{Ciuonzo13} and detection \cite{Fang13} problems due to its simplicity. On the other hand, the fusion center receives $q_n^\ell, \ell=1, \ldots, L,$ as its own random samples every $T_0$ interval.  To that end, the fusion center runs a GSPRT on the basis of the received $q_n^\ell$'s, which are Bernoulli random variables with different distributions under the null and alternative hypotheses \cite{Blum97}:
%Therefore, the best way to make use of these Bernoulli samples is to form their LLR at the fusion center :
%\begin{align}
%s^q\lb q_n^\ell \rb &= \log \frac{\Prob_\theta(q_n^\ell)}{\Prob_\gamma(q_n^\ell)}\nonumber\\&=(1-q_n^\ell) \log \frac{1-p_\theta^q}{1-p_\gamma^q}+q_n^\ell \log \frac{p_\theta^q}{p_\gamma^q},\label{LLR_q}
%\end{align}
%\begin{align}
%G_t(\gamma, \theta)&=\sum_{n: nT_0\le t}\sum_{\ell=1}^L \log \frac{\Prob_\theta(q_n^\ell)}{\Prob_\gamma(q_n^\ell)}\nonumber
%\\&=r_0\log\frac{\Prob_\theta(q_n^\ell=0)}{\Prob_0(q_n^\ell=0)}+r_1\log\frac{\Prob_\theta(q_n^\ell=1)}{\Prob_0(q_n^\ell=1)},
%\\&=r^t_0\log\frac{1-p_\theta^q}{1-p_\gamma^q}+r^t_1\log\frac{p_\theta^q}{p_\gamma^q},
%\end{align}
%Given the local threshold $\lambda$, the fusion center performs the GSPRT on the basis of received Bernoulli samples $q_n^\ell$. 
%which we know is asymptotic optimal way to utilize the arriving messages. 
%That is, the stopping rule and decision function are specified as 
\begin{align}
 &\T_q\triangleq \inf \left\{t: \widetilde{G}_t \triangleq \frac{\sup_{\theta\in \Theta} \lb r^t_0\log \lb 1-p_\theta^{T_0}\rb+r_1^t\log p_\theta^{T_0}\rb}{\sup_{\gamma\in \Gamma} \lb r^t_0\log \lb 1-p_\gamma^{T_0}\rb+r_1^t\log p_\gamma^{T_0}\rb} \notin (-B, A)\right\}, \label{q_GSPRT_st}
 %\\&\delta_{\T_q}\triangleq \left\{ 
%\begin{array}{ll}
%1 & \text{if} \quad \widetilde{S}^q_{\T_q}\ge A,\\
%0 & \text{if} \quad \widetilde{S}^q_{\T_q}\le -B,
%\end{array}\right.\label{q_GSPRT_d}
\end{align}
where $p_x^{T_0}\triangleq \Prob_x\lb q_n^\ell(T_0)=1\rb, \; x\in \{\gamma, \theta\}$, and $r_1^t, r_0^t$ represent the number of received ``$+1$'' and ``$-1$'' respectively, i.e., 
$r^t_0\triangleq \sum_{\ell=1}^L\sum_{n: nT_0\le t}\mathbbm{1}_{\{q_{n}^\ell=1\}}, \; r^t_1\triangleq \sum_{\ell=1}^L\sum_{n: nT_0\le t}\mathbbm{1}_{\{q_{n}^\ell=-1\}}$. Upon stopping, $\mathcal{H}_1$ is declared if $\widetilde{G}_{\T_q}\ge A$, and $\mathcal{H}_0$ is declared if $\widetilde{G}_{\T_q}\le -B$, i.e., $\delta_{\T_q}\triangleq \mathbbm{1}_{\{\widetilde{G}_{\T_q}\ge A\}}$.
%%%%%%%%%   Please ignore below  %%%%%%%%%%%%%%%%
%\ignore{
%Note that the maximization in \eqref{q_GSPRT_st} can be efficiently solved only when the objective function $r_0^t\log\lb1-p_x^q\rb+r_1^t\log p_x^q$ is a concave function of $x$, which is guaranteed by the following condition:
%\begin{itemize}
%\item[$A1'$] $p_\gamma^q$ and $p^q_\theta$ are strictly increasing functions of the unknown parameter $\gamma$ and $\theta$ respectively, i.e., $\frac{\partial p^q_\gamma}{\partial \gamma}>0$ and $\frac{\partial p^q_\theta}{\partial \theta}>0$.
%\end{itemize}
%We will see later that $A1'$ also guarantees that $A4$ holds true for the Bernoulli samples $q_n^\ell$ under both hypotheses. }
%%%%%%%%%%%%  Please ignore above    %%%%%%%%%%%%%%%
Assuming that conditions $A1$-$A4$ listed in the preceding section are satisfied by the Bernoulli random samples $q_n^\ell$, the decentralized GSPRT based on uniform sampling and quantized statistics can be characterized by invoking Proposition \ref{C_GSPRT_perf}.  That is, as $A, B \to \infty
$, the type-I and type-II error probabilities admit
\begin{align}\label{U_error_perf}
 &\sup_{\gamma\in\Gamma}\log \Prob_{\gamma}\left(\delta_{\T_q}=1\right)\sim {-A}, \qquad \sup_{\theta\in \Theta} \log\Prob_\theta\left(\delta_{\T_q}=0\right)\sim {-B},
\end{align}
and the expected sample sizes under the null and alternative hypotheses admit the following asymptotic expressions, respectively:
\begin{align}
&\E_\theta(\T_q)\sim \frac{A}{\lb\inf_\gamma D\lb p^{T_0}_\theta ||p^{T_0}_\gamma\rb /T_0 \rb L}, \label{U_meansize_0}\\
\text{and}\qquad & \E_\gamma(\T_q)\sim \frac{B}{\lb\inf_\theta D\lb p^{T_0}_\gamma||p^{T_0}_\theta\rb/T_0\rb L}.\label{U_meansize_1}
\end{align}
%%%%%%%%%%%%%   Please ignore below   %%%%%%%%%%%%%%%
\ignore{\color{blue}
%\begin{proof}
Note that the $A1$-$A4$ condition can be satisfied under some mild condition with the quantizer. In particular, $A1$-$A3$ are natural $A4$ needs some thoughtful attention. For example, a sufficient condition is $p_\gamma^q$ and $p^q_\theta$ are strictly increasing functions of the unknown parameter $\gamma$ and $\theta$ respectively, i.e., $\frac{\partial p^q_\gamma}{\partial \gamma}>0$ and $\frac{\partial p^q_\theta}{\partial \theta}>0$. To see this, we have %need to prove that condition $A4$ is satisfied by \eqref{quantization}-\eqref{LLR_q}. To that end, we have
\begin{align}\label{A3a_q}
&\Prob_\gamma\lb \sup_{\theta}\left|\frac{q_n^\ell}{p_\theta^{T_0}}-\frac{1-q_n^\ell}{1-p_\theta^{T_0}}\right|\frac{\partial p_\theta^{T_0}}{\partial \theta}>x\rb\nonumber\\=&\Prob_\gamma\lb \sup_{\theta}\left|\frac{q_n^\ell-p_\theta^{T_0}}{p_\theta^{T_0}\lb1-p_\theta^{T_0}\rb}\right|\frac{\partial p_\theta^{T_0}}{\partial \theta}>x\rb\nonumber\\=&\Prob_\gamma\lb\sup_{\theta}\left|\frac{q_n^\ell-p_\theta^{T_0}}{p_\theta^{T_0}\lb1-p_\theta^{T_0}\rb}\right|\frac{\partial p_\theta^{T_0}}{\partial \theta}>x; q_n^\ell=1\rb+\Prob_\gamma\lb\sup_{\theta}\left|\frac{q_n^\ell-p_\theta^{T_0}}{p_\theta^{T_0}\lb1-p_\theta^{T_0}\rb}\right|\frac{\partial p_\theta^{T_0}}{\partial \theta}>x; q_n^\ell=0\rb\nonumber\\=&\Prob_\gamma\lb \sup_{p_\theta^{T_0}} \frac{1}{p_\theta^{T_0}}>\epsilon x\rb p_\gamma^{T_0}
+\Prob_\gamma\lb \sup_{p_\theta^{T_0}} \frac{1}{1-p_\theta^{T_0}}>\epsilon x\rb \lb1-p_\gamma^{T_0}\rb, \end{align}
where $\epsilon=1/\lb\frac{\partial p_\theta^{T_0}}{\partial \theta}\rb>0$. Note that $0<p_\theta^{T_0}<1$, thus we can always find sufficiently large $x_0$ such that for all $x>x_0$, both terms in \eqref{A3a_q} vanish. Similarly, we can prove the same for \eqref{A3b}. Thus \eqref{A3a}-\eqref{A3b} are satisfied, and Proposition \ref{C_GSPRT_perf} can be applied.
%\end{proof}
}
%%%%%%%%%%%%  Please ignore above  %%%%%%%%%%%%%%
%Here, we represent the distribution of $q_n^\ell$ by $p_q(\theta)\triangleq \Prob_\theta\lb q_n^\ell=1\rb$. 
It is well known that $D\lb p^{T_0}_\theta ||p^{T_0}_\gamma\rb/T_0<D\lb f_\theta||h_\gamma\rb$ \cite{Tsitsiklis_TC93}, which leads to $\inf_\gamma D\lb p^{T_0}_\theta ||p^{T_0}_\gamma\rb/T_0<\inf_\gamma D\lb f_\theta||h_\gamma\rb$; therefore, the decentralized GSPRT implemented by \eqref{quantization} and \eqref{q_GSPRT_st} yields suboptimal performance, where the suboptimality is determined by the KL divergence between the distributions of quantized sufficient statistics under null and alternative hypotheses. The performance also depends on the choice of the quantization threshold $\lambda$:
\begin{itemize}
\item  $\lambda$ can be chosen such that either $\inf_\gamma D\lb p^{T_0}_\theta||p_\gamma^{T_0}\rb$ or $\inf_\theta D\lb p^{T_0}_\gamma||p^{T_0}_\theta\rb$ is maximized. In general, these two terms cannot be optimized simultaneously. Therefore, a tradeoff is required between the expected sample sizes under the null and alternative hypotheses.
\item Given that typically the expected sample size under the alternative hypothesis is of interest, the optimal $\lambda$, in general, depends on the unknown parameter $\{\theta, \gamma\}$. One possible suboptimal solution is to find the optimal quantizer for the worst-case scenario, i.e., 
\begin{align}\label{opt_la}
\lambda^\star=\arg\;\max_{\lambda} \min_{\theta, \gamma} D\lb p^{T_0}_\theta||p^{T_0}_\gamma\rb.
\end{align}
Nonetheless, the performance is expected to degrade when the actual parameters deviate from the worst-case scenario. 
\end{itemize}
%Therefore, aside from the conventional uniform sampling and quantization based decentralized scheme, a more sophisticated solution that yields superior performance and avoids implementation difficulty is much needed.
%{\color{blue}{\bf Comment:} Intuitively, smaller $T$ yield better performance. But we have constraint on  communication period.}
%%%%%%%% Second Uniform Sampling Scheme %%%%%%%%%%
\ignore{
\subsection{GSPRT Based on Local GLRT Decisions}
In this section, we investigate another possible uniform sampling scheme, where every sensor performs generalized likelihood ratio test (GLRT) and send its decision to the fusion center. Similarly as the scheme in last subsection, the fusion center view received one-bit messages as its own random samples, based on which  GSPRT can be performed. Again, let $T_0$ denote the deterministic transmission period and LLR of samples from $j$ to $k$
\begin{align}\label{dLLR}
S_j^k(\theta)\triangleq\sum_{\ell=1}^L\underbrace{\sum_{i=j}^ks_i^\ell(\theta)}_{S_j^{k,\ell}(\theta)}.
\end{align}
Recalling that the GLRT compares the GLR with some prescribed threshold, local one-bit message is defined as
\begin{align}
b^\ell_{n}=\text{sgn}\lb \sup_{\theta}S_{(n-1) T_0+1}^{n T_0, \ell}(\theta)-T_0\lambda\rb, \quad \E_0(s(\theta))<\lambda<\inf_\theta\E_\theta \lb s(\theta)\rb.
\end{align}
Under such a communication protocol, the fusion center compute the LLR of $b_n^\ell$ as the global decision statistic
\begin{align}
J_t(\theta)&=\sum_{n=1}^{N: NT_0\le t}\sum_{\ell=1}^L \log \frac{\Prob_\theta(b_n^\ell)}{\Prob_0(b_n^\ell)}\nonumber\\&=r_0\log\frac{\bar{\beta}_\theta}{1-\bar{\alpha}}+r_1\log\frac{1-\bar{\beta}_\theta}{\bar{\alpha}},
\end{align}
where $r_0$ and $r_1$ again denote the number of received $-1$ and $1$ respectively, and $\bar{\beta_\theta}\triangleq \Prob_\theta(b_n^\ell=0)$ and $\bar{\alpha}\triangleq \Prob_0(b_n^\ell=1)$ are the local error probabilities at each sensor. Since we assume the samples at sensors are conditional i.i.d., their $\bar{\beta}_{\theta}$ and $\bar{\alpha}$ are the same. The corresponding GSPRT on the basis of $b_n^\ell$ are performed to make the global decision:
\begin{align}
 &\T_g\triangleq \inf \left\{t: \widetilde{J}_t\triangleq \sup_{\theta\in \Theta_1} \; J_t(\theta) \notin [-B, A]\right\}, \label{b_GSPRT_st}\\&\delta_g\triangleq \left\{
\begin{array}{ll}
1 & \text{if} \quad \widetilde{J}_{\T_g}(\theta)\ge A,\\
0 & \text{if} \quad \widetilde{J}_{\T_g}(\theta)\le -B.
\end{array}\right.\label{b_GSPRT_d}
\end{align}
We continue to simplify the algorithm by assuming $\bar{\beta}_\theta$ is small for $\theta\in \Theta_1$. Then
\begin{align}
\widetilde{J}_t=r_0 \underline{\Delta}+r_1\bar{\Delta}, 
\end{align}
with quantization numbers defined as
\begin{align}
\underline{\Delta}=\log \frac{1-\bar{\beta}}{\bar{\alpha}}\quad \underline{\Delta}=\log \frac{\bar{\beta}}{1-\bar{\alpha}}, \quad \bar{\beta}\triangleq \sup_\theta\bar{\beta}_\theta
\end{align}
This corresponds to the uniform sampling algorithm discussed in \cite{Yasin12,Yasin13}. Note that $\underline{\Delta}, \bar{\Delta}$ can be computed by offline simulation since they do not admit close-form expressions in general.
\ignore{
\begin{align}
\Prob_0\lb u_n^\ell=1\rb=\Prob_0\lb \sup_{\theta}S_{(n-1) T+1}^{n T, \ell}(\theta)\ge T\lambda\rb\approx \sup_\theta \Prob_0\lb S_{(n-1) T+1}^{n T, \ell}(\theta)\ge T\lambda\rb , \text{\bf *?*}
\end{align}
Then we need to evaluate the right-hand side probability using large deviation theory:
\begin{align}
-\frac{1}{T}\log \Prob_0\lb S_{(n-1) T+1}^{n T, \ell}(\theta)\ge T\lambda\rb \sim \mathcal{I}_0(\theta, \lambda).
\end{align}
\begin{align}
\mathcal{I}_0(\theta, \lambda)\triangleq \sup_{z\ge 0}\left\{\lambda z-\log\E_0\lb e^{zs(\theta)}\rb\right\}, \quad \lambda>\E_0\lb s(\theta)\rb 
\end{align}
By the virtue of Kullback's inequality, we have
\begin{align}
D\lb f_\theta||h_0\rb\ge \mathcal{I}_0\lb \theta, \E_\theta(s(\theta))\rb> \mathcal{I}_0\lb \theta, \lambda \rb 
\end{align}
Then we need to evaluate the right-hand side probability using large deviation theory:
\begin{align}
-\frac{1}{T}\log \Prob_\theta\lb S_{(n-1) T+1}^{n T, \ell}(\theta)\le T\lambda\rb \sim \mathcal{I}_\theta(\theta, \lambda).
\end{align}
\begin{align}
\mathcal{I}_\theta(\theta, \lambda)\triangleq \sup_{z>0}\left\{\lambda z-\log\E_\theta\lb e^{zs(\theta)}\rb\right\}, 
\end{align}
By the virtue of Kullback's inequality, we have
\begin{align}
D\lb h_0||f_\theta\rb\ge \mathcal{I}_\theta\lb \theta, \E_0(s(\theta))\rb> \mathcal{I}_\theta\lb \theta, \lambda \rb 
\end{align}
At the fusion center we have
\begin{align}
{S}_t=\sum_{\ell=1}^L\sum_{n=1}^{N: NT\le t}\log \frac{p_\theta\lb u_{n}^\ell\rb}{p_0\lb u_{n}^\ell\rb}
\end{align}
\begin{align}
\log \frac{p_\theta\lb u_{n}^\ell\rb}{p_0\lb u_{n}^\ell\rb}=\left\{
\begin{array}{ll}
\log \frac{1-\tilde{\beta}_\theta}{\tilde{\alpha}}\to -\log \tilde{\alpha} & \text{if} \;\; u_n^\ell=1\\
\log \frac{\tilde{\beta}_\theta}{1-\tilde{\alpha}} \to \log\tilde{\beta}_{\theta}& \text{if} \;\; u_n^\ell=-1
\end{array}\right.
\end{align}
\begin{align}
\bar{S}_t=\sup_\theta\sum_{\ell=1}^L\sum_{n=1}^{N: NT\le t}\log \frac{p_\theta\lb u_{n}^\ell\rb}{p_0\lb u_{n}^\ell\rb}=\sup_\theta r_0^n \log \bar{\beta}_\theta -r_1^n \log \bar{\alpha}
\end{align}
which boils down to 
\begin{align}
\bar{S}_t= r_0^n \log \sup_\theta \bar{\beta}_\theta -r_1^n \log \bar{\alpha}=r_0^n\log \bar{\beta} - r_1^n\log \bar{\alpha}=\sum_{\ell=1}^L\sum_{n=1}^{N: NT\le t}\lb-\log \bar{\alpha}\rb\mathbbm{1}_{u_n^\ell=1}+\log \bar{\beta} \mathbbm{1}_{u_n^\ell=-1}
\end{align}
\begin{align}
&\E_\theta\lb\bar{s}_t\rb=\lb1-\bar{\beta}_\theta\rb \lb-\log \bar{\alpha}\rb+\bar{\beta}_\theta\log \bar{\beta}\sim -\log \bar{\alpha}\sim T\inf_\theta \mathcal{I}_0\lb\theta, \lambda\rb\\
&\E_0\lb \bar{s}_t\rb=\bar{\alpha} \lb-\log \bar{\alpha}\rb+\lb1-\bar{\alpha}\rb\log \bar{\beta}\sim \log \bar{\beta}\sim -T\inf_\theta \mathcal{I}_\theta\lb \theta, \lambda\rb
\end{align}
%\begin{align}
%\rho_\theta=\max_{\E_0(s)<t<\E_\theta(s)} \min\{\mathcal{I}_0(t), \mathcal{I}_\theta(-t)\}=\mathcal{I}_\theta(0)=\sup_z\lb -\log\E_\theta\lb e^{zs}\rb\rb
%\end{align}
\begin{align}
&\E_\theta\lb\T\rb\sim\frac{-\log\alpha}{\inf_\theta \mathcal{I}_0\lb\theta, \lambda\rb}\\
&\E_0\lb\T\rb\sim\frac{-\log\beta}{\inf_\theta \mathcal{I}_\theta\lb\theta, \lambda\rb}
\end{align}
We need to prove $\sup_\lambda\inf_\theta \mathcal{I}_0\lb\theta, \lambda\rb\le D\lb f_\theta||h_0\rb$ and $\sup_\lambda\inf_\theta \mathcal{I}_\theta\lb\theta, \lambda\rb\le D\lb h_0||f_\theta\rb$.}
}

\subsection{Decentralized GSPRT based on Level-Triggering Sampling}
Next, we develop a level-triggered sampling (LTS) scheme for the decentralized  sequential composite test. Here, each sensor runs its own local GSPRT and reports its local decision to the fusion center repeatedly. And a global GSPRT is performed by the fusion center based on the received local decisions from all sensors until a confident decision can be made. As opposed to the uniform sampling scheme, the LTS-based decentralized scheme features asynchronous one-bit communication between local sensors and the fusion center. The idea of running SPRTs at both the sensors and the fusion center was first proposed by \cite{Hussain94} for simple hypothesis test, and was further analyzed in \cite{Fellouris11,Yasin12}. In this work, we apply it to the sequential composite test. The essence of level-triggered sampling is to adaptively update local statistic to the fusion center, i.e., transmit messages only when sufficient information is accumulated, which results in substantially lower communication overhead and superior performance compared with the decentralized scheme based on uniform sampling and finite-bit quantization.  For the simple SPRT, level-triggered sampling is equivalent to Lebesgue sampling of local running LLR. However, since the LLR is not available in the composite case,  we obtain a different procedure than that in \cite{Hussain94,Fellouris11,Yasin12}. Nevertheless, our analysis shows that, in the asymptotic regime, our proposed procedure inherits the same optimality as for the simple test scenario. In the proposed LTS-GSPRT, each sensor employes a sequential procedure instead of a fixed-sample-size procedure. As we show in the following subsections, such a refinement greatly enhances the performance of decentralized detection and leads to the asymptotic optimality.
\subsubsection{LTS-based Approximate GSPRT}
Now we derive the LTS-based decentralized sequential composite testing algorithm. First let us determine the communication protocol and one-bit message at each sensor. Considering that sensors possess limited memory (i.e., scenario A in \cite{Veeravalli93}), every time a local decision is made and transmitted, the corresponding sensor refreshes its memory and runs another GSPRT based on newly arriving samples (Thus the fusion center receives i.i.d. information bits). Then the $n$th transmission time at sensor $\ell$ is a stopping time random variable recursively defined as 
\begin{align}\label{LTS-Sampling}
t^\ell_n \triangleq \inf \left\{t: \widetilde{S}^{t, \ell}_{t^\ell_{n-1}+1} \notin (-{b}, {a}) \right\}, \qquad n=1, 2, \ldots, \; t_0=0,
\end{align}
with 
\begin{align}\label{local_GLR}
\widetilde{S}_{k}^{t,\ell} \triangleq  {\sup_{\theta\in \Theta} \sum_{j=k}^t \log f_\theta(y_j^\ell)}-{\sup_{\gamma \in \Gamma}\sum_{j=k}^t \log h_\gamma(y_j^\ell)},
\end{align}
and ${a}, {b}$ are prefixed constants. Note that \eqref{LTS-Sampling} is equivalent to a local GSPRT at sensor $\ell$, thus different $\{{a}, {b}\}$ lead to different inter-communication period, or sampling frequency by the fusion center.
Correspondingly, the one-bit message amounts to the local decision, i.e., 
\begin{align}\label{LTS-message}
u^\ell_n\triangleq \left\{
\begin{array}{ll}
+1, & \text{if} \; \; \widetilde{S}^{t^\ell_n, \ell}_{t^\ell_{n-1}+1}\ge {a}\; ,\\
-1, & \text{if} \; \; \widetilde{S}^{t^\ell_n, \ell}_{t^\ell_{n-1}+1} \le -{b}\; .\\
\end{array}\right.
\end{align}
Intuitively, \eqref{LTS-Sampling}-\eqref{LTS-message} indicate that sensors run GSPRT repeatedly in parallel and their decisions are transmitted to the fusion center in an asynchronous fashion. Given the level-triggered sampling scheme at sensors, we proceed to define an approximation to the GLLR at the fusion center,
\begin{align}
\widetilde{V}_t=\sum_{\ell=1}^L \sum_{n=1}^{N^\ell_t}\lb{a}\mathbbm{1}_{\{u^\ell_{n}=1\}}-{b}\mathbbm{1}_{\{u^\ell_{n}=-1\}}\rb\label{eq:bS},
%\text{because}\quad
%&\tilde{v}^\ell_{n}\sim\left\{
%\begin{array}{ll}
%\ignore{\log \frac{1-\bar{\beta}}{\bar{\alpha}}\to} -\log \bar{\alpha}\sim {a}& \text{if}\;\; u_{n}^\ell=1,\\
%\ignore{\log \frac{\bar{\beta}}{1-\bar{\alpha}}\to} \log \bar{\beta} \sim -{b}& \text{if}\;\; u_{n}^\ell=-1;
%\end{array}\right.
\end{align}
where $N^{\ell}_t=\max\{n: t_n^{\ell}\leq t\}$.
The fusion center stops receiving messages at the stopping time
\begin{align}\label{eq:D-GSPRT}
\T_p\triangleq \inf\left\{t: \widetilde V_t\notin (-B, A)\right\},
\end{align}
and makes the decision
\begin{align}
&\delta_{\T_p}\triangleq \left\{
\begin{array}{ll}
1 & \text{if} \quad \widetilde V_{\T_p}\geq A,\\
0 & \text{if} \quad \widetilde V_{\T_p}\leq -B.
\end{array}\right.\label{D_GSPRT_d}
\end{align}
In effect, as we will see later, \eqref{eq:D-GSPRT} amounts to an approximation to the GSPRT at the fusion center based on the received one-bit messages $\left\{u_n^\ell\right\}$.
The proposed decentralized sequential composite test procedure based on level-triggered sampling  is summarized as Algorithm \ref{lts_alg}-\ref{center_alg}.
\begin{subalgorithms}
\begin{algorithm}
\caption{\bf : Repeated GSPRT at Local Sensors}
\begin{algorithmic}[1]\label{lts_alg}
\STATE Initialization: $t\leftarrow 0, t_s\leftarrow 1, \widetilde{S}^{\ell}\leftarrow 0$
\STATE {\bf while} $\widetilde{S}^{\ell}\in (-{{b}}, {{a}})$ {\bf do}
\STATE \quad $t \leftarrow t+1$ and take new sample $y^\ell_t$
\STATE \quad Compute $\widetilde{S}^\ell= \widetilde{S}_{t_s}^{t, \ell}$ according to \eqref{local_GLR}
\STATE {\bf end while}
\STATE $t_s\leftarrow t$
\STATE Send $u^{\ell}=\mathbbm{1}_{\{\widetilde{S}^\ell\ge {a}\}}-\mathbbm{1}_{\{\widetilde{S}^\ell\le -{b}\}}$ to the fusion center
\STATE Reset $\widetilde{S}^{\ell}\leftarrow 0$ and go to line 2.
\end{algorithmic}
\end{algorithm}
\begin{algorithm}
\caption{\bf : Global GSPRT at Fusion Center}
\begin{algorithmic}[1]\label{center_alg}
\STATE Initialization: $\widetilde{V}\leftarrow 0$
\STATE {\bf while} $-B<\widetilde{V}< A$ {\bf do}
%\STATE \quad $k\leftarrow k+1$
\STATE \quad Listen to the sensors and receive information bits, say, $r_0$ ``$+1$''s and $r_1$ ``$-1$''s
\STATE \quad $\widetilde{V}\leftarrow\widetilde{V}+r_1{a}-r_0{b}$
%\STATE \quad {\bf if} $r_1+r_2\neq 0$ \& $\overline{g}\leq 0$ {\bf then}\\
%\STATE \qquad $\overline{g}\leftarrow 0$
%\STATE \qquad broadcast the reset signal to all distributed agents \\
%\STATE \quad {\bf end if}
\STATE {\bf end while}
\STATE  {\bf if} $\widetilde{V}\ge A$ {\bf then} decide $\mathcal{H}_1$ \\
\STATE  {\bf else} decide $\mathcal{H}_0$
\end{algorithmic}
\end{algorithm}
\end{subalgorithms}
\subsubsection{A Closer Look at the LTS-based Approximate GSPRT} Next we discuss how Algorithm~\ref{lts_alg}-\ref{center_alg} approximates the optimal procedure, i.e., GSPRT, at the fusion center. 
The optimal rule at the fusion center is to compute the LLR of  the local GSPRT decisions, i.e., 
\begin{align}
&V_t(\gamma, \theta)=\sum_{\ell=1}^L \sum_{n=1}^{N^\ell_t}v^\ell_{n}(\gamma, \theta)\label{LTS_Bernoulli}\\
\text{and}\quad
&v^\ell_{n}(\gamma, \theta)\triangleq \left\{
\begin{array}{ll}
\log \frac{1-\widetilde{\beta}_\theta}{\widetilde{\alpha}_\gamma}& \text{if}\;\; u_{n}^\ell=1\;, \\
\log \frac{\widetilde{\beta}_\theta}{1-\widetilde{\alpha}_\gamma}& \text{if}\;\; u_{n}^\ell=-1\;,
\end{array}\right.\label{LTS_Bernoulli_LLR}
\end{align}
where $v^\ell_n(\gamma, \theta)$ is the LLR of the Bernoulli sample $y_n^\ell$, and $\widetilde{\alpha}_\gamma$ and $\widetilde{\beta}_\theta$ are the type-I and type-II error probabilities respectively at the local sensor, i.e., 
\begin{align}\label{eq:local_error}
\widetilde{\alpha}_\gamma\triangleq \Prob_\gamma(u_{n}^\ell=1), \quad \widetilde{\beta}_\theta\triangleq \Prob_\theta(u_{n}^\ell=-1).
\end{align}
Note that ${V}_t(\gamma, \theta)$ is again a function of the unknown parameters since the distribution of $u^\ell_n$ varies with $\gamma$ and $\theta$. To that end, employing the GSPRT as that in \eqref{C_GSPRT_s} and \eqref{q_GSPRT_st}, the original global stopping time is expressed as
\begin{align}\label{LTS_exact_GSPRT}
\inf\left\{t: \inf_{\gamma} \sup_{\theta} V_t(\gamma, \theta)\notin (-B, A)\right\}.
\end{align}
The global GSPRT involves solving the maximization in \eqref{LTS_exact_GSPRT} whenever a new message $u_{n}^\ell$ is received. However, unlike in the uniform sampling case, solving this optimization problem is no easy task since the distribution of $u^\ell_n$ as a function of $\theta$ and $\gamma$ is unclear. Aiming for a computationally feasible algorithm, we continue to simplify \eqref{LTS_exact_GSPRT} in what follows. Using \eqref{LTS_Bernoulli}-\eqref{LTS_Bernoulli_LLR}, 
\begin{align}
 \inf_\gamma\sup_\theta V_t(\gamma, \theta)\nonumber
 &=\inf_\gamma\sup_\theta\sum_{\ell=1}^L \sum_{n=1}^{N^\ell_t}\lb{\log \frac{1-\widetilde{\beta}_\theta}{\widetilde{\alpha}_\gamma}\mathbbm{1}_{\{u^\ell_{n}=1\}}+\log \frac{\widetilde{\beta}_\theta}{1-\widetilde{\alpha}_\gamma}\mathbbm{1}_{\{u^\ell_{n}=-1\}}}\rb\nonumber\\
 &\sim \inf_\gamma\sup_\theta\sum_{\ell=1}^L \sum_{n=1}^{N^\ell_t}\lb{-\log {\widetilde{\alpha}_\gamma}\mathbbm{1}_{\{u^\ell_{n}=1\}}+\log {\widetilde{\beta}_\theta}\mathbbm{1}_{\{u^\ell_{n}=-1\}}}\rb\nonumber\\&=\sum_{\ell=1}^L \sum_{n=1}^{N^\ell_t}\lb{-\sup_{\gamma\in \Gamma}\log {\widetilde{\alpha}_\gamma}\;\mathbbm{1}_{\{u^\ell_{n}=1\}}+\sup_{\theta\in \Theta}\log {\widetilde{\beta}_\theta}\;\mathbbm{1}_{\{u^\ell_{n}=-1\}}}\rb, \;\;\text{as} \;\; {a}, {b}\to\infty,\label{D-approxGLLR}
\end{align}
%{\color{blue}\begin{align}
%\{\theta^\star, \gamma^\star\}=\arg\sup_{\theta}\inf_{\gamma} \; r_0\log \frac{\widetilde{\beta}_{\theta}}{1-\widetilde{\alpha}_{\gamma}}+r_1 \log \frac{1-\widetilde{\beta}_{\theta}}{\widetilde{\alpha}_{\gamma}},
%\end{align}}
%with $r_0\triangleq \sum_{\ell=1}^L\sum_{n=1}^{N^\ell_t}\mathbbm{1}_{\{u_{n}^\ell=1\}}, \; r_1\triangleq \sum_{\ell=1}^L\sum_{n=1}^{N^\ell_t}\mathbbm{1}_{\{u_{n}^\ell=-1\}}, \; N^\ell_t\triangleq N^\ell: \; t^\ell_{N^\ell}\le t$. The asymptotic expression in \eqref{D-approxGLLR} holds true because $\widetilde{\alpha}_\gamma\le \sup_{\gamma}\widetilde{\alpha}_\gamma\sim e^{-{a}}, \;\;\widetilde{\beta}_\theta\le \sup_\theta \widetilde{\beta}_\theta \sim e^{-{b}}$ according to Proposition \ref{C_GSPRT_perf}, thus become negligible as ${a}, {b}$ increase. 
%Interestingly, we find that \eqref{D-approxGLLR} becomes 
%\begin{align}
%\theta^\star=\arg \sup_\theta \; r_0\log \widetilde{\beta}_\theta, \quad \text{as}\;\; {b}\to \infty,
%\end{align}
%{\color{blue} As $\widetilde{\alpha}_{\gamma}\to 0$ and $\widetilde{\beta}_{\theta}\to 0$, the optimal solution always boils down to
%\begin{align}
%\theta^\star=\arg \sup_{\theta} \widetilde{\beta}_{\theta}, \quad \gamma^\star=\arg \sup_{\gamma} \widetilde{\alpha}_{\gamma}
%\end{align}}
Therefore, denoting $\widetilde{\alpha}\triangleq \sup_\gamma \widetilde{\alpha}_\gamma, \widetilde{\beta}\triangleq \sup_\theta \widetilde{\beta}_\theta$, the global GLLR is approximately a simple random walk process
\begin{align}
%\widetilde{V}_t&\sim
\sum_{\ell=1}^L \sum_{n=1}^{N^\ell_t}\lb{-\log {\widetilde{\alpha}}\;\mathbbm{1}_{\{u^\ell_{n}=1\}}+\log {\widetilde{\beta}}\;\mathbbm{1}_{\{u^\ell_{n}=-1\}}}\rb
\sim\sum_{\ell=1}^L \sum_{n=1}^{N^\ell_t}\lb{a}\mathbbm{1}_{\{u^\ell_{n}=1\}}-{b}\mathbbm{1}_{\{u^\ell_{n}=-1\}}\rb ,
%\text{because}\quad
%&\tilde{v}^\ell_{n}\sim\left\{
%\begin{array}{ll}
%\ignore{\log \frac{1-\bar{\beta}}{\bar{\alpha}}\to} -\log \bar{\alpha}\sim {a}& \text{if}\;\; u_{n}^\ell=1,\\
%\ignore{\log \frac{\bar{\beta}}{1-\bar{\alpha}}\to} \log \bar{\beta} \sim -{b}& \text{if}\;\; u_{n}^\ell=-1;
%\end{array}\right.
\end{align}
due to Proposition~\ref{C_GSPRT_perf}. The above expression implies that the stochastic process $\widetilde{V}_t$ as defined in \eqref{eq:bS}
approximates the GLLR $\inf_\gamma\sup_\theta V_t(\gamma, \theta)$.

%Most importantly, it preserves the asymptotic optimality as the centralized GSPRT with less communication frequency and only one-bit messages. 
\subsection{Performance Analysis of the LTS-based Decentralized Scheme}
In this subsection, we show that the LTS-based decentralized scheme serves as a superior solution to the uniform-sampling-based scheme because it preserves the asymptotic optimality of the centralized scheme. This interesting property allows us to achieve the same centralized asymptotic performance, but consuming significantly lower communication resources. In particular, the expected sample size under the null and  alternative hypotheses are characterized asymptotically by the following theorem.
\begin{theorem}
In the asymptotic regime where ${b}, {a}\to \infty$ and $A/{a}, B/{b} \to \infty$, the expected sample sizes of LTS-GSPRT admit the following asymptotic expressions
\begin{align}
&\E_\gamma \lb\T_p\rb  \sim \frac{B}{\inf_\theta{D}\lb h_\gamma||f_\theta\rb L}+o\lb B\rb,\label{mean_size_0}\\
&\E_\theta \lb\T_p\rb \sim \frac{A}{\inf_\gamma{D}\lb f_\theta||h_\gamma\rb L}+o\lb A\rb.\label{mean_size_1}
\end{align}
\end{theorem}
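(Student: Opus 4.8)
The plan is to reduce the analysis of $\E_x(\T_p)$ to an application of Wald-type identities on the approximate global statistic $\widetilde{V}_t$, using the fact that each sensor's local GSPRT is itself characterized by Proposition~\ref{C_GSPRT_perf}. I would proceed in three stages: first characterize the local sensors' behavior (error probabilities and mean inter-communication times), then establish a random-walk/renewal description of $\widetilde{V}_t$ at the fusion center, and finally combine the two to compute the overall expected sample size, checking that the asymptotic regime $\mathbf{b},\mathbf{a}\to\infty$, $A/\mathbf{a},B/\mathbf{b}\to\infty$ makes all error and overshoot corrections negligible.

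\textbf{Step 1 (Local analysis).} Each sensor runs a GSPRT with thresholds $\mathbf{a},\mathbf{b}$ on i.i.d.\ blocks of samples. By Proposition~\ref{C_GSPRT_perf} applied with $L=1$ to a single block, the local error probabilities satisfy $\sup_\gamma\log\widetilde\alpha_\gamma\sim -\mathbf{a}$ and $\sup_\theta\log\widetilde\beta_\theta\sim -\mathbf{b}$, and the expected duration of one block is $\E_\gamma(t_1^\ell)\sim \mathbf{b}/\inf_\theta D(h_\gamma\|f_\theta)$ under $\mathcal H_0$ and $\E_\theta(t_1^\ell)\sim \mathbf{a}/\inf_\gamma D(f_\theta\|h_\gamma)$ under $\mathcal H_1$. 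I would also record that, under $\mathcal H_1$, a local block almost surely produces $u^\ell_n=+1$ (up to probability $\widetilde\beta_\theta\to 0$), and symmetrically under $\mathcal H_0$; so the message stream is, with overwhelming probability, a constant stream of $+1$'s (resp.\ $-1$'s).

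\textbf{Step 2 (Fusion-center analysis).} Work under $\mathcal H_1$ (the $\mathcal H_0$ case is symmetric). The process $\widetilde V_t$ increases by $\mathbf{a}$ at each arrival of a $+1$ and decreases by $\mathbf{b}$ at each (rare) $-1$. Let $N=N^{total}_{\T_p}$ be the total number of messages received across all sensors at the stopping time. Since $\widetilde V_{\T_p}\ge A$ with the correct decision w.h.p., and each $+1$ contributes $\mathbf{a}$, one has $N\cdot\mathbf{a}\approx A$ up to one overshoot term of size $\mathbf{a}$ and a negligible number of erroneous $-1$'s; because $A/\mathbf{a}\to\infty$, $N\sim A/\mathbf{a}$. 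The stopping time $\T_p$ is then essentially the time for $L$ parallel renewal streams, each with inter-renewal mean $\E_\theta(t_1^\ell)\sim \mathbf{a}/(\inf_\gamma D(f_\theta\|h_\gamma))$, to jointly deliver $N$ messages, i.e.\ $\T_p\approx (N/L)\cdot\E_\theta(t_1^\ell)$. Substituting gives
\begin{align}
\E_\theta(\T_p)\sim \frac{N}{L}\cdot\frac{\mathbf{a}}{\inf_\gamma D(f_\theta\|h_\gamma)}\sim\frac{A}{\mathbf{a}L}\cdot\frac{\mathbf{a}}{\inf_\gamma D(f_\theta\|h_\gamma)}=\frac{A}{\inf_\gamma D(f_\theta\|h_\gamma)L}.
\end{align}
Making this rigorous means invoking Wald's identity on $\widetilde V_{\T_p}$ (for the $N$--$A$ relation) and Wald's identity on the renewal sums $\sum_{n\le N}(t^\ell_n-t^\ell_{n-1})$ (for the $\T_p$--$N$ relation), plus a uniform control on overshoots of the local GSPRT so that $\E_\theta(t_1^\ell\mid u^\ell_1=-1)$ is not much larger than the unconditional mean; the $o(A)$ remainder absorbs the overshoot of $\widetilde V$, the contribution of erroneous local decisions, and the final incomplete block at each sensor (bounded in mean by $\E_\theta(t_1^\ell)=O(\mathbf{a})=o(A)$).

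\textbf{The main obstacle} is controlling the cross-coupling of the two limits $\mathbf{a}\to\infty$ and $A/\mathbf{a}\to\infty$ simultaneously: the local-level asymptotics from Proposition~\ref{C_GSPRT_perf} are statements ``$\sim$'' as $\mathbf{a},\mathbf{b}\to\infty$ with error terms that are $o(\mathbf{a})$ but not obviously uniform, yet they get multiplied by $N\sim A/\mathbf{a}$ messages, so a naive bound would give an error of order $N\cdot o(\mathbf{a})$, which is $o(A)$ only if the $o(\mathbf{a})$ is genuinely uniform. I would handle this by extracting from \cite{XLi14} (or re-deriving) an explicit rate, e.g.\ $\E_\theta(t_1^\ell)=\mathbf{a}/\inf_\gamma D(f_\theta\|h_\gamma)+O(\sqrt{\mathbf a})$ or $O(\log \mathbf a)$ with constants uniform over the compact parameter sets (using assumptions $A1$--$A4$), so that summing over the $N=O(A/\mathbf a)$ blocks yields total error $O(A/\sqrt{\mathbf a})=o(A)$. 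A secondary technical point is justifying that the number of ``wrong'' local decisions received before stopping is $o(A/\mathbf a)$ in expectation — this follows from $\widetilde\beta_\theta\le e^{-\mathbf b(1+o(1))}$ together with the renewal structure, but needs a clean argument (e.g.\ a Wald bound on the expected number of $-1$'s before $\widetilde V$ first exceeds $A$).
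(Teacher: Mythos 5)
Your proposal follows essentially the same route as the paper: characterize each local GSPRT block via Proposition~\ref{C_GSPRT_perf} (so that $\E_x(\tilde v_n^\ell)/\E_x(\tau_n^\ell)$ converges to $\pm$ the relevant infimized KL divergence), then link $\E_x(\T_p)$ to $A$ or $B$ through Wald-type identities on $\widetilde V_{\T_p}$ and on the inter-communication times, with all overshoot, incomplete-block, and erroneous-decision corrections of order $O({a})=o(A)$ (resp.\ $O({b})=o(B)$). The one technical point you should make explicit is that the standard Wald identity does not apply directly, since $N_\T^\ell+1$ is not a stopping time adapted to sensor $\ell$'s block sequence (the global stopping rule depends on all sensors' messages); the paper handles this with the extended Wald identity of \cite[Lemma 3]{Fellouris11}, which is exactly the ingredient your ``renewal'' step needs.
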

\begin{proof}
See Appendix A.
\end{proof}
Notably, as opposed to that of the uniform sampling scheme in \eqref{U_meansize_0}-\eqref{U_meansize_1}, the expected sample sizes of the proposed LTS-based decentralized scheme preserve the KL divergences between $f_\theta$ and $h_\gamma$ as the denominators. In fact, $\E_\gamma\lb\T_p\rb$ and $\E_\theta\lb\T_p\rb$ increase with $A$ and $B$ at the same rate as that of the centralized GSPRT (cf. \eqref{GSPRT_ARL}). We next proceed to relate the type-I and type-II error probabilities of the LTS-based decentralized scheme to the global decision thresholds $\{-B, A\}$ by the theorem below.
\begin{theorem}
In the asymptotic regime where ${b}, {a}\to \infty, \lim\sup {a}/{b}<\infty$, $\lim\sup {b}/{a}<\infty$ and $A/{a}, B/{b} \to \infty$, the type-I and type-II error probabilities of the LTS-GSPRT admit the following asymptotic expressions:
\begin{align}
&\sup_{\gamma\in \Gamma}\log \Prob_\gamma(\delta_{\T_p}=1) \sim -A , \label{type-I}\\
&\sup_{\theta\in \Theta}\log \;\Prob_{\theta}\left(\delta_{\T_p}=0\right)\sim -B.\label{type-II}
%\\&\alpha=\E_0\lb \mathbbm{1}_{\{\bar{S}_\T\le -B\}}\rb\sim \exp\lb-B\rb.
\end{align}
\end{theorem}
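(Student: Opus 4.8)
The plan is to show that the random-walk process $\widetilde V_t$ defined in \eqref{eq:bS} behaves, for error-probability purposes, like a genuine log-likelihood-ratio walk built out of the local one-bit decisions $u_n^\ell$, and then invoke the standard Wald-type change-of-measure / overshoot argument. First I would record the basic properties of the local GSPRT at each sensor: by Proposition~\ref{C_GSPRT_perf} applied to the single-sensor problem with thresholds $a,b$, the local error probabilities satisfy $\widetilde\alpha \triangleq \sup_\gamma\widetilde\alpha_\gamma$ with $\log\widetilde\alpha\sim -a$ and $\widetilde\beta\triangleq\sup_\theta\widetilde\beta_\theta$ with $\log\widetilde\beta\sim -b$ as $a,b\to\infty$. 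The increments of the true (but intractable) fusion-center LLR $V_t(\gamma,\theta)$ in \eqref{LTS_Bernoulli_LLR} are therefore asymptotically $+a$ on a ``$+1$'' message and $-b$ on a ``$-1$'' message, up to multiplicative $1+o(1)$ factors, uniformly in $\gamma,\theta$ (here the conditions $\limsup a/b<\infty$ and $\limsup b/a<\infty$ are what make these $o(1)$ corrections, when multiplied by the number of messages received before stopping, negligible relative to $A$ and $B$ — this is the reason those hypotheses appear in Theorem~3 but not Theorem~2).

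Next I would carry out the change-of-measure bound for the type-I error. Fix $\gamma\in\Gamma$. On the event $\{\delta_{\T_p}=1\}$ we have $\widetilde V_{\T_p}\ge A$, i.e. $r_1 a - r_0 b \ge A$ where $r_1,r_0$ are the total counts of ``$+1$'' and ``$-1$'' messages received up to $\T_p$. Using $a\sim -\log\widetilde\alpha$ and $b\sim -\log\widetilde\beta$, this translates into a lower bound on the true accumulated LLR of the $u_n^\ell$'s of the form $(1-o(1))A$. A standard argument — writing $\Prob_\gamma(\delta_{\T_p}=1)$ as an expectation under the alternative-side measure of $\exp(-\,\text{LLR})$, or equivalently applying Wald's likelihood-ratio identity to the i.i.d. Bernoulli sequence of local decisions at each sensor and summing — yields $\Prob_\gamma(\delta_{\T_p}=1)\le e^{-(1-o(1))A}$, hence $\log\Prob_\gamma(\delta_{\T_p}=1)\le -(1-o(1))A$; taking $\sup_\gamma$ preserves this because the local error exponents are controlled uniformly in $\gamma$ by compactness of $\Gamma$ (assumption $A3$) and Proposition~\ref{C_GSPRT_perf}. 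The matching lower bound $\log\sup_\gamma\Prob_\gamma(\delta_{\T_p}=1)\ge -(1+o(1))A$ comes from a one-step overshoot estimate: the walk $\widetilde V_t$ crosses $A$ with an overshoot that is $o(A)$ (the jump sizes $a,b$ are $o(A)$ by $A/a\to\infty$), so the crossing probability is bounded below by $e^{-A-o(A)}$ via the same likelihood-ratio identity run in the other direction. The type-II statement \eqref{type-II} is symmetric, swapping the roles of $(\Gamma,\gamma,a,\widetilde\alpha,A)$ with $(\Theta,\theta,b,\widetilde\beta,B)$.

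I expect the main obstacle to be making precise the step where $\widetilde V_t$ is substituted for the true GLLR $\inf_\gamma\sup_\theta V_t(\gamma,\theta)$ inside a probability bound. The heuristic in \eqref{D-approxGLLR} only gives an asymptotic equivalence of the \emph{drifts}; to get an error-probability bound one needs the likelihood-ratio martingale structure, which $\widetilde V_t$ does not literally possess (it is not $\log\,$(a true likelihood ratio)). The clean way around this is to never bound $\Prob_\gamma(\delta_{\T_p}=1)$ directly via $\widetilde V_t$, but rather: (i) translate the stopping event $\{\widetilde V_{\T_p}\ge A\}$ into an event on the counts $(r_0,r_1)$; (ii) bound the probability of that count-event using the genuine product-Bernoulli likelihood ratio $\prod_{\ell,n}\big(\Prob_\theta(u_n^\ell)/\Prob_\gamma(u_n^\ell)\big)$ and Wald's identity, since the $u_n^\ell$ are i.i.d.\ under each fixed $(\gamma,\theta)$; and (iii) absorb the discrepancy between $a$ (resp.\ $b$) and $-\log\widetilde\alpha_\gamma$ (resp.\ $-\log\widetilde\beta_\theta$) into the $o(1)$ error exponent, which is legitimate precisely because of the two-sided bound $\limsup a/b<\infty,\ \limsup b/a<\infty$ together with $A/a,B/b\to\infty$. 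A secondary technical point is uniformity over the parameter spaces: one must check that the $o(1)$ terms in $\log\widetilde\alpha_\gamma\sim -a$ are uniform in $\gamma$ (and similarly for $\theta$), which again follows from Proposition~\ref{C_GSPRT_perf} combined with compactness ($A3$) and the regularity conditions $A2$, $A4$. I would relegate the detailed overshoot and uniformity bookkeeping to the appendix, presenting only the change-of-measure skeleton in the main text.
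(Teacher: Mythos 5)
Your overall strategy -- replace $\widetilde V_t$ by something with a genuine likelihood-ratio (martingale) structure on the one-bit messages and then run a Wald/change-of-measure bound -- is the same strategy the paper uses, and you correctly identify the central obstacle (that $\widetilde V_t$ is not the log of a true likelihood ratio). The paper's resolution is slightly different from yours: rather than working with the true Bernoulli likelihood ratio $\prod_{\ell,n}\Prob_\theta(u_n^\ell)/\Prob_\gamma(u_n^\ell)$ (which drags an unknown $\theta$ along), it introduces two \emph{artificial} measures $\widetilde\Prob,\widetilde\ProbQ$ on the increments $Y_n=a\mathbbm{1}_{\{u_n=1\}}-b\mathbbm{1}_{\{u_n=-1\}}$, chosen so that $d\widetilde\Prob_{N^*}/d\widetilde\ProbQ_{N^*}=\exp(\sum_n Y_n)$ exactly; the discrepancy between the true law $\Prob_\gamma$ and $\widetilde\Prob$ is then controlled separately via $d\Prob_n/d\widetilde\Prob_n\le e^{o(a)n}$, using $\widetilde\alpha_\gamma\le e^{-(1+o(1))a}$. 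That is the same bookkeeping you describe in your step (iii), so this difference is cosmetic.

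There are, however, two concrete gaps. First, the multi-sensor structure: you treat the pooled stream of messages as a single i.i.d.\ walk and propose to apply Wald's identity "at each sensor and sum." The pooled sequence, ordered by arrival time, is \emph{not} i.i.d.\ (the random interleaving of sensors' messages is correlated with the message values through the local stopping times), and summing per-sensor Wald identities does not bound the probability that the \emph{sum} of the sensors' statistics crosses $A$. The paper handles this by bounding $\Prob_\gamma(\widetilde V_{\T_p}\ge A)\le\Prob_\gamma\bigl(\sum_\ell\sup_t\widetilde V_t^{(\ell)}\ge A\bigr)$ and then discretizing, over a grid of width $\varepsilon A$, how the threshold $A$ is shared among the independent per-sensor suprema; this reduces everything to a single-sensor tail bound $\Prob_\gamma(\sup_t\widetilde V_t^{(1)}\ge\rho A)\le e^{-(1+o(1))\rho A}$ uniform in $\rho\in[\varepsilon,1]$. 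Your proposal contains no substitute for this step. Second, your absorption of the $1+o(1)$ discrepancy between $a$ and $-\log\widetilde\alpha_\gamma$ into the exponent is asserted to be "negligible when multiplied by the number of messages received before stopping," but that number is random and a priori unbounded, so $e^{o(a)N^*}$ is not automatically $e^{o(A)}$. The paper must split on the crossing index $N^*$: for $N^*\le\kappa A/a$ the accumulated discrepancy is $o(A)$ directly, while for $N^*$ large it pays the discrepancy but gains a geometrically decaying binomial tail bound for the event that the tilted walk has not yet crossed. Without an argument of this type your step (iii) does not close. (Separately, note the paper's appendix only proves the upper bound on the error exponent; your sketched overshoot-based lower bound is a reasonable addition but is not what the paper does.)
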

\begin{proof}
See Appendix B.
\end{proof}
Combining \eqref{mean_size_0}-\eqref{type-II}, we arrive to the following conclusion on the asymptotic optimality of the proposed LTS-based decentralized algorithm.
 %\begin{align}
%\E_i\left(\bar{S}_\T\right)&=\sum_{\ell=1}^L\E_i\left(\sum_{n^\ell=1}^{N_\T^\ell}\bar{s}_{n^\ell}^\ell\right)\ge \sum_{\ell=1}^L\left(\E_i\left(N_\T^\ell\right)\E_i(\bar{s}^\ell_{n^\ell})-2M\right)
%\\&\to \E_i\T\sum_{\ell=1}^L\frac{\E_i(\bar{s}^\ell_{n^\ell})}{\E_i(\tau_\ell)}-2ML
%\end{align}
%\begin{align}
%\E_\theta\lb\widetilde{S}_\T\rb\sim A&\ge \E_\theta\T\sum_{\ell}^L\frac{\E_\theta(\bar{s}^\ell_{n^\ell})}{\E_\theta(\tau_\ell)}-2ML
%\nonumber\\&\to\E_\theta\T\sum_{\ell=1}^L\frac{{a}}{\E_\theta(\tau_\ell)}-2ML
%\end{align}
%\begin{align}
%\E_\theta\T\le \frac{A}{\sum_{\ell=1}^L D_\theta\lb f_1||h_0\rb}+O(1)=\frac{-\log \alpha}{\sum_{\ell=1}^L D_\theta\lb f_1||h_0\rb}+O(1)
%\end{align}
%\begin{align}
%\E_0\lb\widetilde{S}_\T\rb\sim -B&\ge \E_0\T\sum_{\ell}^L\frac{\E_0(\bar{s}^\ell_{n^\ell})}{\E_0(T_\ell)}-2ML
%\nonumber\\&=\E_0\T\sum_{\ell=1}^L\frac{{b}}{\E_0(T_\ell)}-2ML
%\end{align}
\begin{corollary}
Let $\mathcal{T}_d^L\lb\alpha, \beta\rb$ be the class of any $L$-sensor decentralized sequential tests, of which the type-I and type-II error probabilities are bounded by $\alpha$ and $\beta$ respectively. Then the proposed LTS-based GSPRT $\{\T_p, \delta_{\T_p}\}$ is asymptotically optimal within this class, i.e., 
\begin{align}
\E_x\lb\T_p\rb\sim \inf_{\{\T, \delta\}\in \mathcal{T}_d^L\lb\alpha, \beta\rb} \E_x\lb\T \rb, \qquad x\in \Gamma\cup\Theta,
\end{align}
as $\alpha\triangleq \sup_\gamma\Prob_\gamma\lb\delta_{\T_p}=1\rb\to 0$ and $\beta\triangleq \sup_\theta\Prob_\theta\lb\delta_{\T_p}=0\rb\to 0$.
\end{corollary}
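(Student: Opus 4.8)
The plan is to derive the corollary as a ``sandwich'' between a converse bound inherited from the centralized problem and the achievability bound already furnished by the two theorems above, so that no new estimation is needed beyond a careful choice of thresholds.

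First I would record the inclusion $\mathcal{T}_d^L\lb\alpha,\beta\rb\subseteq\mathcal{T}_c^L\lb\alpha,\beta\rb$, where $\mathcal{T}_c^L$ is the centralized class of Section~III: any decentralized test is realizable at a fusion center that observes the full-precision samples $\{y_t^\ell\}$, since such a center can itself form the local one-bit messages $u_n^\ell$ and run the decentralized decision rule, with the same error probabilities. Hence $\inf_{\{\T,\delta\}\in\mathcal{T}_c^L\lb\alpha,\beta\rb}\E_x\lb\T\rb\le\inf_{\{\T,\delta\}\in\mathcal{T}_d^L\lb\alpha,\beta\rb}\E_x\lb\T\rb$. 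By Proposition~\ref{C_GSPRT_perf} and \cite[Corollary 2.1]{XLi14} the left-hand side is asymptotically attained by the centralized GSPRT, and eliminating $A,B$ between \eqref{GSPRT_ERR} and \eqref{GSPRT_ARL} in favour of $\alpha,\beta$ yields the converse $\inf_{\{\T,\delta\}\in\mathcal{T}_d^L\lb\alpha,\beta\rb}\E_\gamma\lb\T\rb\ge(1+o(1))\,\frac{-\log\beta}{\inf_\theta D\lb h_\gamma||f_\theta\rb L}$, and symmetrically $\inf_{\{\T,\delta\}\in\mathcal{T}_d^L\lb\alpha,\beta\rb}\E_\theta\lb\T\rb\ge(1+o(1))\,\frac{-\log\alpha}{\inf_\gamma D\lb f_\theta||h_\gamma\rb L}$, as $\alpha,\beta\to0$.

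For the matching achievability I would tune the LTS-GSPRT thresholds to the target error levels. Let the local thresholds ${a}={b}\to\infty$ slowly, e.g.\ ${a}={b}=\log\log\lb1/\max\{\alpha,\beta\}\rb$, and let $A=A(\alpha)$, $B=B(\beta)$ be the smallest global thresholds for which $\sup_\gamma\Prob_\gamma\lb\delta_{\T_p}=1\rb\le\alpha$ and $\sup_\theta\Prob_\theta\lb\delta_{\T_p}=0\rb\le\beta$. By \eqref{type-I}--\eqref{type-II} these satisfy $A(\alpha)\sim-\log\alpha$ and $B(\beta)\sim-\log\beta$, so $A/{a}\to\infty$, $B/{b}\to\infty$ and $\limsup {a}/{b}=\limsup {b}/{a}=1$, i.e.\ the hypotheses of both theorems hold. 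With this choice $\{\T_p,\delta_{\T_p}\}\in\mathcal{T}_d^L\lb\alpha,\beta\rb$, and \eqref{mean_size_0}--\eqref{mean_size_1} give $\E_\gamma\lb\T_p\rb\sim B(\beta)/\lb\inf_\theta D\lb h_\gamma||f_\theta\rb L\rb\sim(-\log\beta)/\lb\inf_\theta D\lb h_\gamma||f_\theta\rb L\rb$ and likewise $\E_\theta\lb\T_p\rb\sim(-\log\alpha)/\lb\inf_\gamma D\lb f_\theta||h_\gamma\rb L\rb$. Since $\{\T_p,\delta_{\T_p}\}$ lies in the class, $\E_x\lb\T_p\rb\ge\inf_{\{\T,\delta\}\in\mathcal{T}_d^L\lb\alpha,\beta\rb}\E_x\lb\T\rb$, and combining this with the converse above gives, for $x\in\{\gamma,\theta\}$,
\[ \E_x\lb\T_p\rb\;\ge\;\inf_{\{\T,\delta\}\in\mathcal{T}_d^L\lb\alpha,\beta\rb}\E_x\lb\T\rb\;\ge\;(1+o(1))\,\E_x\lb\T_p\rb, \]
whence $\E_x\lb\T_p\rb\sim\inf_{\{\T,\delta\}\in\mathcal{T}_d^L\lb\alpha,\beta\rb}\E_x\lb\T\rb$ for every $x\in\Gamma\cup\Theta$, which is the assertion.

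The one step I expect to require care is the claim that the thresholds $A(\alpha),B(\beta)$ chosen to meet the constraints \emph{exactly} at finite $\alpha,\beta$ still obey $A(\alpha)\sim-\log\alpha$ and $B(\beta)\sim-\log\beta$, so that the multiplicative $o(1)$ in the error-exponent relations \eqref{type-I}--\eqref{type-II} does not inflate the sample-size estimates; this is true because that $o(1)$ vanishes as the thresholds grow, so solving, say, $\sup_\theta\log\Prob_\theta\lb\delta_{\T_p}=0\rb=\log\beta$ for $B$ returns $B(\beta)=(1+o(1))(-\log\beta)$. The remaining ingredients---the inclusion $\mathcal{T}_d^L\subseteq\mathcal{T}_c^L$, the centralized converse of \cite{XLi14}, and the two theorems already proved---are then assembled as above, so the corollary is essentially a bookkeeping consequence rather than a genuinely new result.
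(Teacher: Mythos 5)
Your proposal is correct and follows essentially the same route as the paper: express $\E_x\lb\T_p\rb$ in terms of $-\log\alpha,-\log\beta$ via Theorems 1--2, note that it matches the centralized GSPRT's asymptotics, and sandwich $\inf_{\mathcal{T}_d^L}\E_x\lb\T\rb$ between $\E_x\lb\T_p\rb$ and the centralized lower bound $\inf_{\mathcal{T}_c^L}\E_x\lb\T\rb\sim\E_x\lb\T_c\rb$ using the inclusion $\mathcal{T}_d^L\subseteq\mathcal{T}_c^L$. Your only addition is the explicit choice of local thresholds (e.g.\ ${a}={b}=\log\log(1/\max\{\alpha,\beta\})$) guaranteeing the regime $A/{a},B/{b}\to\infty$ with $\limsup {a}/{b}<\infty$, a bookkeeping point the paper leaves implicit.
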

\begin{proof}
Given the same error probabilities $\alpha=\sup_\gamma\Prob_\gamma\lb \delta_{\T_c}=1\rb=\sup_\gamma\Prob_\gamma\lb \delta_{\T_p}=1\rb$ and  $\beta=\sup_\theta\Prob_\theta\lb \delta_{\T_c}=0\rb=\sup_\theta\Prob_\theta\lb \delta_{\T_p}=0\rb$, the expected sample sizes of the centralized and LTS-based decentralized scheme $\T_p$ admit the following asymptotic performance, as $\alpha, \beta \to \infty$:
\begin{align}
&\E_\gamma\T_c\sim \E_\gamma\T_p\sim\frac{-\log \beta}{\inf_\theta D\lb h_\gamma||f_\theta\rb L},\label{asym_c}\\&\E_\theta\T_c\sim \E_\theta\T_p\sim \frac{-\log \alpha}{\inf_\gamma D\lb f_\theta ||h_\gamma\rb L}. \label{asym_d}
%&\E_\gamma\T_p\sim\frac{-\log \beta}{\inf_\theta D\lb h_\gamma||f_\theta\rb L}+o(|\log \beta|),\quad \E_\theta\T_p\sim \frac{-\log \alpha}{\inf_\gamma D\lb f_\theta ||h_\gamma\rb L}+o(|\log \alpha|).
\end{align}
These expressions suggest that the LTS-based decentralized scheme inherits the asymptotic performance of the centralized GSPRT. As a result, it is also safe to say that LTS-GSPRT is asymptotically optimal among the decentralized schemes that satisfy the same error rate constraints, since
\begin{align}
1\le \frac{\E_x\T_p}{\inf_{\{\T, \delta\}\in \mathcal{T}_L^d(\alpha, \beta)}\E_x\T}&\le \frac{\E_x\T_p}{\inf_{\{\T, \delta\}\in \mathcal{T}_L^c(\alpha, \beta)}\E_x\T}\nonumber\\&=\frac{\E_x\T_p}{\E_x\T_c}\frac{\E_x\T_c}{\inf_{\{\T, \delta\}\in \mathcal{T}_L^c(\alpha, \beta)}\E_x\T}\sim 1+o_{\alpha, \beta}(1).
\end{align}
The second inequality holds true by noting that no decentralized scheme can outperform the centralized one because less information is available at the fusion center. The last asymptotic relation is obtained by using \eqref{asym_c}-\eqref{asym_d} and the conclusion in Proposition 1, i.e., $\E_x\T_c\sim \inf_{\{\T, \delta\}\in \mathcal{T}_L^c(\alpha, \beta)}\E_x\T$.
\end{proof}
%Note that \eqref{asym_c}-\eqref{asym_d} imply that the LTS-based decentralized scheme behaves at the same order of magnitude as the centralized scheme. Surprisingly, this is achieved at the cost of much lower communication overhead, i.e., sensors only send one-bit messages to the fusion center asynchronously. 

Recall that, for the simple null versus simple alternative hypothesis test, where the SPRT is optimal, the centralized and LTS-based decentralized SPRT (denoted as $\tau_c$ and $\tau_p$ respectively) provide the following asymptotic performance \cite{Yasin12,Yasin13}:
\begin{align}
&\E_\gamma\tau_c\sim \E_\gamma\tau_p\sim\frac{-\log \beta}{D\lb h_\gamma||f_\theta\rb L},\label{s_asym_c}\\&\E_\theta\tau_c\sim \E_\theta\tau_p\sim \frac{-\log \alpha}{D\lb f_\theta ||h_\gamma\rb L}, \label{s_asym_d}
\end{align} 
where $\alpha\triangleq \Prob_\gamma\lb\delta=1\rb, \; \beta\triangleq \Prob_\theta\lb\delta=0\rb$. Compared to the simple test where parameter values are given, the proposed sequential composite test requires larger expected sample sizes under both hypotheses (since the expected sample sizes are inversely proportional to $\inf_{\gamma\in \Gamma}D\lb f_\theta||h_\gamma\rb$, $\inf_{\theta\in \Theta}D\lb h_\gamma||f_\theta\rb$ instead of $D\lb f_\theta||h_\gamma\rb$, $D\lb h_\gamma||f_\theta\rb$, as seen by comparing \eqref{asym_c}-\eqref{asym_d} and \eqref{s_asym_c}-\eqref{s_asym_d}). This is the price we pay for not knowing the exact parameters.
%Moreover, as in the simple testing case, the LTS scheme preserves the optimality of the optimal centralized scheme in the decentralized setup. 

%For example, \cite{Boettcher86} studied an interesting application of multi-agent human decision making mechanism. A set of agents receives detailed information in the regional market, and report a summary to the supervisor who aims to make the final decision quickly. As human ability of handling large amount of information is limited, reporting every detailed information is hardly feasible and hence the decentralized sequential procedure becomes relevant.

\section{Numerical Results}
There are a wide range of applications where the decentralized sequential composite hypothesis test plays an important role. In this section, we apply the proposed centralized and decentralized sequential tests to two examples: one is to detect the mean shift of Gaussian random samples; and the other involves spectrum sensing in cognitive radio systems. 

%All  rest on the constraint that sensors communicate with fusion center by one-bit every $T_0$ samples, deterministically or on average. 
\subsection{Mean Shift of Gaussian Random Samples}
Detecting the mean shift of Gaussian random samples has many applications. For example, suppose we intend to detect the presence of a unknown parameter $\theta$ as soon as possible in the environment contaminated by white Gaussian noise. Here $\theta$ could be the energy of an object that is monitored by a wireless sensor network or a multi-station radar system. The target parameter is assumed to be within a certain interval, i.e., $\theta\in [\theta_0, \theta_1], \theta_0>0$. Then we have an $L$-sensor hypothesis testing problem:
\begin{align}
\begin{array}{ll}
\mathcal{H}_0: & y_t^\ell=e_t^\ell, \quad \ell\in {\cal L},\; t=1, 2, \ldots\\
\mathcal{H}_1: & y_t^\ell=\theta+e_t^\ell,\quad 0<\theta_0\le \theta\le \theta_1,\quad \ell\in {\cal L}, \; t=1, 2, \ldots
\end{array}
\end{align}
where $e_t^\ell\sim\mathcal{N}(0,\sigma^2)$. Sensors are able to transmit one-bit every $T_0$ sampling instants on average. For this model, both $f_\theta$ and $h_\gamma$ are Gaussian probability density functions and $\gamma=0$. The sufficient statistic of the $j$th to $k$th samples at sensor $\ell$ is their summation, denoted as $\phi_j^{k, \ell}=\mathcal{S}_{j}^{k, \ell}\triangleq\sum_{i=j}^k y_i^{\ell}$. First of all, we verify that the log likelihood ratio of $y_t^\ell$, i.e., 
\begin{align}
S(\gamma, \theta)=\left({\left(\theta-\gamma\right)}y_t^\ell-\frac{\theta^2}{2}+\frac{\gamma^2}{2}\right)/\sigma^2
\end{align}
satisfies the conditions $A1$-$A4$. While conditions $A2$-$A3$ are easily verified, conditions $A1$ and $A4$ require the following check: 
\begin{itemize}
\item The KL divergence admits
$D\lb f_\theta||h_\gamma\rb=D\lb h_\gamma || f_\theta\rb={\lb\theta-\gamma\rb^2}/{(2\sigma^2)}$.
By choosing $0<\varepsilon<\frac{\theta_0^2}{2\sigma^2}$, we have $D(f_\theta||h_0)=\frac{\theta^2}{2\sigma^2}>\varepsilon$ and $\inf_{\theta_0\le\theta\le\theta_1} D(h_0||f_\theta)=\frac{\theta_0^2}{2\sigma^2}>\varepsilon$;
\item For \eqref{A3a}, let $x>x_0\ge\frac{\theta_1-\theta_0}{2\sigma^2}$, then we have
\begin{align}
&\Prob_\gamma\lb \sup_{\theta_0\le\theta\le\theta_1} |\nabla_{\theta}S(\theta, \gamma)| >x\rb \nonumber\\=& \Prob_\gamma\lb \sup_{\theta_0\le\theta\le\theta_1} |y_t^\ell-\theta|>x\sigma^2 \rb \nonumber\\=&\Prob_\gamma\lb |y_t^\ell-\theta_0|>x\sigma^2; \; y_t^\ell\ge \frac{\theta_0+\theta_1}{2}\rb+\Prob_\gamma\lb |y_t^\ell-\theta_1|>x\sigma^2;\; y_t^\ell< \frac{\theta_0+\theta_1}{2}\rb\nonumber\\=&\Prob_\gamma\lb y_t^\ell>x\sigma^2+\theta_0\rb+\Prob_\gamma\lb y_t^\ell<-x\sigma^2+\theta_1\rb\nonumber\\=&\Phi\lb-\frac{x\sigma^2+\theta_0-\gamma}{\sigma}\rb+\Phi\lb\frac{-x\sigma^2+\theta_1-\gamma}{\sigma}\rb
\end{align}
Note that $\Phi\lb {-x}\rb\sim e^{-x^2}$ for large $x$, hence we can always find a sufficiently large $x_0\ge \frac{\theta_1-\theta_0}{2\sigma^2}$ such that $x^2>|\log x|^\eta$ , or equivalently, $\Prob_\gamma\lb \sup_{\theta_0\le\theta\le\theta_1} |\nabla_{\theta}S(\theta, \gamma)| >x \rb\le e^{-|\log x|^\eta}$ for $x>x_0, \eta>1$. Similarly, we can show that \eqref{A3b} holds as well.
\end{itemize}
Therefore, Proposition 1 and Theorems 1-2 can be applied to characterize the asymptotic performance of the centralized GSPRT and LTS-based GSPRT for the  problem under consideration.

To implement the centralized GSPRT in \eqref{C_GSPRT_s}-\eqref{C_GSPRT_d}, the global GLLR at the fusion center is computed as
\begin{align}\label{example1_GLLR}
\widetilde{S}_{j}^{k}&=\sup_{\theta\ge \theta_0} \lb\theta \sum_{\ell=1}^L\mathcal{S}^{k, \ell}_{j}-L\lb k-j+1\rb\frac{\theta^2}{2}\rb/\sigma^2\nonumber\\&=\lb\hat{\theta}_j^k\sum_{\ell=1}^L\mathcal{S}_j^{k, \ell}-L\lb k-j+1\rb\frac{\lb\hat{\theta}_j^k\rb^2}{2}\rb/\sigma^2,
\end{align}
with $\hat{\theta}^k_j=\mathcal{E}\lb \sum_{\ell=1}^L\mathcal{S}_{j}^{k, \ell}/\lb k-j+1\rb/L, \theta_0, \theta_1\rb$, and  
\begin{align}
\mathcal{E}(x, \theta_0, \theta_1)\triangleq \left\{
\begin{array}{ll}
x, & \text{if}\; x\in [\theta_0, \theta_1],\\
\theta_1, & \text{if}\; x>\theta_1,\\
\theta_0, & \text{if}\; x<\theta_0.
\end{array}\right.
\end{align}
Substituting $\widetilde{S}_t$ in \eqref{C_GSPRT_s}-\eqref{C_GSPRT_d} with $\widetilde{S}_1^t$ computed by \eqref{example1_GLLR}  gives the centralized GSPRT (C-GSPRT). %\subsubsection{Level-Triggered Sampling}

For the LTS-based GSPRT (LTS-GSPRT), note that the parameter MLE at sensor $\ell$ based on the $j$th to $k$th samples is straightforwardly computed as 
$\hat{\theta}_{j}^{k, \ell}=\mathcal{E}\lb\mathcal{S}_{j}^{k, \ell}/\lb k-j+1\rb, \theta_0, \theta_1\rb$,
which leads to the local GLLR statistic at sensor $\ell$:
\begin{align}\label{example_LLR}
\widetilde{S}^{t, \ell}_{j}=\lb\hat{\theta}_{j}^k\mathcal{S}_{j}^{k, \ell}-(k-j+1)\frac{\lb \hat{\theta}_{j}^{k, \ell}\rb^2}{2}\rb/\sigma^2.
\end{align}
Substituting \eqref{example_LLR} into \eqref{LTS-Sampling} and \eqref{local_GLR}, the LTS-GSPRT can be implemented according to Algorithm 1a-1b. 
\ignore{In addition, the KL divergences that determine the performance of both C-GSPRT and LTS-GSPRT are given by
\begin{align}
&D\lb f_\theta||h_0\rb=\frac{\theta^2}{2\sigma^2}\,, 
\quad \inf_{\theta\ge \theta_0}D\lb h_0||f_\theta\rb=\frac{\theta_0^2}{2\sigma^2}\,.
%D\lb h_0||f_\theta\rb=\inf_\theta \frac{\theta^2}{2}
\end{align}}

%\subsubsection{Uniform Sampling and Quantized Statistics}
To implement the uniform sampling based GSPRT (U-GSPRT), we quantize the sufficient statistics $\mathcal{S}_{(n-1)T_0+1}^{nT_0,\ell}$ at the $n$th transmission period at local sensors by
\begin{align}\label{example_quantizer}
q_n^\ell=\text{sign}\left(\mathcal{S}_{(n-1)T_0+1}^{nT_0,\ell}-\lambda\right). 
\end{align}
%\begin{align}
%G_t&=\sum_{n=1}^t\sum_{\ell=1}^L \log \frac{\Prob_\theta(\phi^\ell_t)}{\Prob_0(\phi_t^\ell)}\nonumber\\&=r_0\log\frac{\Prob_\theta(\phi^\ell_t=0)}{\Prob_0(\phi_t^\ell=0)}+r_1\log\frac{\Prob_\theta(\phi^\ell_t=1)}{\Prob_0(\phi_t^\ell=1)}
%\end{align}
Given the threshold $\lambda$, and the distribution of statistic 
\begin{align}
\mathcal{S}_{(n-1)T_0+1}^{nT_0,\ell}\sim\left\{
\begin{array}{ll}
\mathcal{N}\lb 0, \sigma^2T_0 \rb&\text{under}\;\;\mathcal{H}_0,\\
\mathcal{N}\lb \theta T_0, \sigma^2T_0 \rb&\text{under}\;\;\mathcal{H}_1,
\end{array}\right.
\end{align}
we have the distribution of Bernoulli samples as
\begin{align}
\Prob_x\lb q_n^\ell=1\rb=p^{T_0}_x(\lambda)=1-\Phi\lb\frac{\lambda-xT_0}{\sigma\sqrt{T_0}}\rb, \quad x\in \{0, [\theta_0, \theta_1]\}.
\end{align}
%and LLR up to $n$th transmission period is computed as
%\begin{align}
%&G_{t: nT_0\le t<(n+1)T_0}= r_0^n \log \frac{1-p_\theta(\lambda)}{1-p_0(\lambda)}+r_1^n \log \frac{p_\theta(\lambda)}{p_0(\lambda)}.
%\end{align}
Again we first verify that the log likelihood ratio of $q_n^\ell$, i.e.,
\begin{align}\label{U_LLR_1}
S_u(\theta, \gamma)=q_n^\ell \log \frac{p_\theta^{T_0}(\lambda)}{p_\gamma^{T_0}(\lambda)}+\lb1-q_n^\ell\rb \log \frac{1-p_\theta^{T_0}(\lambda)}{1-p_\gamma^{T_0}(\lambda)}
\end{align}
satisfies conditions $A1$-$A4$. Specifically, $A2$-$A3$ is easy to verify, and we check $A1$ and $A4$ as follows: 
\begin{itemize}
\item {Since $p_\theta^{T_0}\neq p_\gamma^{T_0}$ for all $\theta_0\le \theta\le \theta_1$ and $\gamma=0$, it is guaranteed that $D\left(p_\theta^{T_0}||p_0^{T_0}\right)$ and thus $\inf_\theta D\left(p_0^{T_0}||p_\theta^{T_0}\right)$ are positive, and there exists an $\varepsilon>0$ such that $D\left(p_\theta^{T_0}||p_0^{T_0}\right)>\varepsilon$ and $\inf_\theta D\left(p_0^{T_0}||p_\theta^{T_0}\right)>\varepsilon$.}
\item To verify \eqref{A3a} for $S_u\lb\gamma, \theta\rb$, we have
\begin{align}\label{A3a_q}
&\Prob_\gamma\lb \sup_{\theta_0\le\theta\le\theta_1} |\nabla_{\theta}S_u(\theta, \gamma)| >x\rb \nonumber\\=&\Prob_\gamma\lb \sup_{\theta_0\le\theta\le\theta_1}\left|\frac{q_n^\ell}{p_\theta^{T_0}}-\frac{1-q_n^\ell}{1-p_\theta^{T_0}}\right|\frac{\partial p_\theta^{T_0}}{\partial \theta}>x\rb\nonumber\\=&\Prob_\gamma\lb \sup_{\theta_0\le\theta\le\theta_1}\left|\frac{q_n^\ell-p_\theta^{T_0}}{p_\theta^{T_0}\lb1-p_\theta^{T_0}\rb}\right|\frac{\partial p_\theta^{T_0}}{\partial \theta}>x\rb\nonumber\\=&\Prob_\gamma\lb\sup_{\theta_0\le\theta\le\theta_1}\left|\frac{q_n^\ell-p_\theta^{T_0}}{p_\theta^{T_0}\lb1-p_\theta^{T_0}\rb}\right|\frac{\partial p_\theta^{T_0}}{\partial \theta}>x; q_n^\ell=1\rb\nonumber\\&\quad+\Prob_\gamma\lb\sup_{\theta_0\le\theta\le\theta_1}\left|\frac{q_n^\ell-p_\theta^{T_0}}{p_\theta^{T_0}\lb1-p_\theta^{T_0}\rb}\right|\frac{\partial p_\theta^{T_0}}{\partial \theta}>x; q_n^\ell=0\rb\nonumber\\=&p_\gamma^{T_0}\mathbbm{1}_{\left\{ \sup_{\theta_0\le\theta\le\theta_1} {\frac{\partial p_\theta^{T_0}}{\partial \theta}}/{p_\theta^{T_0}}> x\right\}} 
+\lb1-p_\gamma^{T_0}\rb\mathbbm{1}_{\left\{ \sup_{\theta_0\le\theta\le\theta_1} {\frac{\partial p_\theta^{T_0}}{\partial \theta}}/{(1-p_\theta^{T_0})}> x \right\}}, \end{align}
Note that $\frac{\partial p_\theta^{T_0}}{\partial \theta}=\frac{\sqrt{T_0}}{\sqrt{2\pi}\sigma}\exp\lb -{(\lambda-\theta T_0)^2}/{(2\sigma^2T_0)}\rb\le \frac{\sqrt{T_0}}{\sqrt{2\pi}\sigma}$, and $0<p_{\theta_0}^{T_0}\le p_\theta^{T_0}\le p_{\theta_1}^{T_0}<1$, which lead to $$\sup_{\theta_0\le\theta\le\theta_1} {\frac{\partial p_\theta^{T_0}}{\partial \theta}}/{p_\theta^{T_0}}\le \frac{\sqrt{T_0}}{\sqrt{2\pi}\sigma}\frac{1}{p_{\theta_0}^{T_0}}\;\;\;\text{and}\;\; \sup_{\theta_0\le\theta\le\theta_1} {\frac{\partial p_\theta^{T_0}}{\partial \theta}}/{(1-p_\theta^{T_0})}\le \frac{\sqrt{T_0}}{\sqrt{2\pi}\sigma}\frac{1}{1-p^{T_0}_{\theta_1}}.$$ Hence, by letting $x_0=\max\left\{\frac{\sqrt{T_0}}{\sqrt{2\pi}\sigma}\frac{1}{p_{\theta_0}^{T_0}}, \frac{\sqrt{T_0}}{\sqrt{2\pi}\sigma}\frac{1}{1-p_{\theta_1}^{T_0}}\right\}$, we have $\Prob_\gamma\lb \sup_{\theta} |\nabla_{\theta}S_u(\theta, \gamma)| >x\rb=0<e^{-|\log x|^\eta}$ all $x>x_0, \eta>1$. Similarly, condition \eqref{A3b} holds as well. 
\end{itemize}
As a result, the performance of U-GSPRT can be characterized asymptotically by \eqref{U_error_perf}-\eqref{U_meansize_1}.
%that the quantizer \eqref{example_quantizer} satisfies the condition $A1$-$A4$ given in Section III and 
%need to prove that condition $A4$ is satisfied by \eqref{quantization}-\eqref{LLR_q}. To that end, we have
\ignore{\begin{align}
&\Prob_\theta\lb \left|\frac{q_n^\ell}{p_\gamma^{T_0}}-\frac{1-q_n^\ell}{1-p_\theta^{T_0}}\right|\frac{\partial p_\gamma^{T_0}}{\partial \gamma}>x\rb=\Prob_\theta\lb \sup_{\theta} \frac{1}{p_\theta^{T_0}}>\epsilon x\rb p_0^{T_0}
+\Prob_\theta\lb \sup_{\theta} \frac{1}{1-p_\theta^{T_0}}>\epsilon x\rb \lb1-p_0^{T_0}\rb, \end{align}}
%Thus \eqref{A3a}-\eqref{A3b} are satisfied, and Proposition \ref{C_GSPRT_perf} can be applied.
\ignore{Note that $p_\gamma^{T_0}$ and $p^{T_0}_\theta$ in this case are strictly increasing functions of the unknown parameter $\gamma$ and $\theta$ respectively, i.e., $\frac{\partial p^q_\gamma}{\partial \gamma}>0$ and $\frac{\partial p^q_\theta}{\partial \theta}>0$. Then we have %need to prove that condition $A4$ is satisfied by \eqref{quantization}-\eqref{LLR_q}. To that end, we have
\begin{align}\label{A3a_q}
&\Prob_\gamma\lb \sup_{\theta}\left|\frac{q_n^\ell}{p_\theta^{T_0}}-\frac{1-q_n^\ell}{1-p_\theta^{T_0}}\right|\frac{\partial p_\theta^{T_0}}{\partial \theta}>x\rb\nonumber\\=&\Prob_\gamma\lb \sup_{\theta}\left|\frac{q_n^\ell-p_\theta^{T_0}}{p_\theta^{T_0}\lb1-p_\theta^{T_0}\rb}\right|\frac{\partial p_\theta^{T_0}}{\partial \theta}>x\rb\nonumber\\=&\Prob_\gamma\lb\sup_{\theta}\left|\frac{q_n^\ell-p_\theta^{T_0}}{p_\theta^{T_0}\lb1-p_\theta^{T_0}\rb}\right|\frac{\partial p_\theta^{T_0}}{\partial \theta}>x; q_n^\ell=1\rb+\Prob_\gamma\lb\sup_{\theta}\left|\frac{q_n^\ell-p_\theta^{T_0}}{p_\theta^{T_0}\lb1-p_\theta^{T_0}\rb}\right|\frac{\partial p_\theta^{T_0}}{\partial \theta}>x; q_n^\ell=0\rb\nonumber\\=&\Prob_\gamma\lb \sup_{p_\theta^{T_0}} \frac{1}{p_\theta^{T_0}}>\epsilon x\rb p_\gamma^{T_0}
+\Prob_\gamma\lb \sup_{p_\theta^{T_0}} \frac{1}{1-p_\theta^{T_0}}>\epsilon x\rb \lb1-p_\gamma^{T_0}\rb, \end{align}
where $\epsilon=1/\lb\frac{\partial p_\theta^{T_0}}{\partial \theta}\rb>0$. Note that $0<p_\theta^{T_0}<1$, thus we can always find sufficiently large $x_0$ such that for all $x>x_0$, both terms in \eqref{A3a_q} vanish. Similarly, we can prove the same for \eqref{A3b}. Thus \eqref{A3a}-\eqref{A3b} are satisfied, and Proposition \ref{C_GSPRT_perf} can be applied.
%\end{proof}
}

Next, we solve for the constrained MLE of the unknown parameter up to $n$th transmssion period:
\begin{align}\label{ex1_MLE}
\hat{\theta}_n&=
\arg \max_{\theta\ge \theta_0} \quad r_0^n \log \lb1-p^{T_0}_\theta(\lambda)\rb+r_1^n \log p^{T_0}_\theta\lb\lambda\rb\nonumber\\
&=\arg \max_{\theta\ge \theta_0}\quad r_0^n\log{\Phi\lb\frac{\lambda-\theta T_0}{\sigma\sqrt{T_0}}\rb}+r^n_1\log{\left(1-\Phi\lb\frac{\lambda-\theta T_0}{\sigma\sqrt{T_0}}\rb\right)},
\end{align}
where $r_0^n$ and $r_1^n$ represent the number of received ``$-1$'' and ``$+1$'' respectively among the first received $n$ bits.  By noting that the objective in \eqref{ex1_MLE} is a concave function of $\theta$, we can invoke the optimality condition and find the MLE as $\hat{\theta}_n=\mathcal{E}\left(\lambda-\Phi^{-1}\left(\frac{r_0^n}{r^n_0+r^n_1}\right)\sigma/\sqrt{T_0}, \theta_0, \theta_1\right)$.
\ignore{\begin{align}
\left(\frac{r^n_0}{\Phi\left(\frac{\lambda-\theta T_0}{\sigma\sqrt{T_0}}\right)}-\frac{r^n_1}{1-\Phi\left(\frac{\lambda-\theta T_0}{\sigma\sqrt{T_0}}\right)}\right)\frac{\partial \Phi\left(\frac{\lambda-\theta T_0}{\sigma\sqrt{T_0}}\right)}{\partial \theta}=\lambda/T_0\ge 0,\quad \lambda \left(\theta_0-\theta\right)=0, \quad \theta_0\le \theta\le \theta_1,
\end{align}} 
%Hence, the value of $\hat{\theta}_n$ can be updated efficiently online.
%\begin{proof}

In the simulation experiment, we set the algorithm parameters as follows. The noise variance is normalized as one, i.e, $\sigma^2=1$. The parameter interval is $\theta\in [0.4, 2]$. The U-GSPRT is implemented in two settings, i.e., the inter-communication period $T_0=10$ and $T_0=1$ respectively. The expected inter-communication period for the level-triggered sampling scheme is fixed as approximately $\E T_0\approx 10$ by adjusting the local thresholds $\{a, b\}$. In both cases, the binary quantizer in  the minimax sense, i.e., the threshold that solves \eqref{opt_la}, is found to be $\lambda/T_0\approx 0.32$. 
\ignore{
\begin{table*}
\centering
\caption{}
  \begin{tabular}{ | c | c | c | c | c | }
    \hline
    \phantom{1} & $D\lb f_\theta||h_0\rb$ & $D\lb h_0||f_\theta\rb$ & $D\lb p^{T_0}_\theta||p^{T_0}_0\rb/T_0$ & $D\lb p^{T_0}_0||p^{T_0}_\theta\rb/T_0$ \\ \hline
    %Dataset \#1 & $[0.45, 0.3, 0.25]$ & $1.8$ & $[0.1, 0.4, 0.5]$ & $2.4$ \\ \hline
    %Dataset \#2 & $[0.1, 0.4, 0.5]$ & $2.4$ & $[0.45, 0.3, 0.25]$ & $1.8$\\ \hline
    $T_0=1$ & $[0.6, 0.2, 0.2]$ & $1.6$ & $[0.1, 0.4, 0.5]$  & $2.4$  \\ \hline
    $T_0=10$ & $[0.1, 0.4, 0.5]$  & $2.4$ & $[0.6, 0.2, 0.2]$ & $1.6$ \\ \hline
   \end{tabular}
\end{table*}
}
\begin{figure}
\centering
{\includegraphics[width=0.99\textwidth]{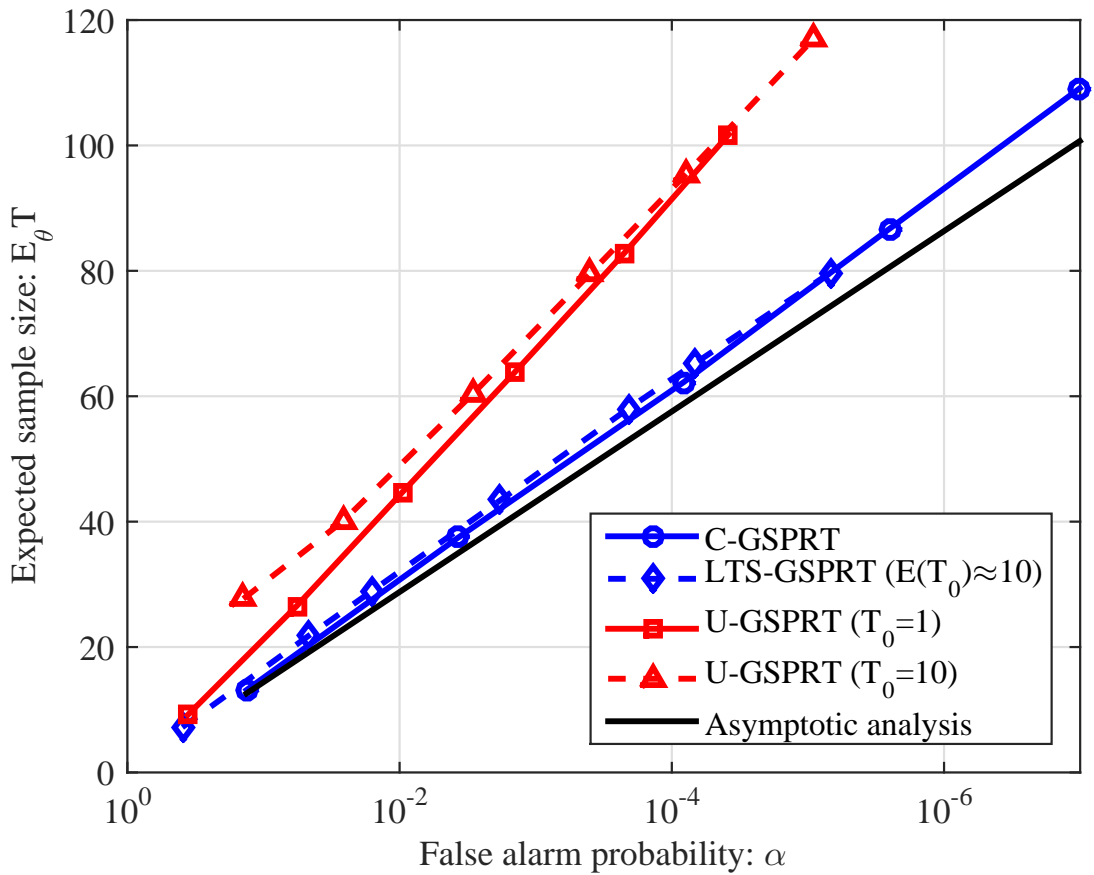}}
\caption{Expected samples versus false alarm probability $\alpha$.
%${a}={b}=1.25$, $\sup_\theta\beta_\theta \approx 10^{-4}$, $\E_\theta (T_0)\approx9.7, \E_0\lb T_0\rb\approx10.2$ 
}\label{fig1}
\end{figure}
\begin{figure}
\centering
{\includegraphics[width=0.99\textwidth]{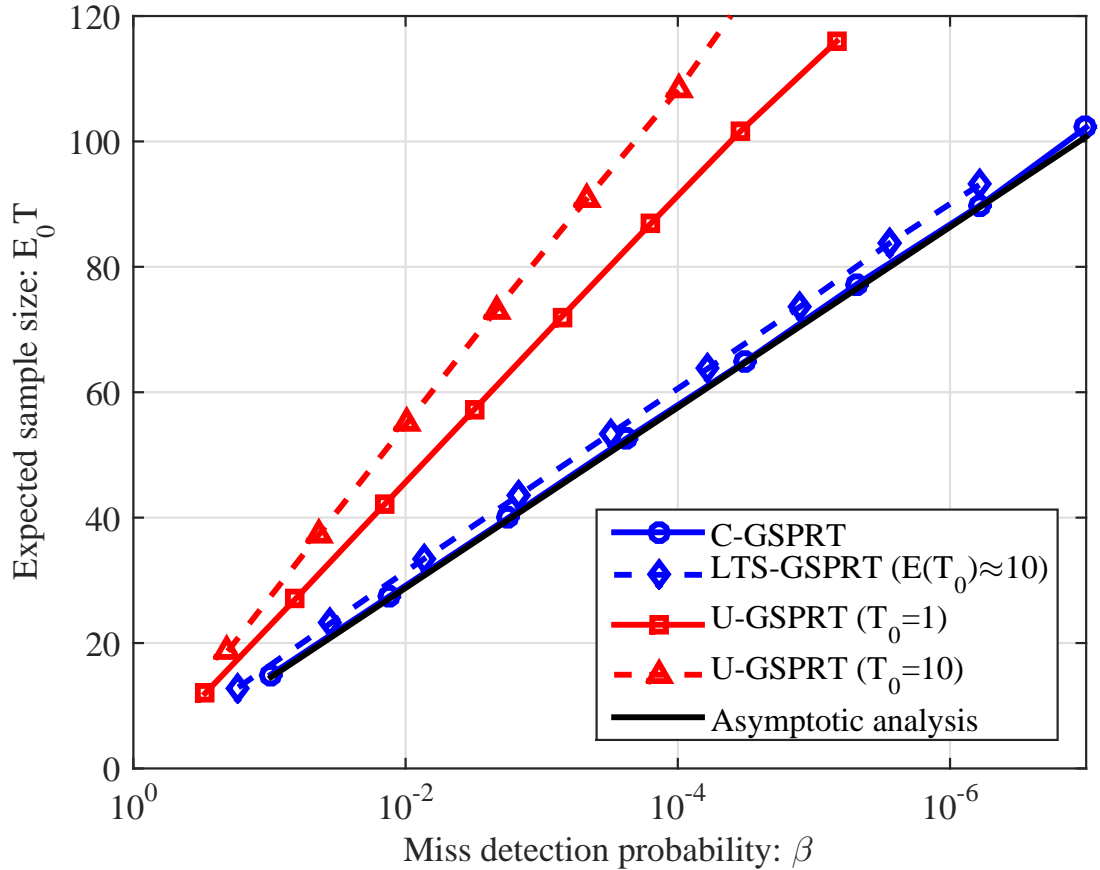}}
\caption{
%$L=2$, ${a}={b}=1.25$, $\beta\approx 10^{-4}$, $\theta=0.4$, $\E_\theta (T_0)\approx9.7, \E_0\lb T_0\rb\approx10.2$ 
Expected sample size versus miss detection probability $\beta$.}\label{fig2}
\end{figure}
\begin{figure}
\centering
{\includegraphics[width=0.99\textwidth]{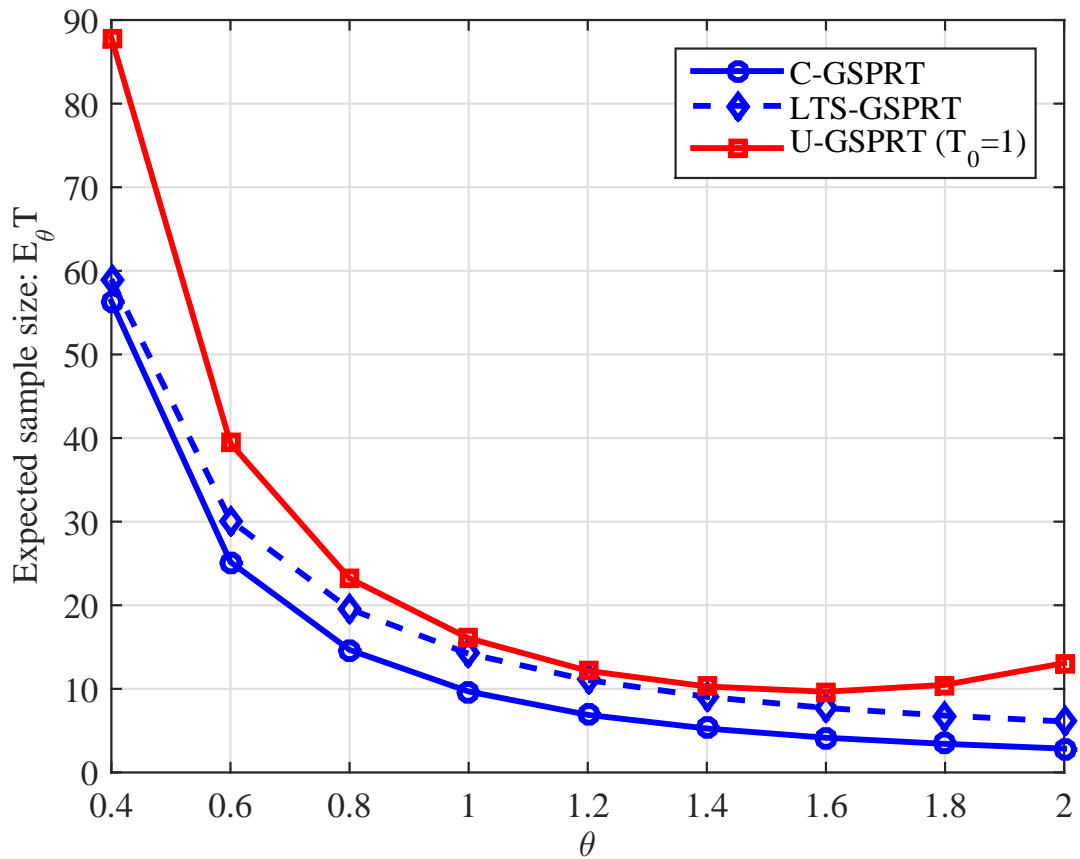}}
%{\includegraphics[width=0.86\textwidth]{Figures/VaryingSNR_efficiency.eps}}
\caption{%$L=2$, ${a}={b}=1.25$, $\beta\approx 10^{-4}$, $\alpha\approx 2\times 10^{-4}$, $\E_\theta (T_0)\approx 9.8\to 1.5, \E_0\lb T_0\rb\approx10.3$ 
Expected sample size versus varying parameter values.}\label{fig3}
\end{figure}
\begin{figure}
\centering
{\includegraphics[width=0.96\textwidth]{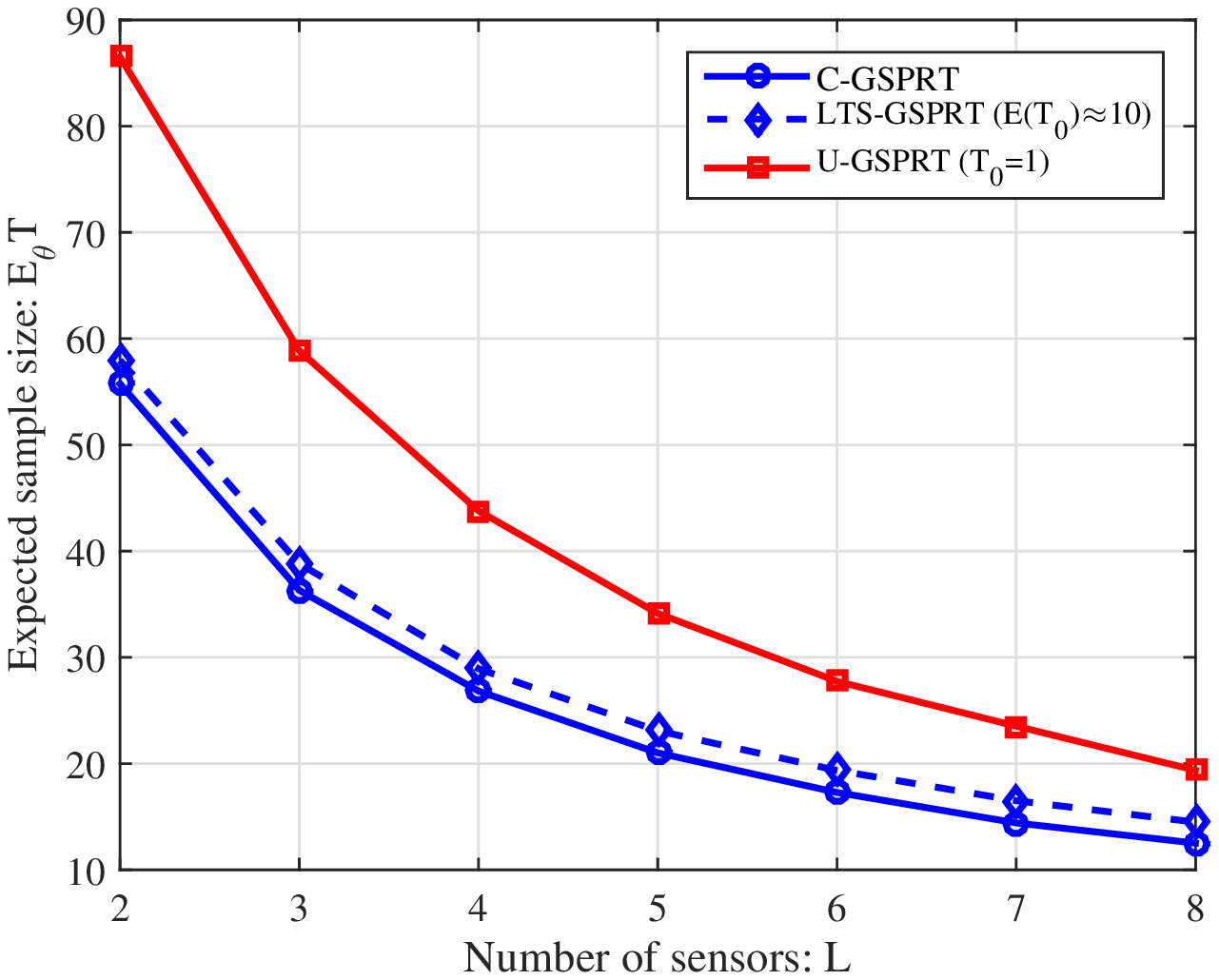}}
%{\includegraphics[width=0.86\textwidth]{Figures/VaryingSNR_efficiency.eps}}
\caption{Expected sample size versus number of sensors.
%$L=2$, $\lambda=0.32*T_0$, ${a}={b}=1.25$, $\beta\approx 10^{-4}$, $\alpha\approx 2\times 10^{-4}$, $\E_\theta (T_0)\approx 9.8, \E_0\lb T_0\rb\approx10.3$ 
}\label{fig4}
\end{figure}

In Figs. \ref{fig1}-\ref{fig2}, the performances of C-GSPRT, U-GSPRT and LTS-GSPRT are examined based on a two-sensor system. Specifically, Fig. \ref{fig1} depicts the expected sample size under the alternative hypothesis (with $\theta=0.4$) as a function of the false alarm probability, with the miss detection probability equal to $\beta\approx 10^{-4}$. Fig. \ref{fig2} depicts the expected sample size under the null hypothesis as a function of the miss detection probability, with the false alarm probability equal to $\alpha\approx 10^{-4}$. In these two figures, the black solid lines correspond to the following asymptotic formulas respectively (cf.  \eqref{asym_c}-\eqref{asym_d} without the $o(\cdot)$ terms), $$\E_\theta \T =\frac{-\log\alpha}{D\lb f_\theta||h_0\rb L}=\frac{-\log\alpha}{\theta^2L/2}, \quad \E_0 \T =\frac{-\log\beta}{\inf_\theta D\lb h_0||f_\theta\rb L}=\frac{-\log\beta}{\theta_0^2 L/2}.$$ {Note that since the true parameter in the experiment is $\theta_0=0.4$, $\inf_\theta D(h_0||f_\theta)=D(h_0||f_{\theta})$, the black-solid lines in Figs. \ref{fig1}-\ref{fig2} also correspond to the performance of SPRT for the simple null versus simple alternative test.} As expected, both C-GSPRT and LTS-GSPRT align closely with the asymptotic analysis. Notably, LTS-GSPRT only sacrifices a fractional sample-size compared to C-GSPRT while yielding substantially lower overhead through low-frequency one-bit communication. Figs. \ref{fig1}-\ref{fig2} also clearly show  that U-GSPRT diverges from C-GSPRT and LTS-GSPRT by an order of magnitude due to the smaller value of the KL divergence (i.e., $D\lb f_\theta||h_0\rb=0.08>D\lb p^{10}_\theta||p^{10}_0\rb/{10}\approx D\lb p^{1}_\theta||p^{1}_0\rb\approx 0.051$ and $\inf_\theta D\lb h_0||f_\theta\rb=0.08 >\inf_\theta D\lb p^{1}_0||p^{1}_\theta\rb\approx 0.050>\inf_\theta D\lb p^{10}_0||p^{10}_\theta\rb/{10}\approx 0.042$). Note that we also plot the performance of U-GSPRT for $T_0=1$ that corresponds to a binary quantization at every instant. It is seen that even with ten times more frequent communication to the fusion center, U-GSPRT is still outperformed by LTS-GSPRT substantially. 

%$D\lb f_\theta||f_0\rb=\inf_\theta D\lb f_0||f_\theta\rb=0.08$. Note that the KL divergence for uniform sampling scheme is equal to $D\lb p^{10}_\theta||p^{10}_0\rb/{10}\approx D\lb p^{1}_\theta||p^{1}_0\rb\approx 0.051$, $D\lb p^{10}_0||p^{10}_\theta\rb/{10}\approx 0.042$ and $D\lb p^{1}_0||p^{1}_\theta\rb\approx 0.050$.

Fig. \ref{fig3} illustrates the performances of C-GSPRT, U-GSPRT, LTS-GSPRT for varying parameter values. Note that all algorithms are implemented without this  knowledge, hence this figure shows how they adapt to different parameter values, which is a critical performance indicator for composite test. The error probabilities are fixed at $\alpha\approx 2\times 10^{-4}, \beta\approx 10^{-4}$. As $\theta$ varies from $0.4$ to $2$, the fusion center samples faster from the sensors, i.e., $\E_\theta (T_0)\approx 10\to 1.5$, due to the embedded adaptive mechanism. Meanwhile, U-GSPRT with the best time resolution $T_0=1$ is examined. 
%and the one at the bottom compare the pitman efficiency of decentralized schemes, i.e., U-GSPRT and LTS-GSPRT on the benchmark of C-GSPRT. 
%Note that the pitman efficiency between two sequential procedures, e.g., $\T_c$ and $\T_p$  is defined as 
%\begin{align}
%R\triangleq \frac{\E\T_c}{\E\T_p}.
%\end{align}
It is clearly shown in Fig. \ref{fig3} that LTS-GSPRT is able to align with C-GSPRT closely and consistently outperforms U-GSPRT over all parameter values. Again, LTS-GSPRT results in the lowest communication overhead among these three tests. 

Fig. \ref{fig4} further examines the centralized and decentralized algorithms under  different number of sensors. The error probabilities are fixed at $\alpha\approx 2\times 10^{-4}, \beta\approx 10^{-4}$. Clearly, using more sensors brings down the sample size given a target accuracy. It is seen that, for a reasonable number of sensors in practice, e.g., eight sensors, LTS-GSPRT stays close to the centralized scheme and consistently exhibits smaller sample size compared to the uniform sampling based decentralized scheme. 

\subsection{Collaborative Sequential Spectrum Sensing}
In this subsection, we consider the collaborative sequential spectrum sensing in cognitive radio systems. To cope with the ever-growing number of mobile devices and the scarce spectrum resource, the emerging cognitive radio systems enable the  secondary users to quickly identify the idle frequency band for opportunistic communications. Moreover,  secondary users can collaborate to increase their spectrum sensing speed. Specifically, if the target frequency band is occupied by a primary user, the received signal by the $\ell$th secondary user can be written as 
\begin{align}
y_t^\ell=h_t^\ell s_t+e_t^\ell
\end{align}
where $h_t^\ell\sim \mathcal{N}\lb 0, 1 \rb$ is the normalized fading channel gain between the primary user and the $\ell$th secondary user, independent of the noise $e_t^\ell$ and $s_t$ is the unknown signal transmitted by the primary user with energy $\E|s_t|^2$; otherwise if the target frequency band is available, secondary users only receive noise. To this end, the sequential spectrum sensing can be modelled as the following composite hypothesis testing problem \cite{Segura10}:
 \begin{align}
\begin{array}{ll}
\mathcal{H}_0: & y_t^\ell\sim \mathcal{N}\lb 0, \gamma\rb, \quad 0<\gamma_0\le \gamma\le \gamma_1,\quad \ell\in {\cal L},\; t=1, 2, \ldots,\\
\mathcal{H}_1: & y_t^\ell\sim \mathcal{N}\lb 0, \theta\rb,\quad \gamma_1<\theta_0\le \theta\le \theta_1\quad \ell\in {\cal L}, \; t=1, 2, \ldots,
\end{array}
\end{align}
where the parameter intervals $[\gamma_0, \gamma_1]$ and $[\theta_0,\theta_1]$ are prescribed by practitioners. 
%Here $\rho$ represents the energy level of the signals transmitted by the primary user and is of interest only if it is higher than a tolerant level $\rho_0\;$.  

We begin by verifying that the log-likelihood ratio of $y_t^\ell$, i.e., 
\begin{align}
S(\gamma, \theta)=\frac{1}{2}\lb\frac{|y_t^\ell|^2}{\gamma}-\frac{|y_t^\ell|^2}{\theta}\rb+\frac{1}{2}\log\frac{\gamma}{\theta},
%\\\log \frac{\frac{1}{\theta^{1/2}}\exp\lb -\frac{1}{2}\frac{|y_t^\ell|^2}{\theta}\rb}{\frac{1}{\gamma^{1/2}}\exp\lb -\frac{1}{2}\frac{|y_t^\ell|^2}{\gamma}\rb}
\end{align}
satisfies the conditions $A1$-$A4$. While conditions $A2$-$A3$ are easily verified, conditions $A1$ and $A4$ can be checked as follows: 
\begin{itemize}
\item The KL divergences admit
\begin{align*}
&D\lb f_\theta||h_\gamma\rb=\frac{1}{2}\left(\frac{\theta}{\gamma}-1\right)+\frac{1}{2}\log \frac{\gamma}{\theta},\\
\text{and}\quad &D\lb h_\gamma || f_\theta\rb=\frac{1}{2}\left(\frac{\gamma}{\theta}-1\right)+\frac{1}{2}\log \frac{\theta}{\gamma},
\end{align*}
which are both decreasing functions of $\gamma$ and increasing functions of $\theta$.
Let $0<\varepsilon <\min\{D\lb h_{\gamma_1}||f_{\theta_0}\rb, D\lb h_{\gamma_1}||f_{\theta_0}\rb\}$, we have $\inf_{\gamma_0\le \gamma \le \gamma_1}D\lb f_\theta||h_\gamma\rb\ge  D\lb f_{\theta_0}||h_{\gamma_1}\rb> \varepsilon$ and $\inf_{\theta_0\le\theta\le\theta_1} D(h_\gamma||f_\theta)\ge D\lb h_{\gamma_1}||f_{\theta_0}\rb>\varepsilon$;
\item For \eqref{A3a}, let $x>\frac{1}{2\theta_0}>0$, then we have
\begin{align}
&\Prob_\gamma\lb \sup_{\theta_0\le\theta\le\theta_1} |\nabla_{\theta}S(\theta, \gamma)| >x\rb 
%\nonumber\\=& \Prob_\gamma\lb \sup_{\theta_0\le\theta\le\theta_1} \left|\frac{1}{2\theta}\lb\frac{(y_t^\ell)^2}{\theta}-1\rb\right|>x\rb 
\nonumber\\=& \Prob_\gamma\lb \sup_{\theta_0\le\theta\le\theta_1} \frac{1}{2\theta^2}\left|{(y_t^\ell)^2}-\theta\right|>x\rb 
\nonumber\\\le & \Prob_\gamma\lb \sup_{\theta_0\le\theta\le\theta_1} \max\{\frac{1}{2\theta},\frac{(y_t^\ell)^2}{2\theta^2}\}>x\rb 
\nonumber\\= & \Prob_\gamma\lb  \frac{(y_t^\ell)^2}{2\theta_0^2}>x\rb\label{condition3_1}
\\= & 2\Phi\lb \frac{-\sqrt{2x}\theta_0}{\sqrt{\gamma}} \rb,  
%\nonumber\\=&\Prob_\gamma\lb \sup_{\theta_0\le\theta\le\theta_1}\frac{1}{2\theta}\lb \frac{(y_t^\ell)^2}{\theta}-1\rb; \; (y_t^\ell)^2\ge \theta\rb+\Prob_\gamma\lb \sup_{\theta_0\le\theta\le\theta_1}\frac{1}{2\theta}\lb 1-\frac{(y_t^\ell)^2}{\theta}\rb; \; (y_t^\ell)^2< \theta\rb
%\nonumber\\=&\Prob_\gamma\lb \frac{(y_t^\ell)^2}{2\theta_0^2}-\frac{1}{2\theta_0}>x\rb+\Prob_\gamma\lb \frac{1}{8(y_t^\ell)^2}>x\rb
%\nonumber\\=&\Phi\lb-\frac{x\sigma^2+\theta_0-\gamma}{\sigma}\rb+\Phi\lb\frac{-x\sigma^2+\theta_1-\gamma}{\sigma}\rb
\end{align}
where the inequality holds because $(y_t^\ell)^2\ge 0, \theta>0$ and $|\{(y_t^\ell)^2\}-\theta|\le \max\{(y_t^\ell)^2, \theta\}$, and \eqref{condition3_1} holds because $x>\frac{1}{2\theta_0}$. Again, since $\Phi(-\sqrt{2x}\theta_0/\sqrt{\gamma}) \sim e^{-x\theta_0^2/\gamma}$, we can always find a sufficiently large $x_0$ such that $x>|\log x|^\eta$ , or equivalently, $\Prob_\gamma\lb \sup_{\theta_0\le\theta\le\theta_1} |\nabla_{\theta}S(\theta, \gamma)| >x \rb\le e^{-|\log x|^\eta}$ for $x>x_0, \eta>1$. Similarly, we can show that \eqref{A3b} holds as well.
\end{itemize}

With $A1$-$A4$ satisfied, we proceed to employ the centralized and LTS-based  GSPRTs to solve the collaborative sequential spectrum sensing problem, which can be characterized asymptotically by Proposition 1 and Theorems 1-2. Particularly, the centralized LLR at the fusion center is evaluated as
\begin{align}
S_j^{k}(\gamma,\theta)&=\log \frac{\frac{1}{\theta^{L(k-j+1)/2}}\exp\lb -\frac{1}{2}\sum_{\ell=1}^L\sum_{t=j}^k\frac{|y_t^\ell|^2}{\theta}\rb}{\frac{1}{\gamma^{L(k-j+1)/2}}\exp\lb -\frac{1}{2}\sum_{\ell=1}^L\sum_{t=j}^k\frac{|y_t^\ell|^2}{\gamma}\rb}\nonumber\\&=\lb\frac{1}{2\gamma}-\frac{1}{2\theta}\rb\mathcal{W}_j^{k}+\frac{L(k-j+1)}{2}\log \frac{\gamma}{\theta}, \quad \mathcal{W}_j^{k}\triangleq \sum_{\ell=1}^L\sum_{t=j}^k |y_t^\ell|^2.
\end{align}
%It is easy to check that $S_j^k(\sigma^2, \rho^2)$ is concave for $\rho^2\le 2\mathcal{W}_j^k/\lb k-j+1\rb/L$ and is strictly decreasing for $\rho^2> 2\mathcal{W}_j^k/\lb k-j+1 \rb/L$. 
As such, the centralized MLE of the unknown parameters $\gamma$ and $\theta$ are easily obtained as
$\hat{\gamma}_j^k=\mathcal{E}\left(\mathcal{W}_j^k/\lb k-j+1\rb/L, \gamma_0, \gamma_1\right)$ and $\hat{\theta}_j^k=\mathcal{E}\left(\mathcal{W}_j^k/\lb k-j+1\rb/L, \theta_0, \theta_1\right)$. Then the centralized GSPRT given by \eqref{C_GSPRT_s}-\eqref{C_GSPRT_d} can be implemented based on the GLLR $\widetilde{S}_j^k= S_j^k\lb\hat{\gamma}_j^k, \hat{\theta}_j^k\rb$. In order to implement LTS-based GSPRT, the local LLR at sensor $\ell$ is 
\begin{align}\label{exa_2_localLLR}
S_j^{k, \ell}(\gamma, \theta)&=\lb\frac{1}{2\gamma}-\frac{1}{2\theta}\rb\mathcal{W}_j^{k, \ell}+\frac{k-j+1}{2}\log \frac{\gamma}{\theta}, \quad \mathcal{W}_j^{k, \ell}\triangleq \sum_{t=j}^k |y_t^\ell|^2.
\end{align}
Substituting $\hat{\gamma}_j^{k, \ell}=\mathcal{E}\left(\mathcal{W}_j^{k, \ell}/\lb k-j+1\rb, \gamma_0, \gamma_1\right)$ and $\hat{\theta}_j^{k, \ell}=\mathcal{E}\left(\mathcal{W}_j^{k, \ell}/\lb k-j+1\rb, \theta_0, \theta_1\right)$ into \eqref{exa_2_localLLR} gives local GLLR $\widetilde{S}_j^{k, \ell}(\hat{\gamma}_j^{k,\ell}, \hat{\theta}_j^{k,\ell})$, which is further plugged into  \eqref{LTS-Sampling}-\eqref{local_GLR} to run the LTS-GSPRT $\T_p$. 
%As such, the KL divergences between the null and alternative distributions admit
%\begin{align}
%&D\lb f_\rho || h_0 \rb=\frac{1}{2}\lb\frac{\rho}{\sigma^2}+\log \frac{\sigma^2}{\rho+\sigma^2}\rb,\\&\inf_\rho D\lb h_0||f_\rho\rb=\frac{1}{2}\lb\log\frac{\rho_0+\sigma^2}{\sigma^2}-\frac{\rho_0}{\rho_0+\sigma^2}\rb\;,
%\end{align}
%by noting that $D\lb f_0||f_\rho\rb$ is an increasing function of $\rho$.
To realize U-GSPRT for this problem, given the inter-communication period $T_0$, the sufficient statistic is found to be $\phi_j^{k, \ell}=\mathcal{W}_{j}^{k, \ell}$, which is defined in \eqref{exa_2_localLLR}, with different distributions under the null and alternative hypotheses:
\begin{align}
\mathcal{W}_{(n-1)T_0+1}^{nT_0, \ell}/\gamma\overset{\mathcal{H}_0}{\sim}\chi^2_{T_0}\lb 0\rb, \quad \mathcal{W}_{(n-1)T_0+1}^{nT_0, \ell}/\theta\overset{\mathcal{H}_1}{\sim} \chi^2_{T_0}\lb 0\rb.
\end{align}
Therefore, the binary quantizer for this problem is written as
\begin{align}\label{example2_quantizer}
q_n^\ell=\text{sign}\left(\mathcal{W}_{(n-1)T_0+1}^{nT_0,\ell}-\lambda\right), 
\end{align}
whose distribution is
\begin{align}\label{Up_2}
 p_x^{T_0}(\lambda)=1-\xi_{T_0}\lb \frac{\lambda}{x} \rb, \quad x\in [\gamma_0, \gamma_1]\cup [\theta_0, \theta_1],
\end{align}
where $\xi_k\lb x\rb$ is the CDF of the chi-squared distribution with degree of freedom $k$. By solving the maximum likelihood problem, it is straightforward to find the estimates of $\gamma$ and $\theta$ respectively as
\begin{align}
\hat{\theta}_n=\mathcal{E}\left(\frac{\lambda}{\xi_{T_0}^{-1}\lb\frac{r_0^n}{r^n_0+r^n_1}\rb}, \theta_0, \theta_1\right), \quad \hat{\gamma}_n=\mathcal{E}\left(\frac{\lambda}{\xi_{T_0}^{-1}\lb\frac{r_0^n}{r^n_0+r^n_1}\rb}, \gamma_0, \gamma_1\right).
\end{align}

Note that the log likelihood ratio of $q_n^\ell$ is the same as \eqref{U_LLR_1} with $p_x^{T_0}(\lambda)$ replaced by \eqref{Up_2}. Therefore, conditions $A1$ and $A4$ are verified by noting that 
\begin{itemize}
\item $p_\theta^{T_0}(\lambda)\neq p_\gamma^{T_0}(\lambda)$ for all $\theta\in [\theta_0,\theta_1]$ and $\gamma\in [\gamma_0, \gamma_1]$,  given any $\lambda$;
%thus it is guaranteed that $\inf_\gamma D\left(p_\theta^{T_0}||p_\gamma^{T_0}\right)$ and thus $\inf_\theta D\left(p_\gamma^{T_0}||p_\theta^{T_0}\right)$ are positive, and there exists an $\varepsilon>0$ such that $\inf_\gamma D\left(p_\theta^{T_0}||p_\gamma^{T_0}\right)>\varepsilon$ and $\inf_\theta D\left(p_\gamma^{T_0}||p_\theta^{T_0}\right)>\varepsilon$.}
%To verify \eqref{A3a} for $S_u\lb\gamma, \theta\rb$, we have
%\begin{align}\label{A3a_q}
%&\Prob_\gamma\lb \sup_{\theta_0\le\theta\le\theta_1} |\nabla_{\theta}S_u(\theta, \gamma)| >x\rb \nonumber\\=&p_\gamma^{T_0}\mathbbm{1}_{\left\{ \sup_{\theta_0\le\theta\le\theta_1} {\frac{\partial p_\theta^{T_0}}{\partial \theta}}/{p_\theta^{T_0}}> x\right\}} 
%+\lb1-p_\gamma^{T_0}\rb\mathbbm{1}_{\left\{ \sup_{\theta_0\le\theta\le\theta_1} {\frac{\partial p_\theta^{T_0}}{\partial \theta}}/{(1-p_\theta^{T_0})}> x \right\}}, \end{align}
\item $\sup_{\theta_0\le\theta\le\theta_1} {\frac{\partial p_\theta^{T_0}}{\partial \theta}}/{p_\theta^{T_0}}$ and $\sup_{\theta_0\le\theta\le\theta_1} {\frac{\partial p_\theta^{T_0}}{\partial \theta}}/{1-p_\theta^{T_0}}$ are bounded, thus the same argument as in \eqref{A3a_q} applies. This is seen by recalling the density function of the chi-squared distribution, $$\frac{\partial p_x^{T_0}}{\partial x}=\frac{\lambda}{x^2}\xi'_{T_0}\lb\frac{\lambda}{x}\rb \le \lb\frac{\lambda}{x}\rb^{T_0/2}\frac{1}{2^{T_0/2}\Gamma(T_0/2)x},$$ with $x$ residing in a compact set, i.e., $x\in [\gamma_0, \gamma_1]\cup [\theta_0, \theta_1]$.
%which lead to $$\sup_{\theta_0\le\theta\le\theta_1} {\frac{\partial p_\theta^{T_0}}{\partial \theta}}/{p_\theta^{T_0}}\le \frac{\sqrt{T_0}}{\sqrt{2\pi}\sigma}\frac{1}{p_{\theta_0}^{T_0}}\;\;\;\text{and}\;\; \sup_{\theta_0\le\theta\le\theta_1} {\frac{\partial p_\theta^{T_0}}{\partial \theta}}/{(1-p_\theta^{T_0})}\le \frac{\sqrt{T_0}}{\sqrt{2\pi}\sigma}\frac{1}{1-p^{T_0}_{\theta_1}}.$$ Hence, by letting $x_0=\max\left\{\frac{\sqrt{T_0}}{\sqrt{2\pi}\sigma}\frac{1}{p_{\theta_0}^{T_0}}, \frac{\sqrt{T_0}}{\sqrt{2\pi}\sigma}\frac{1}{1-p_{\theta_1}^{T_0}}\right\}$, we have $\Prob_\gamma\lb \sup_{\theta} |\nabla_{\theta}S_u(\theta, \gamma)| >x\rb=0<e^{-|\log x|^\eta}$ all $x>x_0, \eta>1$. Similarly, we can prove the same for condition \eqref{A3b}. 
\end{itemize}
Then we can also  asymptotically characterize the performance of U-GSPRT in the sequential spectrum sensing problem by \eqref{U_error_perf}-\eqref{U_meansize_1}.

%Note that here $p_\rho^{T_0}(\lambda)$ is again a strictly increasing function of the parameter $\rho$, thus like \eqref{A3a_q} also applies to \eqref{example2_quantizer}, 
%$\lambda=15.4$ for $T_0=10$ $D\lb p_\rho^{10}||p_0^{10}/10\rb\approx 0.029$. 

In the simulation experiment, the parameter intervals of interest are set as $\gamma\in [0.2, 1]$ and $\theta\in [2,5]$. We consider U-GSPRT with the best time resolution $T_0=1$, where the minimax quantizer $\lambda=\arg \max \min_{\theta} D\lb f_\theta||h_\gamma\rb \approx 3.8$. The expected inter-communication period for LTS-GSPRT is again set approximately as $\E T_0\approx 10$. 

%Fig. \ref{fig5}: the KL divergences are $D\lb f_\rho||h_0\rb=0.0473$ $D\lb h_0||f_\rho\rb=0.0361$ $D\lb p_\rho^{1}||p_0^{1}\rb\approx 0.029$ and $D\lb p_0^{1}||p_\rho^{1}\rb\approx 0.024$. 

In Fig. \ref{fig5}-\ref{fig6}, the performances of two-user C-GSPRT, U-GSPRT and LTS-GSPRT are examined with $\gamma=1, \theta=2$ in terms of the expected sample size (i.e., spectrum sensing speed) as a function of the false alarm probability and miss detection probability respectively (with $\beta\approx 10^{-4}$ in Fig. \ref{fig5} and $\alpha\approx 10^{-4}$ in Fig. \ref{fig6}). In both figures, the asymptotic optimality of LTS-GSPRT is clearly demonstrated as it aligns closely with C-GSPRT. In contrast, U-GSPRT diverges significantly from C-GSPRT and LTS-GSPRT due to the smaller values of the KL divergence.
%(i.e., $\inf_\gamma D\lb f_\theta||h_\gamma\rb=0.6534>\inf_\gamma D\lb p^{1}_\theta||p^{1}_\gamma\rb= 0.029$ and $\inf_\theta D\lb h_\gamma||f_\theta\rb=0.0361>\inf_\theta D\lb p^{1}_\gamma||p^{1}_\theta\rb= 0.024$). 
Furthermore, Fig. \ref{fig7} compares the three sequential schemes for different parameter values and Fig. \ref{fig8} further depicts their performances with different number of collaborative secondary users with the error probabilities $\alpha\approx \beta\approx 10^{-4}$. Note that, although U-GSPRT sends local statistics to the fusion center every sampling instant, it is consistently outperformed by LTS-GSPRT where each user transmits the one-bit message only every ten sampling instants on average. More importantly, LTS-GSRPT only compromises a small amount of  expected sample size compared to the C-GSPRT while substantially lowering the communication overhead. In cognitive radio systems, such an advantage brought by LTS-GSPRT allows the secondary users to identify available spectrum resource  in a fast and economical fashion.
%demonstrate that LTS-GSPRT only trade off a fractional sample size with a substantially smaller communication burden. That is, sensors transmit one-bit messages asynchronously to the fusion center every $T_0=10$ samples on average. 
\begin{figure}
\centering
\includegraphics[width=0.99\textwidth]{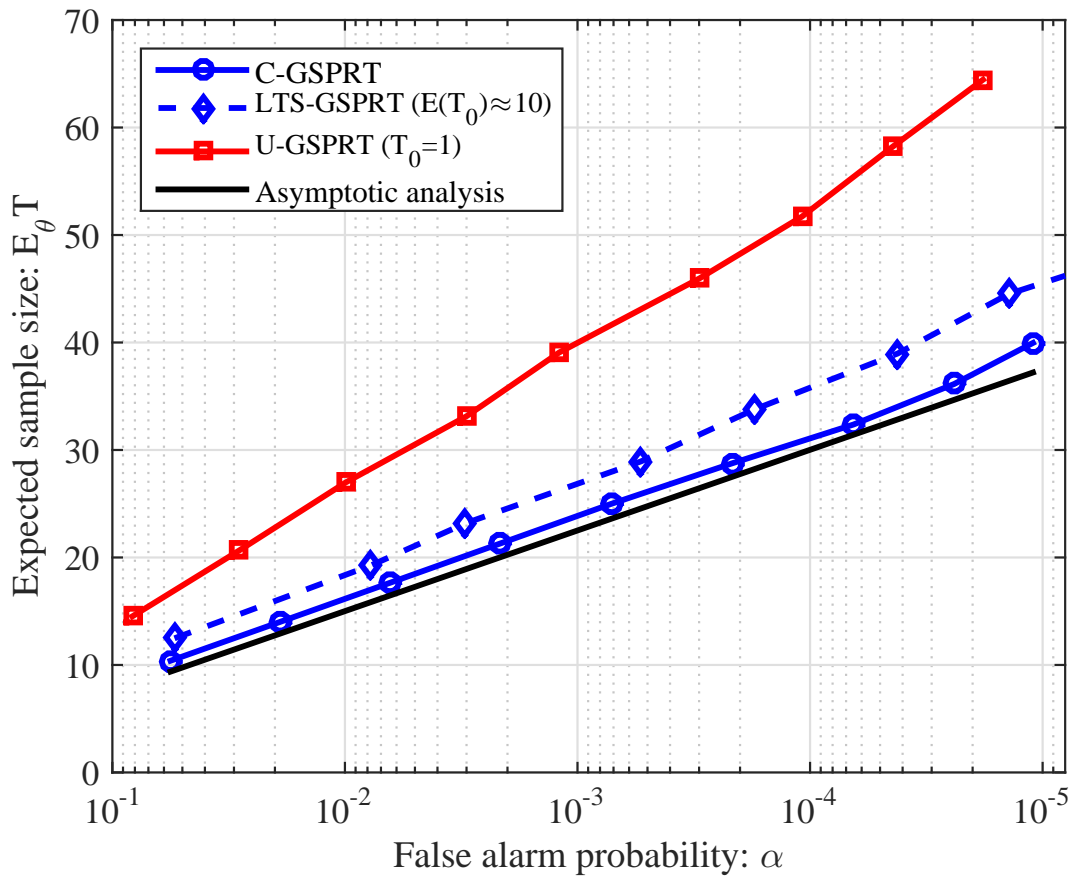}
\caption{Spectrum sensing speed versus false alarm probability $\alpha$.}\label{fig5}
\end{figure}
\begin{figure}
\centering
\includegraphics[width=0.99\textwidth]{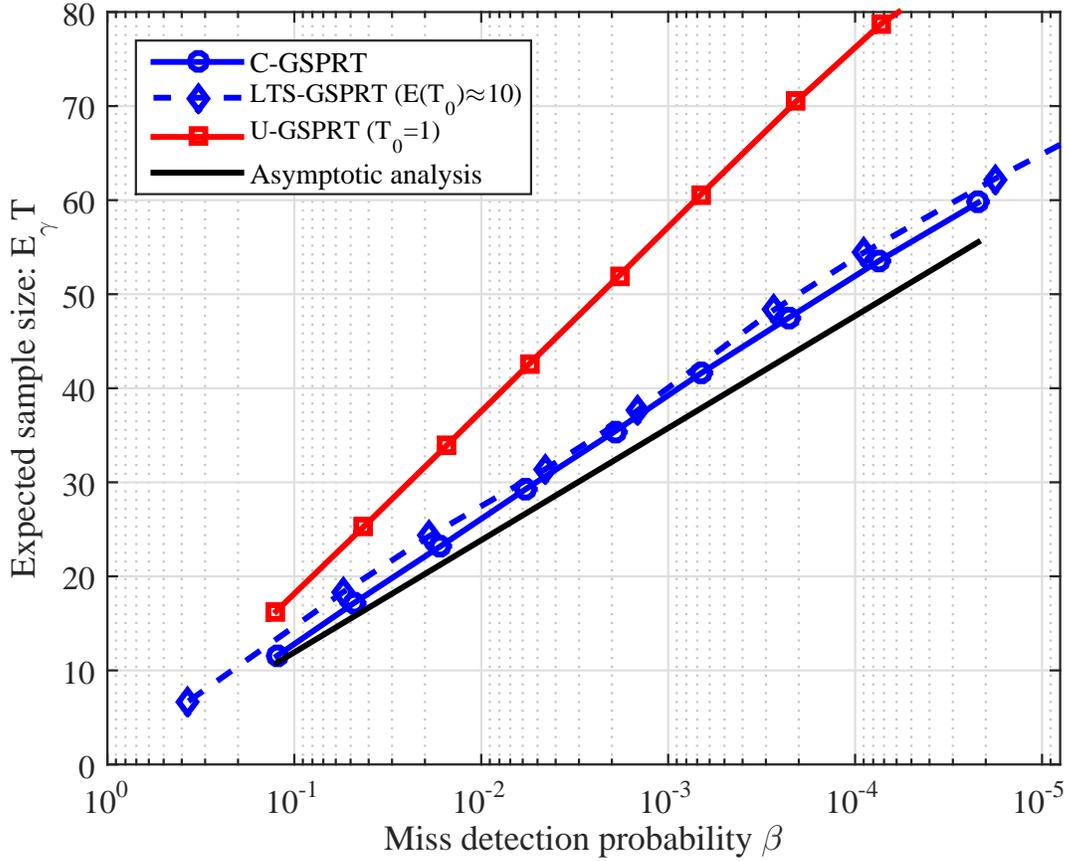}
\caption{Spectrum sensing speed versus miss detection probability $\beta$.}\label{fig6}
\end{figure}
\begin{figure}
\centering
\includegraphics[width=0.86\textwidth]{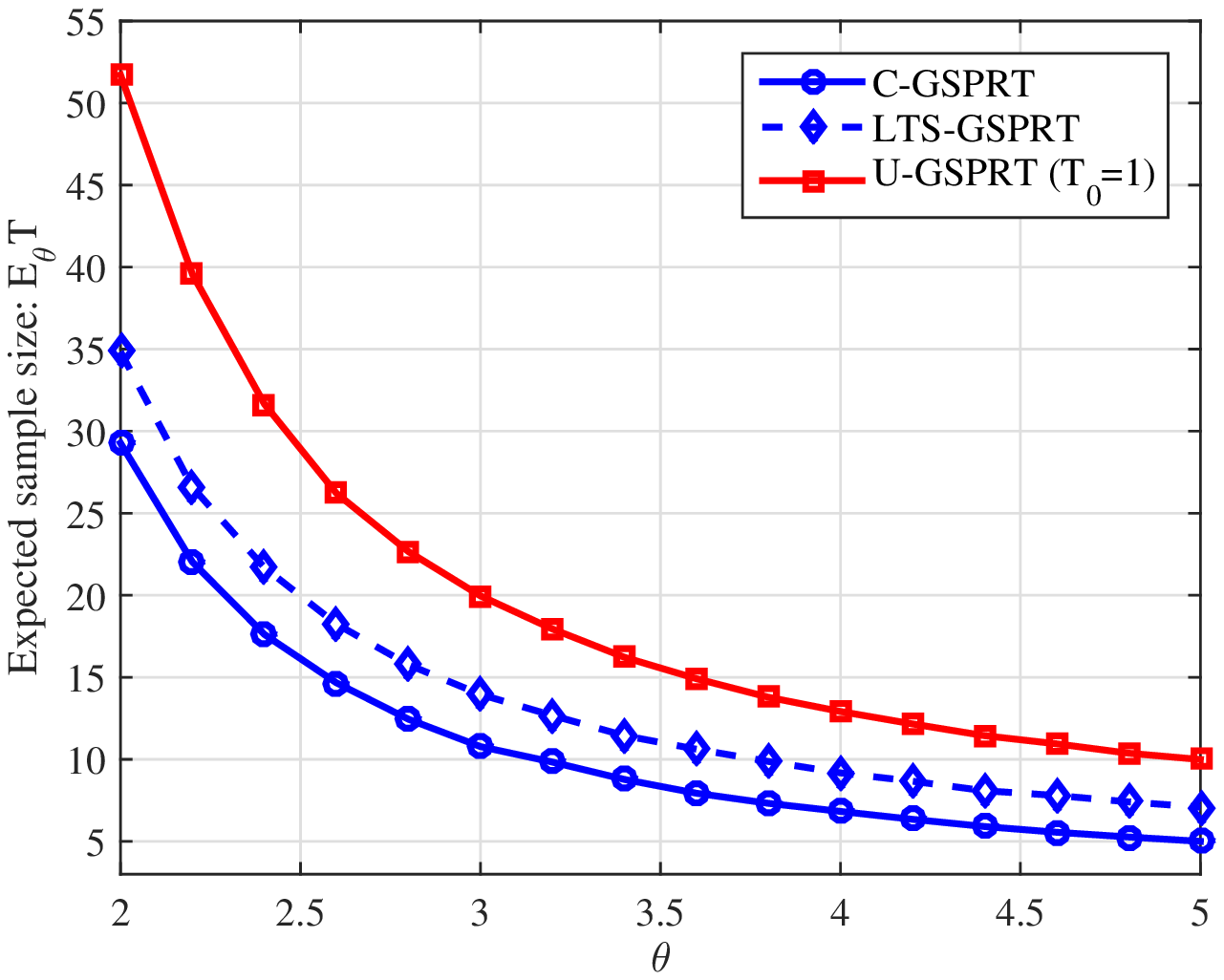}
\includegraphics[width=0.86\textwidth]{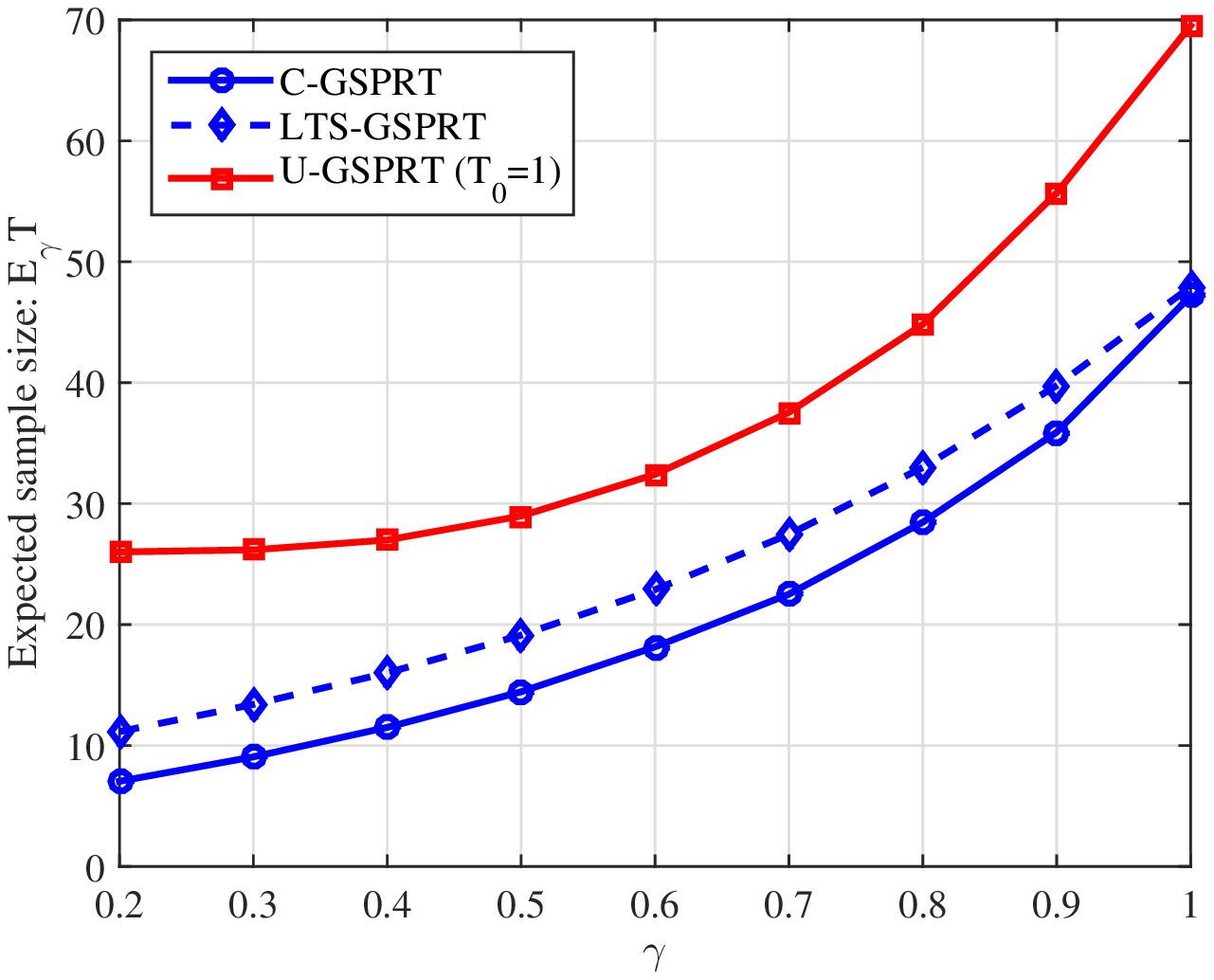}
\caption{Spectrum sensing speed versus different parameter values with and without the primary user.}\label{fig7}
\end{figure}
\begin{figure}
\centering
\includegraphics[width=0.99\textwidth]{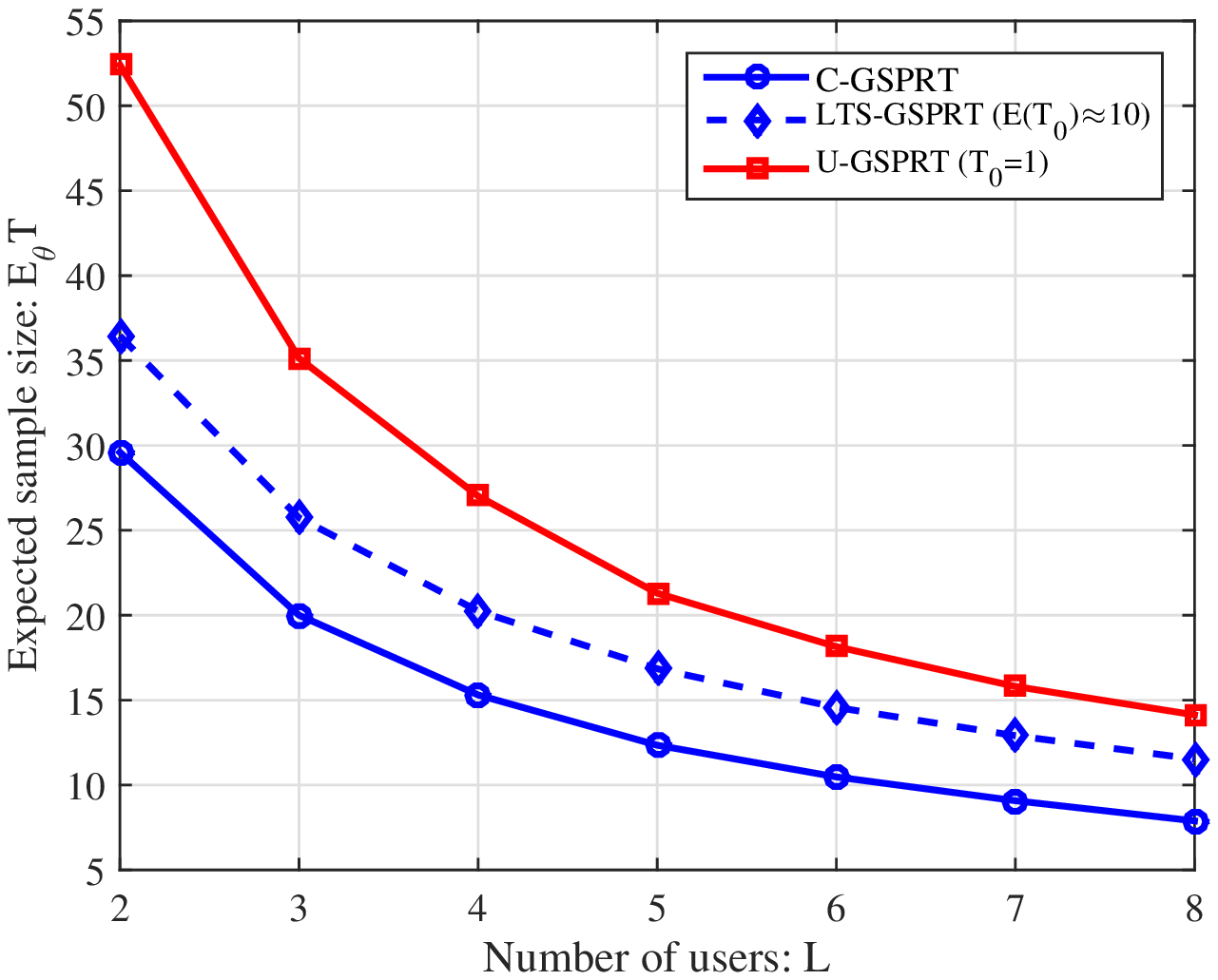}
\caption{Spectrum sensing speed versus different number of collaborating secondary users.}\label{fig8}
\end{figure}
\section{Conclusions}
This work has investigated the sequential composite hypothesis test based on data samples from multiple sensors. We have first introduced the GSPRT as an asymptotically optimal centralized scheme that serves as a benchmark for all decentralized schemes. Next a decentralized sequential test based on conventional uniform sampling and one-bit quantization has been studied, which is shown to be strictly suboptimal due to the loss of time resolution and coarse quantization. Then, by employing the level-triggered sampling, we have proposed a novel decentralized sequential scheme, where sensors repeatedly run local GSPRT and report their decisions to the fusion center asynchronously, and an approximate GSPRT based on the local decisions is performed at the fusion center. The LTS-based GSPRT significantly lowers the communication overhead through low-frequency one-bit communication, and is easily implemented both at sensors and the fusion center. Most importantly, we have shown that the proposed LTS-based decentralized scheme achieves the asymptotical optimality as the local thresholds and the global thresholds grow large at different rates. Finally, extensive numerical results have corroborated the theoretical results and demonstrated the superior performance of the proposed method. 
\section*{Appendix}
%\subsection{Proof of lemmas}
%\subsection{Proof of Theorems}
\subsection{Proof of Theorem 1}
We first introduce the following result as an extension to the Wald's identity, that can be found in \cite[Lemma 3]{Fellouris11}.
\begin{lemma}\label{lemma:ext_Wald}%{\cite[Lemma3]{Fellouris11}}
Let $\{t_{n}^{\ell}\}$ be defined by \eqref{LTS-Sampling}. Consider a sequence $\{\psi_{n}^\ell\}$ of i.i.d. random variables where each $\psi_{n}^\ell$ is a function of the samples $y^\ell_{t^\ell_{n-1}+1}, \ldots, y^\ell_{t^\ell_{n}}$ acquired by  sensor $\ell$ during its $n$th inter-communication period. Then the following equality holds:
\begin{align}\label{ext_Wald}
\E_x\left(\sum_{n=1}^{N_{\T}^\ell+1}{\psi}_{n}^\ell\right)=\E_x\left({\psi}_{n}^\ell\right)\E_x\left(N_{\T}^\ell+1\right), \quad x\in \Gamma\cup\Theta.
\end{align}
%\begin{proof}
%Refer to \cite[Lemma 3]{Fellouris11}. 
%\end{proof}
\end{lemma}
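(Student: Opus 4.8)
The plan is to read \eqref{ext_Wald} as an instance of Wald's identity for a randomly stopped sum of i.i.d.\ summands, where the random index is a stopping time with respect to a carefully chosen filtration. The structural fact driving everything is the \emph{regenerative} nature of the repeated local GSPRT at sensor~$\ell$: since the sensor erases its memory after each transmission and the raw samples are i.i.d., the strong Markov property for i.i.d.\ sequences shows that the $n$th block $\xi_n^\ell\triangleq\big(t_n^\ell-t_{n-1}^\ell;\,y^\ell_{t_{n-1}^\ell+1},\ldots,y^\ell_{t_n^\ell}\big)$ is independent of $(\xi_1^\ell,\ldots,\xi_{n-1}^\ell)$ and that all blocks are identically distributed. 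Hence $\{\xi_n^\ell\}_{n\ge1}$ is i.i.d.; since each $\psi_n^\ell$ (like each message $u_n^\ell$) is a fixed measurable function of $\xi_n^\ell$, the sequence $\{\psi_n^\ell\}_{n\ge1}$ is i.i.d.\ as well, and $\{\xi_k^\ell\}_{k>n}$ is independent of $\sigma(\xi_1^\ell,\ldots,\xi_n^\ell)$.

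Next I would introduce the filtration $\mathcal{F}_n\triangleq\sigma(\xi_1^\ell,\ldots,\xi_n^\ell)\vee\mathcal{H}$, where $\mathcal{H}\triangleq\sigma\big(y^{\ell'}_t:\ell'\neq\ell,\;t\ge1\big)$ records all observations of the other sensors. By the regenerative property of the last paragraph together with the cross-sensor independence of the samples, $\{\xi_k^\ell\}_{k>n}$ is independent of $\mathcal{F}_n$; in particular $\psi_n^\ell$ is independent of $\mathcal{F}_{n-1}$ with $\E_x(\psi_n^\ell\mid\mathcal{F}_{n-1})=\E_x(\psi_1^\ell)$. The crux is to verify that $M\triangleq N_\T^\ell+1$ is an $\{\mathcal{F}_n\}$-stopping time. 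Since $\{M\le n\}=\{N_\T^\ell\le n-1\}=\{\T<t_n^\ell\}$, I would decompose over $\{t_n^\ell=m\}\in\mathcal{F}_n$: on that event no $n$th message has been sent before time~$m$, so the global statistic $\widetilde V_t$ for $t\le m-1$ is built only from the first $n-1$ messages of sensor~$\ell$ (functions of $\xi_1^\ell,\ldots,\xi_{n-1}^\ell$) and from the other sensors' samples up to time $m-1$ (contained in $\mathcal{H}$), hence $\{\T<t_n^\ell\}\cap\{t_n^\ell=m\}\in\mathcal{F}_n$, and summing over $m$ gives $\{M\le n\}\in\mathcal{F}_n$.

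With $M$ a finite-mean $\{\mathcal{F}_n\}$-stopping time --- $N_\T^\ell\le\T$ since every block has length at least one, and $\E_x\T<\infty$ for the LTS-GSPRT --- and $\E_x|\psi_1^\ell|<\infty$, the identity follows from the classical chain
\[
\E_x\Big(\sum_{n=1}^{M}\psi_n^\ell\Big)=\sum_{n\ge1}\E_x\big(\psi_n^\ell\,\mathbbm{1}_{\{M\ge n\}}\big)=\sum_{n\ge1}\E_x(\psi_1^\ell)\,\Prob_x(M\ge n)=\E_x(\psi_1^\ell)\,\E_x(M),
\]
where the middle step uses $\{M\ge n\}=\{M\le n-1\}^{c}\in\mathcal{F}_{n-1}$ together with the independence of $\psi_n^\ell$ from $\mathcal{F}_{n-1}$, and the interchange of sum and expectation is justified by first running the same chain on $|\psi_n^\ell|$, which converges to $\E_x|\psi_1^\ell|\,\E_x(M)<\infty$. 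I expect the measurability bookkeeping in the stopping-time verification to be the main obstacle: the filtration must be rich enough that the global stopping event $\{\T<t_n^\ell\}$ is $\mathcal{F}_n$-measurable --- which forces the inclusion of $\mathcal{H}$ --- yet the future blocks of sensor~$\ell$ must stay independent of $\mathcal{F}_n$, and the reconciliation is precisely that independence across sensors makes the enrichment by $\mathcal{H}$ cost nothing.
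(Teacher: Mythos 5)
The paper does not actually prove this lemma: it imports it verbatim from \cite[Lemma 3]{Fellouris11}, adding only the remark that $N_\T^\ell+1$ fails to be a stopping time adapted to $\{\psi_n^\ell\}$ alone. Your argument is a correct, self-contained proof, and it is essentially the standard one behind that cited result: the regeneration of the local GSPRT makes the blocks $\xi_n^\ell$ i.i.d., and the key step is exactly the one you isolate --- the event $\{N_\T^\ell+1\ge n\}$ is determined by sensor $\ell$'s first $n-1$ blocks together with the other sensors' data, hence independent of $\psi_n^\ell$, which yields $\E_x(\psi_n^\ell\,\mathbbm{1}_{\{N_\T^\ell+1\ge n\}})=\E_x(\psi_1^\ell)\Prob_x(N_\T^\ell+1\ge n)$ and then the identity by summation. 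Your packaging of this via the enlarged filtration $\mathcal{F}_n=\sigma(\xi_1^\ell,\dots,\xi_n^\ell)\vee\mathcal{H}$ is a clean way to reconcile the paper's caveat with a genuine optional-stopping argument: enlarging by the other sensors' samples makes $\{\T<t_n^\ell\}$ measurable without destroying the independence of future blocks, precisely because samples are independent across sensors. The only hypotheses you use beyond the lemma's statement are $\E_x|\psi_1^\ell|<\infty$ and $\E_x\T<\infty$; these are indeed tacit in the lemma as written (the right-hand side is meaningless without them) and hold in the paper's setting under condition $A1$, so flagging them explicitly is appropriate rather than a gap.
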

Here, \eqref{ext_Wald} differs from the standard Wald's identity because $N^\ell_\T+1$ is no longer a stopping time adapted to $\{{\psi}^\ell_{n}\}$. Next we proceed to analyse the expected sample size under level-triggered sampling. Since the proof is concentrated on the LTS-based decentralized scheme only, we use $\T$ for  $\T_p$ (cf. \eqref{eq:D-GSPRT}) for notational simplicity.

\begin{proof}[Proof of Theorem 1]
Note that the global statistic in \eqref{eq:bS} can be rewritten as
\begin{align}
\widetilde{V}_t=&\sum_{\ell=1}^L\sum_{n=1}^{N_t^\ell}\tilde{v}^\ell_{n}\nonumber\\=&\sum_{\ell=1}^L\left(\sum_{n=1}^{N_t^\ell+1}\tilde{v}^\ell_{n}-\tilde{v}^\ell_{N_t^\ell+1}\right)\nonumber\\=&\sum_{\ell=1}^L\sum_{n=1}^{N_t^\ell+1}\tilde{v}^\ell_{n}-\sum_{\ell=1}^L\tilde{v}^\ell_{N_t^\ell+1}.
\end{align}
Thus, invoking Lemma \ref{lemma:ext_Wald}, we have
\begin{align}\label{eq:1}
\E_x\left(\widetilde{V}_\T\right)&=\sum_{\ell=1}^L\E_x\left(N_\T^\ell+1\right)\E_x(\tilde{v}^\ell_{n})-\sum_{\ell=1}^L\E_x\left(\tilde{v}^\ell_{N_\T^\ell+1}\right), \quad \;\; x\in\Gamma\cup\Theta.
%\\&=\sum_{\ell=1}^L\E_i\left(N_\T^\ell\right)\E(\bar{s}^\ell_{n^\ell})-\sum_{\ell=1}^L\left[\E_i\left(\bar{s}^\ell_{N_\T^\ell+1}\right)-\E_i(\bar{s}^\ell_{n^\ell})\right]
%\\&\to \E\T\sum_{\ell=1}^L\frac{\E(\bar{s}^\ell_{n^\ell})}{\E(T_\ell)}-\sum_{\ell=1}^L\left[\E\left(\bar{s}^\ell_{N_T^\ell+1}\right)-\E(\bar{s}^\ell_{n^\ell})\right]
\end{align}
Denote the inter-communication period $\tau^\ell_{n}\triangleq t^\ell_{n}-t^\ell_{n-1}$.   
%\begin{align}
%\E_i\left(\sum^{N_\T^\ell+1}_{n^\ell=1}\tau^\ell_{n^\ell}\right)=\E_i\lb N_\T^\ell+1\rb\E_i\lb \tau^\ell_{n^\ell}\rb\ge \E_i\left(\T\right)
%\ge \E_i\left(\sum^{N_\T^\ell}_{n^\ell=1}\tau^\ell_{n^\ell}\right)\ge \E_i(N_\T^\ell)\E_i\lb\tau_{n^\ell}^\ell\rb-2ML
%\end{align}
Further define $R_\ell\triangleq \sum^{N_\T^\ell+1}_{n=1}\tau^\ell_{n}-\T\ge 0$, by noting that $\T\le\sum^{N_\T^\ell+1}_{n=1}\tau^\ell_{n}$. As a result, we can write down the following equality for each sensor:
\begin{align}\label{eq:3}
\E_x\lb\T+R_\ell\rb=\E_i\lb\sum^{N_\T^\ell+1}_{n=1}\tau^\ell_{n}\rb=\E_x\lb N_\T^\ell+1\rb\E_x\lb \tau_{n}^\ell\rb, \quad \ell=1, \ldots, L.
\end{align}
Combining \eqref{eq:3} and \eqref{eq:1} yields
\begin{align}\label{eq:2}
\E_x\lb\widetilde{V}_\T\rb&=\sum_{\ell=1}^L\frac{\E_x\lb\T+R_\ell\rb}{\E_i\lb\tau_{n}^\ell\rb}\E_i(\tilde{v}^\ell_{n})-\sum_{\ell=1}^L\E_x\left(\tilde{v}_{N_\T^\ell+1}\right)
%\nonumber\\&=\sum_{\ell=1}^L{\E_i\lb\T+R_\ell\rb}{D\lb f_\theta|h_0\rb}-\sum_{\ell=1}^L\E_i\left(\bar{s}_{N_\T^\ell+1}\right)
\nonumber\\&=\E_x\lb\T\rb\sum_{\ell=1}^L\frac{\E_x(\tilde{v}^\ell_{n})}{\E_x\lb\tau_{n}^\ell\rb}+\sum_{\ell=1}^L\lb\E_x(R_\ell)\frac{\E_x(\tilde{v}^\ell_{n})}{\E_x\lb\tau_{n}^\ell\rb}-\E_x\lb\tilde{v}_{N_\T^\ell+1}\rb\rb.
\end{align}
Furthermore, according to Proposition \ref{C_GSPRT_perf}, we have
\begin{align}
&\E_\theta\lb\tilde{v}_{n}^\ell\rb\sim {a}\sim\E_\theta\lb\tau_{n}^\ell\rb{\inf_\gamma D\lb f_\theta||h_\gamma\rb}, \quad \text{as}\;\; \widetilde{\alpha}\to 0,\\
\text{and}\quad &\E_\gamma\lb\tilde{v}_{n}^\ell\rb\sim -{b}\sim -\E_\gamma\lb\tau_{n}^\ell\rb{\inf_\theta D\lb h_\gamma||f_\theta\rb}, \quad \text{as}\;\; \widetilde{\beta}\to 0.\label{eq:4}
\end{align}
%\begin{align}
%D\lb f_1 || h_0\rb\triangleq \int f_1(x)\log\frac{f_1(x)}{h_0(x)}{\rm d}x.
%\end{align} 
Considering the sensor samples under hypothesis $\mathcal{H}_1$, \eqref{eq:2} becomes
\begin{align}
\E_\theta\lb\widetilde{V}_\T\rb=\E_\theta\lb\T\rb\sum_{\ell=1}^L\inf_\gamma D\lb f_\theta||h_\gamma \rb-\sum_{\ell=1}^L\underbrace{\lb\E_\theta\lb\tilde{v}_{N_\T^\ell+1}\rb-\E_\theta(R_\ell)\inf_\gamma D\lb f_\theta || h_\gamma\rb\rb}_{\mathcal{R}_\theta^\ell},
\end{align}
%\begin{align}
%\E_i\lb\bar{S}_\T\rb&\ge\sum_{\ell=1}^L\frac{\E_i\T}{\E_i\tau_{n^\ell}^\ell}\E_i(\bar{s}^\ell_{n^\ell})-\sum_{\ell=1}^L\E_i\left(\bar{s}_{N_\T^\ell+1}\right)
%\end{align}
%\begin{align}
%\E_\theta\lb\bar{S}_\T\rb+\sum_{\ell=1}^L\E_\theta\left(\bar{s}_{N_\T^\ell+1}\right)\ge{\E_\theta\lb\T\rb}{\sum_{\ell=1}^L D\lb f_\theta|h_0\rb}
%\end{align}
which leads to
\begin{align}
\E_\theta \lb\T\rb = \frac{\E_\theta\lb\widetilde{V}_\T\rb+\sum_{\ell=1}^L\mathcal{R}^\ell_\theta}{\inf_\gamma{D}\lb f_\theta||h_\gamma\rb L}\sim \frac{A+\sum_{\ell=1}^L\mathcal{R}^\ell_\theta}{\inf_\gamma{D}\lb f_\theta||h_\gamma\rb L}, \quad \text{as}\;\; \widetilde{\alpha}\to 0, \;\widetilde{\beta}\to 0.
\end{align}
Similarly, substituting \eqref{eq:4} into \eqref{eq:2} gives
\begin{align}
\E_\gamma\lb\widetilde{V}_\T\rb=-\E_\gamma\lb\T\rb\sum_{\ell=1}^L\inf_\theta D\lb h_\gamma||f_\theta\rb-\sum_{\ell=1}^L\underbrace{\lb\E_\gamma\lb\tilde{v}_{N_\T^\ell+1}\rb+\E_\gamma(R_\ell)\inf_\theta D\lb h_\gamma || f_\theta\rb\rb}_{\mathcal{R}_\gamma^\ell},
\end{align}
and the expected sample size under the null hypothesis is
\begin{align}
\E_\gamma \lb\T\rb = \frac{-\E_\gamma\lb\widetilde{V}_\T\rb-\sum_{\ell=1}^L\mathcal{R}^\ell_\gamma}{\inf_\theta{D}\lb h_\gamma||f_\theta\rb L} \sim \frac{B-\sum_{\ell=1}^L\mathcal{R}^\ell_\gamma}{\inf_\theta{D}\lb h_\gamma||f_\theta\rb L}\quad \text{as}\;\; \widetilde{\alpha}\to 0, \;\widetilde{\beta}\to 0.
\end{align}
We also have $\E_\theta\lb\widetilde{V}_\T\rb\to A,\; \E_\gamma\lb\widetilde{V}_\T\rb\to -B$, as $A, B \to \infty$ and ${a}=o\lb A\rb, {b}=o\lb B\rb$. Note that $\mathcal{R}_\theta^\ell$ and $\mathcal{R}_\gamma^\ell$ only depend on local thresholds $\{{b}, {a}\}$, which are of a lower order of $\{B, A\}$; therefore, we have proved the asymptotic formulas \eqref{mean_size_0} and \eqref{mean_size_1}.
%{\color{blue} We should be able to find a bound for $\mathcal{R}_\theta^\ell$}.
\end{proof}
\subsection{Proof of Theorem 2}

The proof considers the asymptotic regime where $A/{a} \to\infty, B/{b} \to\infty$ and $\limsup {a}/{b} <\infty$. Again, $\T$ is used for  $\T_p$ for notational simplicity.
\begin{proof}
For simplicity of notations, we assume $L=2$ in the proof. When $L>2$, the proof is similar and is thus omitted. Thanks to the symmetry of type I and type II error probabilities, it is sufficient to compute the type I error probability. 
For any $\gamma\in\Gamma$, we consider the probability
\begin{equation}\label{errorprob}
\Prob_{{\gamma}}(\widetilde V_\T\geq  A).
\end{equation}
We first define the local discretized approximated generalized log-likelihood ratio process,
$$
\widetilde V_t^{(\ell)}= \sum_{n=1}^{N_t^{\ell}} {a} \mathbbm{1}_{\{u_n^{\ell}=1\}}-{b} \mathbbm{1}_{\{u_n^{\ell}=-1\}}, \quad  \ell=1,2,...,L.
$$
Then \eqref{errorprob} has the following upper bound
\begin{equation}\label{probofsum}
\Prob_{\gamma}(\widetilde V_\T\geq A) \leq \Prob_{\gamma}(\sup_{t} V_t\geq A) \leq \Prob_{\gamma}\Big( \sup_{t} \widetilde V^{(1)}_t +\sup_{t} \widetilde V^{(2)}\geq A\Big).
\end{equation}
The first inequality is due to the definition of $\T$, and the second inequality is because $\sup V_t \leq \sum_{\ell=1}^L \sup_t \widetilde V_t^{(\ell)}$.
We proceed to split the last probability in \eqref{probofsum} into error probabilities  detected by the local sensors. Let $\varepsilon$ be an arbitrary positive constant, then
\begin{eqnarray*}
&& \Prob_{\gamma}\Big( \sup_{t} \widetilde V^{(1)}_t +\sup_{t} \widetilde V^{(2)}_t\geq A\Big)\\
&\leq& \sum_{k=1}^{\lfloor 1/\varepsilon \rfloor } 
\Prob_{{\gamma}}\Big(
k\varepsilon A\leq \sup_t \widetilde V_t^{(1)}\leq (k+1)\varepsilon A, \sup_t \widetilde V_t^{(2)}\geq (1-(k-1)\varepsilon) A
\Big)\\
&&+ \Prob_{\gamma}\Big(\sup_{t} \widetilde V^{(1)}_t\leq \varepsilon A, \sup_{t} \widetilde V^{(2)}_t\geq (1-\varepsilon) A\Big).\\
\end{eqnarray*}
Note that the stochastic processes $\{V^{(1)}_{t}:t>0\}$ and $\{V^{(2)}_t:t>0\}$ are independent and identically distributed, so the right-hand side of the above inequality equals to
\begin{eqnarray*}
&&\sum_{k=1}^{\lfloor 1/\varepsilon \rfloor } 
\Prob_{{\gamma}}\Big(
k\varepsilon A \leq \sup_t \widetilde V_t^{(1)}\leq (k+1)\varepsilon A\Big) \Prob_{{\gamma}}\Big(\sup_t \widetilde V_t^{(1)}\geq (1-(k-1)\varepsilon) A
\Big)\\
&&+\Prob_{\gamma}(\sup_{t} \widetilde V^{(1)}_t\leq \varepsilon A)\Prob_{{\gamma}}\Big( \sup_{t} \widetilde V^{(1)}_t\geq (1-\varepsilon) A\Big),
\end{eqnarray*}
which can be further bounded above by
\begin{equation}\label{eq:upperbound}
\sum_{k=1}^{\lfloor 1/\varepsilon \rfloor } 
\Prob_{{\gamma}}\Big(
 \sup_t \widetilde V_t^{(1)}\geq k\varepsilon A\Big) \Prob_{{\gamma}}\Big(\sup_t \widetilde V_t^{(1)}\geq (1-(k-1)\varepsilon) A
\Big)+\Prob_{{\gamma}}\Big( \sup_{t} \widetilde V^{(1)}_t\geq (1-\varepsilon) A\Big).
\end{equation}
For each $k$ such that $1\leq k\leq \lfloor\frac{1}{\varepsilon}\rfloor$, we have $\varepsilon\leq k\varepsilon\leq 1$ and $(1-(k-1)\varepsilon)=1-k\varepsilon+\varepsilon$.
Consequently, \eqref{eq:upperbound} can be further bounded above by
\begin{equation}\label{totalbound}
\varepsilon^{-1} \sup_{\rho \in [\varepsilon,1] } \Prob_{{\gamma}}\Big( \sup_t \widetilde V^{(1)}_t\geq \rho A\Big)\Prob_{{\gamma}}\Big(\sup_t \widetilde V^{(1)}_t\geq (1-\rho+\varepsilon)A \Big).
\end{equation}
Then we use the following lemma whose proof is given below 
%on page \pageref{prooflocalerror} 
to complete the proof of Theorem 2.
\begin{lemma}\label{localerror}
For $\varepsilon>0$ and $\rho\geq \varepsilon$,
$$
\Prob_{{\gamma}}\Big(\sup_t \widetilde V^{(1)}_t\geq \rho A\Big) \leq e^{-(1+o(1))\rho A} \mbox{ as } A\to \infty.
$$
The above limit is uniform with respect to $\rho$ and $\gamma$.
\end{lemma}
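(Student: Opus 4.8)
\noindent\textbf{Proof plan for Lemma~\ref{localerror}.} The plan is to identify the local process $\widetilde V^{(1)}_t$ with an i.i.d.\ random walk having a strongly negative drift, and to bound the tail of its all-time maximum by an exponential supermartingale (Doob) inequality, feeding in Proposition~\ref{C_GSPRT_perf} applied to the \emph{local} GSPRT to pin down the exponent.

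First I would pass to a random-walk formulation. Setting $\widetilde W_m\triangleq\sum_{n=1}^{m}\bigl({a}\,\mathbbm{1}_{\{u_n^{1}=1\}}-{b}\,\mathbbm{1}_{\{u_n^{1}=-1\}}\bigr)$, we have $\widetilde V^{(1)}_t=\widetilde W_{N_t^{1}}$; since each inter-communication period is a.s.\ finite under $A1$ (the local GSPRT terminates a.s.), $N_t^{1}\uparrow\infty$, so $\sup_t\widetilde V^{(1)}_t=\sup_{m\ge0}\widetilde W_m$. Because sensor~$1$ resets after each transmission and the $y_t^{1}$ are i.i.d.\ over time, the increments $\tilde v_n^{1}\triangleq{a}\,\mathbbm{1}_{\{u_n^{1}=1\}}-{b}\,\mathbbm{1}_{\{u_n^{1}=-1\}}$ are i.i.d.\ under $\Prob_\gamma$: $\tilde v_n^{1}={a}$ with probability $\widetilde\alpha_\gamma=\Prob_\gamma(u_n^{1}=1)$ and $-{b}$ otherwise, with drift $\widetilde\alpha_\gamma{a}-(1-\widetilde\alpha_\gamma){b}<0$ once $\widetilde\alpha_\gamma$ is small.

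Next I would apply an exponential change of measure. For $\omega\in(0,1)$ put $\kappa_\gamma(\omega)\triangleq\E_\gamma e^{\omega\tilde v_n^{1}}=\widetilde\alpha_\gamma e^{\omega{a}}+(1-\widetilde\alpha_\gamma)e^{-\omega{b}}$. Whenever $\kappa_\gamma(\omega)\le1$, the process $M_m\triangleq e^{\omega\widetilde W_m}$ is a nonnegative $\Prob_\gamma$-supermartingale with $M_0=1$ (since $\E_\gamma[M_{m+1}\mid\mathcal F_m]=\kappa_\gamma(\omega)M_m\le M_m$), so Doob's maximal inequality yields $\Prob_\gamma(\sup_{m\ge0}\widetilde W_m\ge\rho A)\le\Prob_\gamma(\sup_m M_m\ge e^{\omega\rho A})\le e^{-\omega\rho A}$, a bound whose exponent is linear in $\rho$ and hence uniform over $\rho\ge\varepsilon$. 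It thus suffices to show that for each fixed $\epsilon\in(0,1)$, $\kappa_\gamma(1-\epsilon)\le1$ for all $\gamma\in\Gamma$ once ${a},{b}$ are large. Here Proposition~\ref{C_GSPRT_perf}, applied to the single-sensor GSPRT \eqref{LTS-Sampling}--\eqref{LTS-message} with thresholds ${a},{b}$, gives $\sup_{\gamma\in\Gamma}\log\widetilde\alpha_\gamma\sim-{a}$, i.e.\ $\widetilde\alpha_\gamma\le e^{-{a}(1+o(1))}$ with the $o(1)$ \emph{uniform in $\gamma$}, the uniformity being automatic because the supremum over $\gamma$ is already inside the statement. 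Consequently $\widetilde\alpha_\gamma e^{(1-\epsilon){a}}\le e^{-\epsilon{a}+o({a})}\to0$ and $(1-\widetilde\alpha_\gamma)e^{-(1-\epsilon){b}}\le e^{-(1-\epsilon){b}}\to0$, so $\kappa_\gamma(1-\epsilon)\to0$ uniformly in $\gamma$, in particular $\le1$ for large ${a},{b}$. This gives $\Prob_\gamma(\sup_t\widetilde V^{(1)}_t\ge\rho A)\le e^{-(1-\epsilon)\rho A}$ uniformly in $\gamma$ and $\rho$; letting $\epsilon\downarrow0$ slowly as ${a},{b}\to\infty$ upgrades $1-\epsilon$ to $1+o(1)$ and finishes the proof.

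The supermartingale verification and the final $\epsilon\downarrow0$ diagonalization are routine; the one substantive input — and hence the ``main obstacle'' — is the uniform-in-$\gamma$ exponential decay $\widetilde\alpha_\gamma\le e^{-{a}(1+o(1))}$, but this is precisely what Proposition~\ref{C_GSPRT_perf} (resting on $A1$--$A4$, compactness of $\Gamma$, and \cite{XLi14}) supplies. The conceptual point worth flagging is that one does \emph{not} need the sharp Lundberg/adjustment coefficient of the walk $\widetilde W_m$: any exponent $1-\epsilon<1$ already does the job, and Proposition~\ref{C_GSPRT_perf} guarantees $\kappa_\gamma(1-\epsilon)<1$ for all large thresholds. (The hypothesis $\limsup{a}/{b}<\infty$ is not needed for this lemma; it enters only when the two per-sensor estimates are combined in the body of the proof of Theorem~2.)
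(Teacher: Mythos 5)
Your proof is correct, and it takes a genuinely different route from the paper's. The paper restricts to the first-passage time $N^*=\inf\{N:\sum_{n\le N}Y_n\ge\rho A\}$, performs a two-stage change of measure $\Prob_\gamma\to\widetilde\Prob\to\widetilde\ProbQ$ (with $\widetilde\ProbQ$ the conjugate measure under which the walk drifts upward), extracts the factor $e^{-\rho A}$ from the Wald likelihood-ratio identity, and then controls the remaining term $E^{\widetilde\ProbQ}\bigl[\tfrac{d\Prob_{N^*}}{d\widetilde\Prob_{N^*}};N^*<\infty\bigr]$ by splitting it into a short-horizon piece $I_1$ and a long-horizon piece $I_2$, the latter requiring a Chernoff-type binomial tail bound. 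You instead exhibit $e^{(1-\epsilon)\widetilde W_m}$ as a nonnegative supermartingale (once $\kappa_\gamma(1-\epsilon)\le 1$) and invoke the maximal inequality, then diagonalize $\epsilon\downarrow 0$. Both arguments hinge on exactly the same substantive input, namely $\sup_{\gamma}\widetilde\alpha_\gamma\le e^{-(1+o(1)){a}}$ from Proposition~\ref{C_GSPRT_perf} applied to the local GSPRT (the paper uses it in \eqref{ratiobound}, you use it to verify $\kappa_\gamma(1-\epsilon)\to 0$), and both deliver the same uniformity in $\rho$ and $\gamma$. What your version buys is brevity --- no auxiliary measures, no $I_1/I_2$ decomposition, no binomial tail estimate --- and, as you correctly note, it needs no comparability of ${a}$ and ${b}$ (the paper's bound \eqref{binomtail} implicitly uses $\limsup {b}/{a}<\infty$), at the cost of giving up the exact Lundberg exponent, which is immaterial here since the loss is absorbed into the $o(1)$. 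The only points to make fully explicit in a written-up version are the a.s.\ finiteness of each inter-communication period (so that $\sup_t\widetilde V^{(1)}_t=\sup_m\widetilde W_m$) and the precise diagonalization $\epsilon=\epsilon({a},{b})\downarrow 0$; both are routine.
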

Applying Lemma~\ref{localerror} to \eqref{totalbound} gives the result in Theorem 2.
\end{proof}

\begin{proof}[Proof of Lemma~\ref{localerror}]\label{prooflocalerror}
To start with, we write $\widetilde V_t^{(1)}$ in terms of the sum of i.i.d. variables,
$$
\widetilde V^{(1)}_t=\sum_{n=1}^{N_t} {a} \mathbbm{1}_{ \{u_n^1=1\} }-{b}  \mathbbm{1}_{ \{u_n^1=-1\} }.
$$
Therefore, the event $\{\sup_t \widetilde V^{(1)}_t\geq \rho A\}$ is the same as the event 
$$
\Big\{\sup_{N} \sum_{n=1}^N  Y_n\geq \rho A\Big\},
$$
where
$$
Y_n= {a} \mathbbm{1}_{ \{u_n^1=1\} }-{b}  \mathbbm{1}_{ \{u_n^1=-1\} }, \quad n=1,2,...
$$
The above event is further equivalent with the event 
$$
\{N^*<\infty\},
$$
where $N^*=\inf\{N: \sum_{n=1}^N  Y_n\geq \rho A \}$.
Therefore,
$$
\Prob_{{\gamma}}\Big(\sup_t \widetilde V^{(1)}_t\geq \rho A\Big) = \Prob_{{\gamma}}\Big(N^*<\infty\Big).
$$
We apply a change of measure to provide an upper bound to the above expression. Let $\widetilde \Prob$ and $\widetilde \ProbQ$ be probability measures under which $Y_n, n=1,2,...$ are i.i.d. random variables and
$$
\widetilde \Prob(Y_n={a})= p \mbox{ and } \widetilde \Prob(Y_n=-{b}) =1-p,
$$
and
$$
\widetilde \ProbQ(Y_n={a})= q \mbox{ and } \widetilde \Prob(Y_n=-{b}) =1-q,
$$
where $p=(e^{{a}}-e^{-{b}})^{-1}(1-e^{-{b}})$ and $q=(e^{{a}}-e^{-{b}})^{-1}e^{{a}}(1-e^{-{b}})$. With a change of measure, we have
\begin{equation}\label{changeofmeasure}
\Prob_{{\gamma}}(N^*<\infty)= E^{\widetilde \ProbQ} \Big[\frac{d\Prob_{N^*}}{d\widetilde \Prob_{N^*}}\frac{d\widetilde \Prob_{N^*}}{d \widetilde \ProbQ_{N^*}} ;N^*<\infty \Big],
\end{equation}
where $\frac{d\Prob_{N^*}}{d\widetilde \Prob_{N^*}}$ and
$
\frac{d\widetilde \Prob_{N^*}}{d \widetilde \ProbQ_{N^*}}
$
denote the likelihood ratios between 
$\Prob_{\gamma}$ and $\widetilde{\Prob}$, and between $\widetilde\Prob$ and $\widetilde \ProbQ$  at the stopping time $N^*$ respectively.
It is easy to check that
$$
\frac{d\widetilde \Prob_{N^*}}{d \widetilde \ProbQ_{N^*}}=\exp{\Big(\sum_{n=1}^{N^*}Y_n\Big)}.
$$
Because $N^*<\infty$ implies $\sum_{n=1}^{N^*} Y_n\geq \rho A$, the probability in \eqref{changeofmeasure} has an upper bound
$$
e^{-A} E^{\widetilde \ProbQ} \Big[\frac{d\Prob_{N^*}}{d\widetilde \Prob_{N^*}};N^*<\infty\Big].
$$
$E^{\widetilde \ProbQ} \Big[\frac{d\Prob_{N^*}}{d\widetilde \Prob_{N^*}};N^*<\infty\Big]$
can be written as the sum
\begin{equation}\label{upperbound}
\underbrace{E^{\widetilde \ProbQ}\Big[ \frac{d\Prob_{N^*}}{d\widetilde \Prob_{N^*}}; N^*\leq\kappa \frac{A}{\bar h}\Big]}_{I_1}+\underbrace{\sum_{k=\kappa+1}^{\infty}E^{\widetilde \ProbQ}\Big[ \frac{d\Prob_{N^*}}{d\widetilde \Prob_{N^*}} ; k\frac{A}{\bar h}\leq N^*\leq(k+1) \frac{A}{{a}}\Big]}_{I_2}.
\end{equation}
It is sufficient to show that $I_1+I_2$
can be bounded by $e^{o(A)}$ as $A\to\infty$ for some constant $\kappa$ that is sufficiently large. We provide upper bounds for $I_1$ and $I_2$ separately. We start with an upper bound for $I_1.$
Notice that under $\Prob_{{\gamma}}$, $Y_n,n=1,2,...$ are i.i.d. random variables and
$$
\Prob_{{\gamma}}(Y_n={a})=\widetilde\alpha_\gamma \mbox{ and } \Prob_{{\gamma}}(Y_n=-{b})=1-\widetilde\alpha_\gamma,
$$
where $\widetilde \alpha_\gamma$ is defined in \eqref{eq:local_error}; then
\begin{equation}\label{ratiobound}
\frac{d\Prob_{n}}{d\widetilde \Prob_{n}}=\lb\frac{\widetilde{\alpha}_\gamma}{p}\rb^{\# \{i: Y_i={a}, \mbox{ and } i\leq n\}}\lb\frac{1-\widetilde{\alpha}_\gamma}{1-p}\rb^{\#\{i: Y_i=-{b}, \mbox{ and } i\leq n \}}\leq e^{o({a})n}.
\end{equation}
The second inequality is due to $\widetilde \alpha_\gamma\leq e^{-(1+o(1)){a}}$ according Proposition~\ref{C_GSPRT_perf}, and $p=e^{-{a}}(1+o(1))$ as ${a},{b}\to\infty$.
Consequently, 
\begin{equation}\label{firstterm}
I_1 \leq e^{\kappa A o({a})/{a}}\leq e^{o(A)}.
\end{equation}
We proceed to an upper bound of $I_2$. According to \eqref{ratiobound}, we have
\begin{equation}\label{secondterm}
I_2\leq \sum_{i=\kappa+1}^{\infty} e^{(k+1)o(A)}\widetilde \ProbQ\Big( \sup_{1\leq n\leq k\frac{A}{{a}}} \sum_{i=1}^{n}Y_i <A \Big)\leq \sum_{k=\kappa+1}^{\infty} e^{(k+1)o(A)}\widetilde \ProbQ\lb \sum_{i=1}^{\lfloor k\frac{A}{{a}}\rfloor}Y_i <A\rb.
\end{equation}
The event $\Big\{\sum_{i=1}^{\lfloor k\frac{A}{{a}}\rfloor}Y_i <A\Big\}$ implies that $\#\{i: Y_i=-{b}, \mbox{ and } i\leq n \}\geq \frac{(k-1)A}{{b}}$. Therefore,
$$
\ProbQ\lb \sum_{i=1}^{\lfloor k\frac{A}{{a}}\rfloor}Y_i <A\rb\leq \widetilde \ProbQ\Big(\#\Big\{i: Y_i=-{b}, \mbox{ and } i\leq \lfloor k\frac{A}{{a}}\rfloor \Big\}\geq \frac{(k-1)A}{{b}}\Big).
$$
Standard result on tail bound for binomial distribution (see, for example, \cite{chernoff1952measure}) yields
\begin{align}\label{binomtail}
&\ProbQ\Big(\#\{i: Y_i=-{b}, \mbox{ and } i\leq \lfloor k\frac{A}{{a}}\rfloor \}\geq \frac{(k-1)A}{{b}}\Big)\nonumber\\&\quad\quad\qquad\qquad\qquad\leq \exp\Big(-2\frac{[(k-1)A/{b}]^2}{\lfloor k\frac{A}{{a}}\rfloor (1-q)}\Big)\leq \exp\Big(-\varepsilon k\frac{A}{{a}} e^{{b}}\Big)\leq e^{-\varepsilon k A},
\end{align}
for some positive constant $\varepsilon$ that is independent of $A$ and ${a}$.
Combining \eqref{secondterm} and \eqref{binomtail}, we have
$$
I_2\leq \sum_{k=\kappa}^{\infty} e^{(k+1)o(A)}e^{-\varepsilon kA}\leq e^{-\varepsilon \kappa A}.
$$
We complete the proof by combining the upper bounds for $I_1$ and $I_2$.
\end{proof}

\bibliographystyle{IEEEtran}
\bibliography{IEEEabrv,references}
\end{document}